\documentclass{article}

\usepackage{microtype}
\usepackage{graphicx}
\usepackage{subcaption}
\usepackage{booktabs} %

\usepackage{hyperref}

\usepackage[accepted]{icml2026}

\usepackage{amsmath}
\usepackage{amssymb}
\usepackage{mathtools}
\usepackage{amsthm}

\usepackage{float}
\usepackage{enumitem}

\usepackage{bbm}
\usepackage[noabbrev,capitalize]{cleveref}

\newcommand{\opt}{^{\star}}
\newcommand{\indep}{\perp \!\!\! \perp}

\theoremstyle{plain}
\newtheorem{theorem}{Theorem}[section]
\newtheorem{lemma}[theorem]{Lemma}

\newtheorem{corollary}[theorem]{Corollary}
\newtheorem{proposition}[theorem]{Proposition}

\theoremstyle{definition}
\newtheorem{definition}[theorem]{Definition}
\newtheorem{remark}[theorem]{Remark}

\newtheorem{condition}{Condition \hspace{-3pt}}

\Crefname{condition}{Condition \hspace{-3pt}}{Conditions}
\crefname{condition}{Condition}{Conditions}
\crefname{theorem}{theorem}{Theorems}
\Crefname{Theorem}{Theorem}{Theorems}
\newtheorem{example}[theorem]{Example}
\crefname{example}{example}{examples}
\Crefname{Example}{Example}{Examples}

\newcommand{\cS}{\mathcal{S}}
\newcommand{\cA}{\mathcal{A}}
\newcommand{\cP}{\mathcal{P}}
\newcommand{\cM}{\mathcal{M}}
\newcommand{\cF}{\mathcal{F}}
\newcommand{\cL}{\mathcal{L}}
\newcommand{\sfM}{{\sf M}}
\newcommand{\E}{\mathbb{E}}
\newcommand{\bP}{\mathbb{P}}
\newcommand{\N}{\mathbb{N}}
\newcommand{\R}{\mathbb{R}}
\newcommand{\tr}{^{\top}}
\newcommand{\Nu}{\mathcal{V}}
\newcommand{\PiS}{\Pi_{\sf S}}
\newcommand{\PiH}{\Pi_{\sf H}}
\newcommand{\supp}{{\sf supp}}
\newcommand{\var}{{\sf VaR}}
\newcommand{\cvar}{{\sf CVaR}}
\newcommand{\erm}{{\sf ERM}}
\newcommand{\evar}{{\sf EVaR}}
\def\essInf{\operatorname{ess \,inf}}
\def\essSup{\operatorname{ess \,sup}}

\newcommand{\tV}{\tilde{V}}
\newcommand{\ts}{\tilde{S}}
\newcommand{\ta}{\tilde{A}}
\newcommand{\rrho}{\mathring{\rho}}

\def\nset{\mathbb{N}}
\newcommand{\abs}[1]{\left\vert #1 \right\vert}
\newcommand{\parenthese}[1]{\left(#1 \right)}

\def\eqsp{\;}
\def\iid{\text{i.i.d.}}
\def\tW{\tilde{W}}
\def\tX{\tilde{X}}
\def\tY{\tilde{Y}}
\def\tW{\tilde{W}}
\def\tU{\tilde{U}}
\def\tS{\tilde{S}}

\def\supp{\mathrm{supp}}
\renewcommand{\geq}{\geqslant}
\renewcommand{\leq}{\leqslant}
\def\tZ{\tilde{Z}}
\def\tP{\tilde{P}}

\icmltitlerunning{Dynamic Programming for Epistemic Uncertainty in Markov Decision Processes}

\begin{document}

\twocolumn[
  \icmltitle{Dynamic Programming for Epistemic Uncertainty in Markov Decision Processes}

  \icmlsetsymbol{equal}{*}

  \begin{icmlauthorlist}
    \icmlauthor{Axel Benyamine}{x,inria}
    \icmlauthor{Julien Grand-Cl{\'e}ment}{hec}
    \icmlauthor{Marek Petrik}{unh}
    \icmlauthor{Michael I. Jordan}{inria,ucb}
    \icmlauthor{Alain Durmus}{x}
  \end{icmlauthorlist}

  \icmlaffiliation{x}{CMAP, CNRS, École polytechnique, Institut Polytechnique de Paris}
  \icmlaffiliation{hec}{ISOM Department, HEC Paris}
  \icmlaffiliation{unh}{University of New Hampshire}
  \icmlaffiliation{inria}{Inria Paris,
Ecole Normale Sup{\'e}rieure, PSL Research University}
  \icmlaffiliation{ucb}{Univeristy of California, Berkeley}

  \icmlcorrespondingauthor{Axel Benyamine}{firstname.lastname@polytechnique.edu}

  \icmlkeywords{Machine Learning, ICML}

  \vskip 0.3in
]

\printAffiliationsAndNotice{}  %

\begin{abstract}
In this paper, we propose a general theory of ambiguity-averse MDPs, which treats the uncertain transition probabilities as random variables and evaluates a policy via a risk measure applied to its random return. This ambiguity-averse MDP framework unifies several models of MDPs with epistemic uncertainty for specific choices of risk measures. 
We extend the concepts of value functions and Bellman operators to our setting. 
Based on these objects, we establish the consequences of dynamic programming principles in this framework (existence of stationary policies, value and policy iteration algorithms), and we completely characterize law-invariant risk measures compatible with dynamic programming. Our work draws connections among several variants of MDP models and fully delineates what is possible under the dynamic programming paradigm and which risk measures require leaving it.
\end{abstract}

\section{Introduction}
Markov Decision Processes (MDPs) are a model of sequential decision-making in which an agent repeatedly interacts with a system to optimize rewards over a (possibly infinite) horizon~\cite{puterman2014markov}. It is well-documented that the performance of a policy, which dictates the choice of actions given the current state of the system, may severely deteriorate when the model parameters (such as the transition probabilities across states) are incorrect~\cite{delage2010percentile}. This {\em epistemic} uncertainty may occur in applications with high stakes, e.g.,  healthcare~\cite{goh2018data,steimle2021multi} and vehicle routing~\cite{miao2017data}.

The issue of model errors has been addressed in several ways in the literature. Robust MDPs (RMDPs) posit that the true parameters belong to an uncertainty set and optimize for the worst-case parameter realization~\cite{iyengar2005robust,nilim2005robust,wiesemann2013robust}; the uncertainty set represents all plausible parameter realizations and can be estimated from a dataset. 
RMDPs are among the most well-studied MDP models with uncertain kernels, and under appropriate rectangularity assumptions, they can be solved efficiently via dynamic programming~\cite{wiesemann2013robust,grand2024tractable}. Several other frameworks for MDPs with uncertain parameters exist, e.g., multi-model MDPs~\cite{steimle2021multi,su2023solving} that optimize over the average of the return across plausible parameters, models based on value at risk (percentile optimization)~\cite{delage2010percentile,petrik2019beyond,behzadian2021optimizing}, conditional value at risk (\cvar)~\cite{lobo2020soft,lin2022bayesian} and the entropic risk measure (\erm)~\cite{russel2020entropic}. 

Overall, several ``uncertain MDP'' models have been introduced independently, each with its own ad hoc analysis. For some models, computing an optimal policy is tractable via dynamic programming (RMDPs), for others, computing an optimal policy is NP-hard (e.g., for multi-model MDPs and percentile optimization), and for others, the answers to these questions are unknown (e.g., models based on \cvar{} and \erm). 
One may wonder whether there is a common perspective unifying these models, and, crucially from an application standpoint, whether one could design new models with favorable tractable properties (beyond RMDPs).

In this work, we propose a unifying model for MDPs with uncertain parameters.
Our {\bf contributions} are as follows.
\begin{itemize}
    \item We propose {\em ambiguity-averse MDPs}, a principled approach to epistemic uncertainty in MDPs, where the uncertain transition probabilities are modeled as random variables, and the return is estimated using a risk measure capturing the agent's risk preferences.
    This model unifies most of the literature on MDPs with parameter uncertainty. We then introduce the generalization of the dynamic programming principle for ambiguity-averse MDPs, which states that ambiguity-averse value functions are fixed points of some Bellman operator.
    \item 
    For a broad class of risk measures satisfying a set of axioms,  we prove that the fundamental properties of the Bellman operators (monotonicity and contraction) are preserved.
    We then establish that dynamic programming implies key structural and computational guarantees, most notably, the existence of stationary optimal policies and the validity of value and policy iteration algorithms.
    \item Finally, we fully characterize the risk measures for which ambiguity-averse MDPs satisfy dynamic programming. Surprisingly, we show that, under a law-invariance and monotonicity assumption, or under a continuity assumption, dynamic programming forces the objective to collapse to only robust, optimistic, or risk-neutral MDPs (we refer to Theorem~\ref{th: DP-compatible risk measures} and Theorem~\ref{th: w1-c0 DP compatible risk measures} for formal statements). 
\end{itemize}
Overall, our work provides a unified perspective on parameter uncertainty in MDPs within the dynamic programming framework. On the one hand, dynamic programming enables several desirable tractability properties, however it also severely limits the choice of risk models (among law-invariant risk measures).
Our work emphasizes that designing further tractable ambiguity-averse MDP instances requires either going beyond dynamic programming and allowing, for instance, state-augmentation~\cite{chow2015risk}, or solutions based on direct gradient descent of the returns~\cite{wang2022policy}, or sticking to dynamic programming but going beyond law-invariance (e.g., focusing on nested risk measures~\cite{ruszczynski2010risk}).

In the rest of the paper, we first provide background on MDPs and risk measures, then introduce the framework of ambiguity-averse MDPs, and finally state our main results on the algorithmic implications of dynamic programming and characterize risk measures compatible with dynamic programming. All the proofs are detailed in the appendices and we provide a literature review in Appendix~\ref{app:reformulations}.
\paragraph{Notations} Random variables are denoted by a capital letter and a tilde, e.g., $\tX$ or $\tS$. We denote by 
$L_{c}^\infty(B)$ the set of bounded random variables with convex support (see for instance Definition~\ref{def: support}) that are almost surely in $B$. The simplex over a set $\cS$ is denoted by $\Delta(\cS)$. For two vectors $v,w \in \R^{\cS}$, the inequality $v \leq w$ is component-wise: $v(s) \leq w(s), \forall \; s \in \cS$.

\section{Preliminaries}\label{sec:preliminaries}
We first introduce (nominal) MDPs and risk measures.
\subsection{Nominal MDPs}
A (discounted) MDP is a tuple $\sfM_P = (\cS, \cA, r, P, \gamma,\mu)$, where $\cS,\cA$ are finite state and action spaces, $r \in \mathbb{R}^{\cS \times \cA \times \cS}$ represents the instantaneous rewards; i.e., $r(s,a,s') \in \R$ is the reward when taking action $a$ in state $s$ and moving to state $s'$, and $P$ is a (fixed) transition probability kernel in $\Delta(\cS)^{\cS \times \cA}$. The vector $\mu \in \Delta(\cS)$ is the initial probability distribution, and the scalar $\gamma \in [0,1)$ is a discount factor. 
A history-dependent policy maps any finite history $(s_0,a_0,s_1,...,s_t)$ up to period $t \in \N$ to a distribution over actions in $\Delta(\cA)$. A stationary policy can be identified with a map $\cS \rightarrow \Delta(\cA)$.
The set of history-dependent (resp. stationary) policies is denoted by $\PiH$ (resp. $\PiS$). The goal of the decision-maker is to find an optimal policy that maximizes the expected discounted return; i.e., to solve
\begin{equation}
\sup_{\pi \in \PiH} \E^{\pi,P}_{\mu}\left(\sum_{t=0}^{\infty} \gamma^t r(\ts_t, \ta_t, \ts_{t+1}) \right) \eqsp,
\end{equation}
where $(\tilde{S}_t,\tilde{A}_t)$ is the (random) state-action pair visited at time $t$ and $\E^{\pi,P}_{\mu}$ is the expectation with the measure induced by $\pi,P,\mu$ over the trajectories $(\tilde{S}_t,\tilde{A}_t)_{t \in \N}$.
 Note that the discounted return is equal to $\mu\tr V^{\pi,P}$ with $V^{\pi,P} \in \R^{\cS}$ the {\em value function} which characterizes the return obtained starting from each state and is defined as follows: 
\begin{equation}\label{eq:value function nominal}
    V^{\pi,P}(s) =  \E^{\pi,P}_s\left(\sum_{t=0}^{\infty} \gamma^t r(\tilde{S}_t, \tilde{A}_t, \tilde{S}_{t+1}) \right)\eqsp,
\end{equation}
for $\pi \in \PiH$ and  $s \in \cS$.
Given some transitions $P$ and a stationary policy $\pi \in \PiS$, the associated Bellman operator is denoted as $T^{\pi,P}:\R^{\cS} \rightarrow \R^{\cS}$ with, for $v \in \R^{\cS},s\in \cS$,
\begin{align}
    &T^{\pi,P}v(s) \label{eq: bellman operator - nominal evaluation}\\
    &\quad = \sum_{a\in \cA} \pi(s,a) \sum_{s'\in \cS}P(s,a,s')\left(r(s,a,s')+\gamma v(s')\right)\eqsp. \nonumber
\end{align}
$T^{\pi,P}$ is a contraction for the $\ell_{\infty}$-norm, and its unique fixed-point is the value function $V^{\pi,P}$~\cite{puterman2014markov}:
\begin{equation}\label{eq:bellman equation nominal evaluation}
    V^{\pi,P} = T^{\pi,P}V^{\pi,P}\eqsp.
\end{equation}
Moreover, the optimal value function $V^{P} \in \R^{\cS}$ defined as $V^{P}(s) = \sup_{\pi \in \PiH} V^{\pi,P}(s)$ for $s \in \cS$, satisfies the following Bellman optimality equations:
\begin{equation}\label{eq:bellman equation nominal}
V^{P}(s) = \max_{\pi \in \PiS} T^{\pi,P}V^{P}(s),\forall \; s \in \cS\eqsp.
\end{equation}
A stationary optimal policy can be computed by choosing a policy that attains the $\max$ in the right-hand side of~\eqref{eq:bellman equation nominal}.
The fact that value functions and optimal value functions satisfy fixed-point equations as in~\eqref{eq:bellman equation nominal evaluation} and~\eqref{eq:bellman equation nominal} is usually referred to as the {\em dynamic programming principle}~\cite{wang2023foundation,grand2024tractable} and plays a central role in the favorable tractability properties of nominal MDPs. In particular, computing an optimal policy can be done efficiently by value iteration, policy iteration, linear programming, and gradient descent \citep[see, e.g.,][]{puterman2014markov}.
One of the main objectives of our work is to understand, in the case where the transition probabilities are uncertain, when a similar dynamic programming principle holds. To this end, we rely on risk measures that we now introduce.
\subsection{Risk measures}\label{sec:risk measure}
A risk measure is a map from random variables to $\R$. %
Examples include the expectation: $\rho(\tX) = \E(\tX)$ and the Value-at-Risk ($\var_{\alpha}$) defined as the upper $(1-\alpha)$-quantile of $\tX$: $\var_{\alpha}(\tX) = \inf \{x \in \R \; | \; \bP(\tX \leq x) > 1 - \alpha\}$ for some $\alpha \in (0,1)$, as well as the {\em essential infimum} ($\essInf$) and the {\em essential supremum} ($\essSup$), defined as
\begin{align*}
    \essInf(\tX) & = \sup \{b \in \R \; | \; \bP(\tX < b) = 0 \} \\
    \essSup(\tX) & = \inf \{b \in \R \; | \; \bP(b< \tX) =0 \}
\end{align*}
which coincide with the infimum and supremum of the support of $\tX$ (when $\tX$ is bounded, see Lemma~\ref{lem:essinf esssup = inf sup}).
Recent work on risk-averse MDPs has also focused on the Conditional Value-at-Risk (\cvar)~\cite{chow2015risk,godbout2025fundamental},
the Entropic Risk Measure (\erm) and the Entropic Value-at-Risk (\evar)~\cite{hau2023entropic}.
The terminology ``Conditional Value-at-Risk'' should not be confused with the conditional risk mappings used in time-consistent dynamic risk models, e.g.~\citep{ruszczynski2010risk}. In this paper, \cvar{} is a static risk measure (also called Average Value-at-Risk, Tail Value-at-Risk, or superquantile) applied to a single random variable.
Classical analysis of risk measures focuses on a small number of important properties. We introduce here the ones that play a role in our framework:
\begin{itemize}[nosep, itemsep= \smallskipamount]%
    \item A risk measure $\rho$ is {\bf monotone} if $\rho(\tX) \geq \rho(\tY)$ for any random variables $\tX,\tY$ such that $\tX \geq \tY$ a.s.
    \item A risk measure $\rho$ is {\bf translation-invariant} if $\rho(\tX+c) = \rho(\tX)+c$, for any $c\in\R$ and random variable $\tX$.
    \item A risk measure $\rho$ is {\bf law-invariant} if it only depends on the law of the random variables, i.e. if $\rho(\tX)=\rho(\tY)$ for any random variables $\tX,\tY$ that have same law.
    We provide an example and a discussion on risk measures that are not law-invariant in Example~\ref{ex: non-law-invariant}.
\end{itemize}
Most standard risk measures, including $\E$, $\essInf$, $\essSup$, $\var$, $\cvar$, $\erm{}$ and $\evar{}$, satisfy these properties. Non-law-invariant risk measures exhibit less interpretable behaviours, see Example~\ref{ex: non-law-invariant} for an example. We refer to Chapter 4 in~\citet{follmer2016stochastic} for an introduction to risk measures and to Appendix~\ref{app:risk measures} for further details.
\section{Ambiguity-averse MDPs}\label{sec:ambiguity averse mdps}
\subsection{Definition, objectives and examples} \label{subsec: proba modeling}
In this section, we introduce the framework of {\em ambiguity-averse MDPs} to model parameter uncertainty in MDPs. 
The main idea is to consider the transition probabilities as a random variable $\tilde{P}$ with distribution $\nu$, so that the return $\mu \tr V^{\pi,\tP}$ is itself a random variable and the decision-maker optimizes for $\pi \mapsto \rho(\mu \tr V^{\pi,\tP})$, where $\rho$ is a risk measure. 
We focus on uncertain transition probabilities and known rewards (this is the most difficult case, already for RMDPs~\cite{grand2024tractable}).
\begin{definition}\label{def: ambiguity-averse MDPs}
A (discounted) \emph{ambiguity-averse MDP}
is a tuple $\sfM_\nu = (\cS, \cA, r, \nu, \gamma,\mu)$ where $\cS,\cA,r,\gamma,\mu$ are defined exactly as in nominal MDPs (see Section~\ref{sec:preliminaries}) and $\nu$ is a probability distribution over the set $\Delta(\cS)^{\cS \times \cA}$ of feasible transition probabilities.
\end{definition}
The distribution $\nu$ models the likelihood of each realization of the transition kernels in $\Delta(\cS)^{\cS \times \cA}$. Uncertainty in transition probabilities is frequent in high-stakes applications where model parametrization is difficult, e.g., in healthcare~\cite{goh2018data,steimle2021multi}. Since $\Delta(\cS)^{\cS \times \cA}$ is bounded for fixed $(\cS,\cA)$, any measure $\nu$ in an ambiguity-averse MDP necessarily has compact support.
The distribution $\nu$ naturally induces a distribution for the sequence $(\tP_t)_{t \in \N}$ of random kernels for each decision period. To fully specify the dynamics induced by an ambiguity-averse MDP instance, we consider two types of uncertainty:
\begin{definition}\label{def: static resampled}
Let $\sfM_\nu$ be an ambiguity-averse MDP. The (random) transition kernels $(\tP_t)_{t\in \N}$ are {\bf static} if $\tP_t = \tP_0$ for any $t\in \N$, with $\tP_0 \sim \nu$.
The transition kernels are \textbf{resampled} if $(\tP_t)_{t\in \N}$ is i.i.d. with distribution $\nu$. 
\end{definition}
We refer to Appendix~\ref{app:aamdp formalism} for a rigorous formalization of these notions.
In principle, one can extend the model of resampled kernels to independent variables, not necessarily identically distributed, i.e., to the case where $\tP_t \sim \nu_t$ for each $t \in \N$ and some distributions $(\nu_t)_{t \in \N}$.
We consider only the i.i.d.\ setting here to simplify the exposition of our main results, and because this model reduces to i.i.d. kernels for distributionally robust MDPs; see Appendix~\ref{app: dp resampled}.
We note that value functions $V^{\pi,\tP}$ naturally extend to the case where $\tP$ represents a sequence of kernels (we slightly overload notation here; see Appendix~\ref{app:aamdp formalism}).
We note that considering static or resampled kernels leads to very different computational implications. For instance, for multi-model MDPs, it is known that static kernels lead to NP-hard problems~\cite{steimle2021multi} while with resampled kernels, multi-model MDPs can be solved efficiently by dynamic programming. 

We use the following notion of product structure.
\begin{definition}
    A distribution $\nu$ on $\Delta(\cS)^{\cS \times \cA}$ is said to have a {\em product structure} if for any $\tP\sim \nu$, the family of random variables $\{\tP(s,\cdot,\cdot)\in\Delta(\cS)^\cA\, :\, s\in\cS\}$ is independent.
\end{definition}
We say that $\sfM_{\nu}$ has a product structure when $\nu$ does. In this case, the support of $\nu$, which is a subset of the set of kernels $\Delta(\cS)^{\cS \times \cA}$, has the classical {\em s-rectangularity property} widely studied for robust MDPs~\cite{wiesemann2013robust} and necessary for the existence of Bellman equations in the context of robust MDPs~\cite{grand2024tractable}, see Appendix~\ref{app:reformulations} for details.

We will use the following simple example to illustrate the model of ambiguity-averse MDPs.
\begin{example}\label{ex: E, static main body}
\cref{fig: MDP for example main body} provides a simple example of an ambiguity-averse MDP.
    \begin{figure}[!h]
        \centering
        \includegraphics[width=0.6\linewidth]{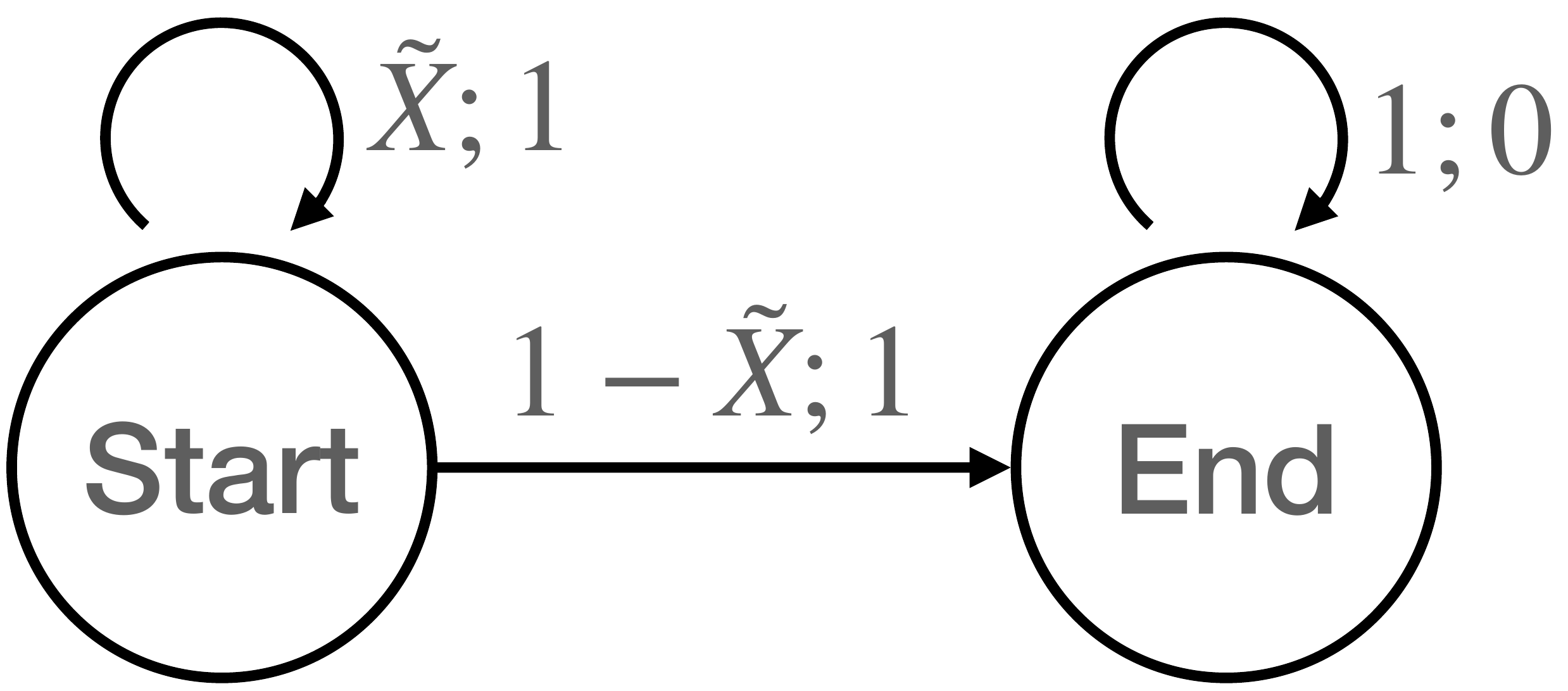}
        \caption{Ambiguity-averse MDP with two states and one action. The edges $(s,a,s')$ are labeled with pairs $(P(s,a,s'),r(s,a,s'))$.} \label{fig: MDP for example main body}
        \end{figure}
        There is only a single action and the state {\sf End} is absorbing. The transitions from state {\sf Start} are modeled with a random variable $\tX$ with a uniform distribution on $[0,1]$. Here, $\nu$ is the distribution of the random matrix $\begin{pmatrix}
        \tX & 1-\tX\\
        0 & 1
    \end{pmatrix}$. In the case of resampled kernels, $\tX$ is resampled i.i.d. at each period $t \in \N$. In the case of static kernels, there is a single sample of $\tX$ at time $t=0$, so that the kernel $\tP_t$ at time $t\in\N$ is
$%
\tP_t =
    \begin{pmatrix}
        \tX & 1-\tX\\
        0 & 1
    \end{pmatrix}.
    $%
\end{example}
Given an ambiguity-averse MDP $\sfM_{\nu}$, and a risk measure $\rho$, the objective of the decision maker is to solve:
\begin{equation}\label{eq:AMDP}
    \sup_{\pi \in \PiH} \rho\left(\E^{\pi,\tilde{P}}_{\mu}\Big(\sum_{t=0}^{\infty} \gamma^t r(\ts_t, \ta_t, \ts_{t+1}) \Big)\right)\eqsp,
\end{equation}
where the inner expectation $\E^{\pi,\tilde{P}}_{\mu}$ is over random trajectories $(\ts_t,\ta_t)_{t\in \N}$ and the risk measure $\rho$ is over the random kernel sequence $\tP = (\tP_t)_{t \in \N}$. Note that the objective function in~\eqref{eq:AMDP} can be formulated as $\pi \mapsto \rho\left(\mu\tr V^{\pi,\tP}\right)$, with nominal value functions defined in~\eqref{eq:value function nominal}.
Similarly to the case of nominal MDPs, we define the \textbf{ambiguity-averse value function} $V^{\pi,\nu,\rho} \in \R^{\cS}$ under policy $\pi\in \PiH$ as:
\[
    V^{\pi,\nu,\rho}(s) = \rho\left(V^{\pi,\tP}(s)\right), \quad \forall s \in \cS\eqsp,
\]
where we recall that the nominal value function is defined in Equation~\eqref{eq:value function nominal}.
We recall that the randomness (for the risk measure $\rho$) is only in $\tP$.
The \textbf{optimal ambiguity-averse value function} $V^{\nu,\rho} \in \R^{\cS}$ is
\[
  V^{\nu,\rho}(s) = \sup_{\pi \in \PiH} V^{\pi,\nu,\rho}(s), \quad \forall s \in \cS\eqsp.
\]
\paragraph{Examples and related work}
We briefly overview some important examples of ambiguity-averse MDPs here; see Appendix~\ref{app:reformulations} for a rigorous presentation. 
\begin{table*}
\centering
\small
\renewcommand{\arraystretch}{1.15}
\caption{Summary and properties of the main variants of MDP models studied in this paper. These models are reformulated as ambiguity-averse MDPs with a risk measure $\rho(\tX)$. For nominal MDPs, we note ``$\delta_{P}$" for a Dirac measure supported on the set $\{P\}$. For robust MDPs and optimistic MDPs, we assume a rectangular convex compact uncertainty set. The column ``DP'' indicates whether \cref{cond: PE} and \Cref{cond: PO} hold.  The column ``$\pi\opt \in \PiS$'' indicates that an optimal policy can be chosen stationary. The column ``Trac.'' indicates whether the problem is tractable: for this column, ``$\times$" refers to NP-hardness, ``$\checkmark$'' refers to problems that can be solved in polynomial time (when the discount factor is fixed), and ``$?$" refers to unknown status.}
\label{tab:MDP variants}
\begin{tabular}{l l l c c c c c c}
\toprule
Name & Objective & Risk measure $\rho(\tX)$ &
\multicolumn{3}{c}{Static Kernels} & \multicolumn{3}{c}{Resampled Kernels} \\
\cmidrule(lr){4-6}\cmidrule(lr){7-9}
&&& DP & $\pi\opt\in\PiS$ & Trac. & DP & $\pi\opt\in\PiS$ & Trac. \\
\midrule
Nominal MDP &
$\mu^\top V^{\pi,P}$ & 
$\mathbb E^{\nu}(\tilde X)$ with $\nu=\delta_{P}$ &
\checkmark & \checkmark & \checkmark &
\checkmark & \checkmark & \checkmark \\

Robust MDP &
$\inf_{P\in\mathcal P}\mu^\top V^{\pi,P}$ &
$\essInf(\tilde X)$ &
\checkmark & \checkmark & \checkmark &
\checkmark & \checkmark & \checkmark \\

Optimistic MDP &
$\sup_{P\in\mathcal P}\mu^\top V^{\pi,P}$ &
$\essSup(\tilde X)$ &
\checkmark & \checkmark & \checkmark &
\checkmark & \checkmark & \checkmark \\
Multi-model MDP &
$\E^{\nu}(\mu \tr V^{\pi,\tP})$ &
$\E^{\nu}(\tX)$ &
$\times$ & $?$ & $\times$ &
\checkmark & \checkmark & \checkmark \\

Percentile optimization &
$\var_{\alpha}(\mu \tr V^{\pi,\tP})$ &
$\var_{\alpha}(\tX)$ &
$\times$ & $\times$ & $\times$ &
$\times$ & $\times$ & $\times$ \\
\bottomrule
\end{tabular}
\end{table*}
\begin{itemize}[wide, labelindent=10pt, leftmargin=8pt]
    \item {\bf Nominal MDPs} with (fixed) kernel $P$ correspond to static kernels with $\nu$ a Dirac distribution supported on $P$, and $\rho(\tX) = \E^{\nu}(\tX)$, so that we recover the classical nominal MDP objective: $\rho(\mu\tr V^{\pi,\tilde{P}}) = \mu \tr V^{\pi,P}$.
    \item {\bf Robust MDPs} with compact convex uncertainty set $\cP$ correspond to static kernels, with $\nu$ supported on $\cP$, and $\rho(\tX) = \essInf(\tX)$, so that we recover the classical RMDP objective~\cite{wiesemann2013robust}: $\rho(\mu\tr V^{\pi,\tilde{P}}) = \inf_{P \in \cP} \mu \tr V^{\pi,P}$. 
    The case of robust MDPs with Markovian kernels can be modeled with resampled kernels, see Appendix~\ref{app:reformulations}.
    We also cover {\bf optimistic MDPs} for $\rho(\tX) = \essSup(\tX)$, see Lemma~\ref{lem:essinf esssup = inf sup} for the connection between $\essInf, \essSup$, $\inf$, and $\sup$. 
    \item {\bf Multi-model MDPs} (MMDP) with a distribution $\nu$ over the plausible realization of the transition kernels correspond to $\rho(\tX) = \E^{\nu}\left(\tX\right)$. Most of the literature focus on the case of static kernels, for which we recover the classical MMDP objective~\cite{steimle2021multi}: $\rho(\mu\tr V^{\pi,\tilde{P}}) = \E^{\nu}( \mu \tr V^{\pi,\tP})$.  Multiple-environment MDPs (MEMDPs) with {\em prior semantics} can be represented in the same way, while the {\em adversarial semantics} of MEMDPs are closer to static robust MDPs with a possibly non-convex set of environments~\cite{raskin2014multiple,chatterjee2020multiple,bordais2026multi}, see Appendix~\ref{app:reformulations} for more detail.
    \item {\bf Percentile Optimization} with distribution $\nu$ and risk level $\alpha \in (0,1)$ corresponds to static kernels, with $\rho(\tX) = \var_{\alpha}(\tX)$,
    so that we recover the popular percentile optimization objective~\cite{delage2010percentile,cousins2023percentile}: $\rho(\mu\tr V^{\pi,\tP}) = \var_{\alpha}(\mu\tr V^{\pi,\tP})$. Similarly, we can recover related frameworks that use risk measures other than \var{}, referred to as soft-robustness~\cite{lobo2020soft} or Bayes risk-aversion~\cite{lin2022bayesian}, by choosing the appropriate risk measures.
\end{itemize}
We refer to Table~\ref{tab:MDP variants} for a summary of these reformulations.
Given this unification, a natural question that arises is whether we can compute an optimal ambiguity-averse policy in general;  i.e., whether we can solve~\eqref{eq:AMDP} efficiently. Given the central role of Bellman operators and Bellman equations in solving nominal and robust MDPs, we next extend these notions to ambiguity-averse MDPs.
\begin{remark}
{\bf Distributionally robust MDPs} (DRMDPs) require an {\em ambiguity set} of plausible distributions and not just a single distribution $\nu$ as in Definition~\ref{def: ambiguity-averse MDPs}. To model DRMDPs as ambiguity-averse MDPs, one would need the formalism of {\em ambiguous probability space}~\cite{delage2019dice}, requiring considerably more notation and formalization. This extension is left as an interesting next step.
\end{remark}
\begin{remark}
    We compare here the ambiguity-averse MDP framework to {\em risk-averse} MDPs~\citep[see, e.g.,][]{ruszczynski2010risk,hau2023dynamic}, which considers a single, known transition kernel $P$ and hedges against the variability of the return over the random trajectories (sometimes called {\em aleatoric risk}) with a risk measure by solving \[\sup_{\pi \in \PiH} \rho\left( \sum_{t=0}^{+\infty} \gamma^t r(\ts_t,\ta_t,\ts_{t+1}) \right)\eqsp.\] In contrast, our model~\eqref{eq:AMDP} considers the expected return over the trajectories, and hedges against the uncertainty in parameters (sometimes called {\em epistemic risk}) with a risk measure; see Appendix~\ref{app:lit review} for more comparisons.
\end{remark}
\subsection{Dynamic programming} \label{subsec: ambiguous DP}
For a risk measure $\rho$, we define the \textbf{ambiguity-averse Bellman operator} $T^{\pi,\nu, \rho}: \mathbb{R}^{\cS} \to \mathbb{R}^{\cS}$ associated with a stationary policy $\pi \in \PiS$ and a random transition kernel following law $\nu$ with identical marginals over time (static or resampled) as, for $v \in \R^{\cS}$,
\begin{equation}\label{eq:bellman operator pi nu rho}
     T^{\pi,\nu, \rho}v(s) =  \rho\left( T^{\pi,\tP}v(s) \right), \quad \forall s \in \cS,
\end{equation}
where $\tP \sim \nu$ on the right-hand side of~\eqref{eq:bellman operator pi nu rho} and $T^{\pi,\tP}$ is the standard Bellman operator for nominal MDPs as defined in~\eqref{eq: bellman operator - nominal evaluation}. 
We then define the \textbf{optimal ambiguity-averse Bellman operator} $T^{\nu, \rho}: \mathbb{R}^{\cS} \to \mathbb{R}^{\cS}$ as, for $v \in \R^{\cS}$,
\begin{equation}\label{eq:bellman operator nu rho}
    T^{\nu, \rho} v(s) = \sup_{\pi\in\PiS} T^{\pi,\nu, \rho}v(s), \quad \forall s \in \cS\eqsp.
\end{equation}
The operators $T^{\pi,\nu, \rho}, T^{\nu, \rho}$ play the same role as the Bellman operators $T^{\pi,P}$ and $T^{P}$ (introduced in Section~\ref{sec:preliminaries}) in the theory of nominal MDPs and inherit their properties under some conditions on the risk measure $\rho$, as we show next.
\begin{proposition}\label{prop:properties of Bellman operators}
Let $\rho$ be a monotone, translation-invariant risk measure. Then we have the following properties for the operator $T \in \{ T^{\pi,\nu,\rho},T^{\nu,\rho}\}$:
\begin{enumerate}[nosep, itemsep=9pt]%
    \item Monotonicity: for any $v,w \in \R^{\cS}$ such that $v \leq w$ then,  $Tv \leq Tw$.%
    \item Translation-invariance: for any $v \in \R^{\cS}$ and $c \in \R$, denoting $1_{\cS} =(1,...,1) \in \R^{\cS}$, we have 
    
     $T(v+c\cdot 1_{\cS}) = Tv +\gamma c\cdot 1_{\cS}$.%
    \item Contraction: $T$ is a contraction for the $\ell_{\infty}$-norm.
\end{enumerate}
Finally, we have the following {\em attainability property}: for any $v \in \R^{\cS}$, there exists $\pi \in \PiS$ such that $T^{\nu,\rho}v = T^{\pi,\nu,\rho}v$.
\end{proposition}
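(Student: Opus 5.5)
The plan is to reduce every property to the corresponding pointwise property of the nominal operators $T^{\pi,P}$, then push it through $\rho$ using monotonicity and translation-invariance. The supremum over $\PiS$ is handled last.

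\textbf{Policy evaluation operator.} Fix $\pi\in\PiS$ and $s\in\cS$. For every realization $P$ of $\tP$, the map $v\mapsto T^{\pi,P}v(s)$ is affine with nonnegative weights $\pi(s,a)P(s,a,s')$ summing to one. Hence $v\le w$ implies $T^{\pi,P}v(s)\le T^{\pi,P}w(s)$ for every $P$, so the inequality holds surely in $\tP$. Monotonicity of $\rho$ then gives $T^{\pi,\nu,\rho}v(s)\le T^{\pi,\nu,\rho}w(s)$.

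Likewise, $T^{\pi,P}(v+c1_\cS)(s)=T^{\pi,P}v(s)+\gamma c$ for every $P$, so translation-invariance of $\rho$ yields item 2.

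For contraction, I use the standard Blackwell-type argument. For $v,w$, set $c=\|v-w\|_\infty$, so that $w-c1_\cS\le v\le w+c1_\cS$. Items 1 and 2 give $Tw-\gamma c1_\cS\le Tv\le Tw+\gamma c1_\cS$, that is, $\|Tv-Tw\|_\infty\le\gamma\|v-w\|_\infty$.

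\textbf{Optimal operator.} Monotonicity and translation-invariance pass to the supremum, since they hold uniformly in $\pi$. In particular $\sup_\pi(x_\pi+\gamma c)=\sup_\pi x_\pi+\gamma c$. The same two-sided sandwich then gives the contraction for $T^{\nu,\rho}$.

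Before this I must check that $T^{\nu,\rho}v(s)$ is finite. The quantity $T^{\pi,\tP}v(s)$ lies in the bounded interval $[\min r+\gamma\min v,\ \max r+\gamma\max v]$ surely. Monotonicity and translation-invariance of $\rho$ give $\rho(\tX)\in[\rho(0)+a,\rho(0)+b]$ whenever $a\le\tX\le b$. I will also need to note that $\rho(0)$ is finite, which holds since $\rho$ is real-valued. Alternatively one may normalize $\rho(0)=0$, but I will not assume this.

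\textbf{Attainability (the main obstacle).} The value $T^{\pi,\nu,\rho}v(s)$ depends only on the row $\pi(s,\cdot)\in\Delta(\cA)$, because $T^{\pi,P}v(s)$ involves only $\pi(s,\cdot)$. The supremum in~\eqref{eq:bellman operator nu rho} is therefore a separate supremum for each $s$ of $f_s(q)=\rho\big(\sum_a q_a \tX_a\big)$ over $q\in\Delta(\cA)$, where $\tX_a=\sum_{s'}\tP(s,a,s')(r(s,a,s')+\gamma v(s'))$. Once each $f_s$ attains its maximum at some $q_s^\star$, the policy $\pi(s,\cdot)=q_s^\star$ attains it simultaneously for all $s$.

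The difficulty is showing that $f_s$ attains its maximum on the compact simplex. I would prove that $f_s$ is continuous, in fact Lipschitz, using the bound $|\rho(\tX)-\rho(\tY)|\le\essSup|\tX-\tY|$. This bound follows from monotonicity and translation-invariance via $\tY-\|\tX-\tY\|_\infty\le\tX\le\tY+\|\tX-\tY\|_\infty$. The random variables $\tX_a$ are uniformly bounded by $M=\max|r|+\gamma\|v\|_\infty$, so
\[
\Big|\sum_a q_a\tX_a-\sum_a q'_a\tX_a\Big|\le M\|q-q'\|_1
\]
surely. Hence $f_s$ is $M$-Lipschitz in $\ell_1$, and Weierstrass gives the maximizer.

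In the resampled case, I would also check that the time-$0$ marginal is what enters~\eqref{eq:bellman operator pi nu rho}. This holds by definition, since the operator uses a single $\tP\sim\nu$.
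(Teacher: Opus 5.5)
Your proposal is correct and follows the paper's proof nearly step by step. Monotonicity and translation-invariance are pushed pointwise through $\rho$ and then through the supremum. Attainability comes, state by state, from the Lipschitz bound given by non-expansiveness of $\rho$, followed by Weierstrass on $\Delta(\cA)$. The only (minor) deviation is in the contraction step: you obtain it from the Blackwell-type sandwich built from items 1--2 at the operator level, whereas the paper composes the bound $|\rho(\tX)-\rho(\tY)|\le \essSup|\tX-\tY|$ with the contraction of $T^{\pi,P}$ and then takes a supremum over $\pi$. Your sandwich covers $T^{\nu,\rho}$ a bit more directly, and your check that the supremum is finite is a detail the paper leaves implicit.
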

We now introduce the {\bf dynamic programming principles} for ambiguity-averse MDPs, which state that value functions are fixed points of Bellman operators. 
We write $\cM_{c}$ for the class of all ``convex, product" ambiguity-averse MDPs, i.e. the set of all $\sfM_\nu = (\cS, \cA, r, \nu, \gamma, \mu)$ with any $\cS,\cA,r,\gamma,\mu$, and any $\nu$ with convex support and product structure.
\begin{condition} 
    \label{cond: PE}
   (Bellman Equations for Policy Evaluation).
   We say that a risk measure $\rho$ satisfies this condition, if $V^{\pi,\nu,\rho}$ is a fixed point of $T^{\pi,\nu,\rho}$ for any instance $\sfM_\nu \in \cM_{c}$ and stationary policy $\pi \in \PiS$,
   i.e.,
   \[ V^{\pi,\nu,\rho} = T^{\pi,\nu,\rho}V^{\pi,\nu,\rho}, \quad \forall \sfM_\nu \in\cM_{c},\forall \pi \in \PiS \eqsp.\]
   This condition can be satisfied either for all $\sfM_\nu$ with {\em static} kernels, or for all $\sfM_\nu$ with {\em resampled} kernels.
\end{condition}

\begin{condition} 
    \label{cond: PO}
   (Bellman Optimality Equations). 
   We say that a risk measure $\rho$ satisfies this condition, if $V^{\nu,\rho}$ is a fixed point of $T^{\nu,\rho}$ for any instance $\sfM_\nu \in \cM_{c}$,
   i.e.,
   \[\qquad V^{\nu,\rho} = T^{\nu,\rho}V^{\nu,\rho}, \quad \forall \sfM_\nu \in\cM_{c}\eqsp.\]
   This condition can be satisfied either for all $\sfM_\nu$ with {\em static} kernels, or for all $\sfM_\nu$ with {\em resampled} kernels.
  \end{condition}
These conditions are related to the dynamic programming principles introduced recently for robust MDPs~\cite{wang2023foundation,grand2024tractable} and to dynamic consistency of risk measures~\cite{kupper2009representation,pflug2016time}.
In particular, \cref{cond: PE,cond: PO} are natural extensions of the Bellman equation for evaluating the value function of a policy, given in~\eqref{eq:bellman equation nominal evaluation} and Bellman optimality equation~\eqref{eq:bellman equation nominal} for nominal MDPs. \Cref{cond: PE,cond: PO} are also known to hold for rectangular Robust MDPs~\citep[see, e.g., Eq. (11) and Eq. (25) in][]{wiesemann2013robust}.
We emphasize that it is not obvious that these conditions hold, e.g., it is known that they fail for multi-model MDPs with static kernels~\cite{steimle2021multi} (we provide a simple example in Example~\ref{ex: E, static main body - continued}).
Note that \cref{cond: PE,cond: PO} do not imply that the operators have a {\em unique} fixed point, only that the value functions are in the sets of fixed points.
Note that for robust MDPs to satisfy \cref{cond: PE,cond: PO}, the product structure is a necessary condition (as recently shown in~\citet{grand2024tractable}), and without the convexity requirement on the support of $\nu$, even s-rectangular robust MDPs may fail to satisfy \cref{cond: PE,cond: PO} (see Table 3 and Table 4 in~\citet{wang2023foundation}). Adding the convex support and the product structure in the conditions only makes them more general than, for instance, requiring that~\cref{cond: PE,cond: PO} hold for all general ambiguity-averse MDPs (with no assumption on $\nu$).

We refer to Table~\ref{tab:MDP variants} for a summary of the dynamic programming properties of several MDP variants. We also illustrate \cref{cond: PE,cond: PO} in the next simple example.
\begin{example}[Continued from Example~\ref{ex: E, static main body}]\label{ex: E, static main body - continued}%
We show that \cref{cond: PE,cond: PO} may fail to hold for the MDP instance from \cref{fig: MDP for example main body} when $\rho = \E^{\nu}$ for static kernels. To do so, we compute $V^{\pi,\nu,\E^{\nu}}$ and show that it differs from $T^{\pi,\nu, \E^{\nu}}V^{\pi,\nu,\E^{\nu}}$. For $\gamma=1/2$ we have
\begin{align*}
    V^{\pi,\nu,\E^{\nu}} & =
        \begin{pmatrix}
            \E\left(\sum_{t\geq 0} (1/2)^t \tX^t\right) \\
            \E(0)
        \end{pmatrix} =
        \begin{pmatrix}
            2\log(2) \\
            0
        \end{pmatrix}\eqsp \\
        T^{\pi,\nu, \E^{\nu}}V^{\pi,\nu,\E^{\nu}} 
        &=
        \begin{pmatrix}
        1+ \log(2)/{2}\\
        0
        \end{pmatrix} \neq V^{\pi,\nu,\E^{\nu}} \eqsp.
\end{align*}
    Therefore, $\E$ violates \cref{cond: PE} for static kernels. It also violates \cref{cond: PO} for static kernels because the MDP has only one policy; see ~Appendix~\ref{app:counterexample} for more detail.
\end{example}

\section{Main Results}
This section focuses on ambiguity-averse MDPs under the lens of dynamic programming. We first
analyze the properties of optimal policies and derive algorithms to compute them. We then provide a complete characterization of law-invariant risk measures that satisfy \cref{cond: PE,cond: PO}.
\subsection{Computing optimal policies}\label{sec:computing opt policies}
Dynamic programming has several useful practical and computational consequences, as it reduces policy evaluation and policy optimization to fixed-point computations. In fact, when \Cref{cond: PE,cond: PO} both hold, an optimal policy can be chosen to be stationary and can be recovered from the optimal value function, as we show next.
\begin{theorem}\label{th: stationary optimal policy}
    Assume that $\rho$ is monotone and translation-invariant, and that \cref{cond: PE,cond: PO} hold for static or resampled kernels. Let $\sfM_{\nu} \in \cM_{c}$, with static or resampled kernels. Then there exists a {\em stationary} policy $\pi\opt \in \PiS$ that is optimal for starting from any state:
    \[\exists \; \pi\opt \in \PiS, \forall \; s \in \cS, \sup_{\pi \in \PiH} \rho\left( V^{\pi,\tilde{P}}(s) \right) = \rho\left( V^{\pi\opt,\tilde{P}}(s) \right).\]
    Moreover, $T^{\nu,\rho}$ is a contraction and any policy $\pi \in \PiS$ such that $T^{\nu,\rho}V^{\nu,\rho} = T^{\pi,\nu,\rho}V^{\nu,\rho}$ is optimal.
\end{theorem}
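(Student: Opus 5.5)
The plan follows the classical nominal-MDP argument, with Proposition~\ref{prop:properties of Bellman operators} supplying the operator properties and \cref{cond: PE,cond: PO} supplying the fixed-point identities.

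\emph{Step 1: contraction and unique fixed point.} Since $\rho$ is monotone and translation-invariant, Proposition~\ref{prop:properties of Bellman operators} shows that $T^{\nu,\rho}$ and every $T^{\pi,\nu,\rho}$ with $\pi\in\PiS$ are $\gamma$-contractions for $\|\cdot\|_\infty$. The contraction claim of the theorem follows directly. By Banach's fixed-point theorem on $\R^{\cS}$, each of these operators has a unique fixed point. \Cref{cond: PO} states that $V^{\nu,\rho}$ is a fixed point of $T^{\nu,\rho}$, so $V^{\nu,\rho}$ is the unique fixed point of $T^{\nu,\rho}$. Similarly, \cref{cond: PE} shows that $V^{\pi,\nu,\rho}$ is the unique fixed point of $T^{\pi,\nu,\rho}$ for every $\pi\in\PiS$.

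\emph{Step 2: greedy policies are optimal.} Let $\pi\in\PiS$ satisfy $T^{\pi,\nu,\rho}V^{\nu,\rho}=T^{\nu,\rho}V^{\nu,\rho}$. Such a policy exists by the attainability property of Proposition~\ref{prop:properties of Bellman operators}. Then
\[
T^{\pi,\nu,\rho}V^{\nu,\rho}=T^{\nu,\rho}V^{\nu,\rho}=V^{\nu,\rho},
\]
so $V^{\nu,\rho}$ is a fixed point of $T^{\pi,\nu,\rho}$. By the uniqueness established in Step 1, $V^{\pi,\nu,\rho}=V^{\nu,\rho}$. Unfolding the definitions, this means
\[
\rho\bigl(V^{\pi,\tP}(s)\bigr)=\sup_{\pi'\in\PiH}\rho\bigl(V^{\pi',\tP}(s)\bigr)\quad\text{for every } s\in\cS.
\]
Hence $\pi\opt:=\pi$ is a stationary policy that is optimal simultaneously from every starting state, which proves both claims of the theorem.

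\emph{Main obstacle.} The only delicate point is making sure the fixed-point identities are applied correctly. \Cref{cond: PO} concerns the supremum over history-dependent policies, so it carries all the comparison with $\PiH$. No separate argument that history dependence cannot help is needed: it is encoded in the assumption that $V^{\nu,\rho}$, defined through $\PiH$, is a fixed point of an operator built only from stationary policies. The argument also requires that both conditions are invoked for the same kernel model (static or resampled) as the instance $\sfM_\nu\in\cM_c$ at hand, and that $V^{\nu,\rho}$ is finite. Finiteness holds because rewards are bounded: every value $V^{\pi,P}(s)$ lies in $[\min r/(1-\gamma),\max r/(1-\gamma)]$, and monotonicity together with translation invariance then bounds the $\rho$-values. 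With this checked, the remainder is the standard Banach argument.
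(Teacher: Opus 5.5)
Your proof is correct and follows the paper's argument essentially step for step. You pick a greedy stationary policy via the attainability property of Proposition~\ref{prop:properties of Bellman operators}, observe that $V^{\nu,\rho}$ is then a fixed point of the contraction $T^{\pi\opt,\nu,\rho}$, and identify it with $V^{\pi\opt,\nu,\rho}$ using uniqueness and \cref{cond: PE}. The only difference is that the paper closes with a sandwich $\sup_{\PiH} = V^{\nu,\rho} \leq \max_{\PiS} \leq \sup_{\PiH}$, which, as you note, already follows from the identity $V^{\pi\opt,\nu,\rho}=V^{\nu,\rho}$.
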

Theorem~\ref{th: stationary optimal policy} generalizes existing results in the RMDP literature (e.g., Theorem 4 in~\citet{wiesemann2013robust}).
Note that when dynamic programming holds for {\em both} static {\em and} resampled kernels, the stationary optimal policy defined in Theorem~\ref{th: stationary optimal policy} is optimal for both models of kernels.
Under the assumptions of Theorem~\ref{th: stationary optimal policy}, an optimal policy can be computed efficiently; e.g., by value iteration and policy iteration, which both converge at a linear rate as we show next.
\begin{proposition}\label{prop:algorithms}
    Assume that $\rho$ is monotone and translation-invariant, and that \cref{cond: PE,cond: PO} hold for static or resampled kernels. Let $\sfM_{\nu} \in \cM_{c}$, with static or resampled kernels.
\begin{enumerate}[nosep, itemsep = \smallskipamount]%
        \item (Value iteration) Let $(\pi^n)_{n \geq 0}$ be the sequence of policies generated by Algorithm~\ref{alg:value-iteration}. Then for all $n \in \N$,
        \[\|V^{\pi^n,\nu,\rho} - V^{\nu,\rho}\|_{\infty} \leq \frac{2\gamma^n}{1-\gamma} \|V^{\pi^0,\nu,\rho} - V^{\nu,\rho}\|_{\infty}\eqsp.\]
        \item (Policy iteration) Let $(\pi^n)_{n \geq 0}$ be the sequence of policies generated by Algorithm~\ref{alg:policy-iteration}. Then for all $n \in \N$, $V^{\pi^n,\nu,\rho} \leq V^{\pi^{n+1},\nu,\rho}$ and 
        \[\|V^{\pi^n,\nu,\rho} - V^{\nu,\rho}\|_{\infty} \leq \gamma^n \|V^{\pi^0,\nu,\rho} - V^{\nu,\rho}\|_{\infty}\eqsp.\]
    \end{enumerate}
\end{proposition}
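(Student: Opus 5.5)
The plan is to reduce both statements to the standard contraction-mapping arguments for nominal MDPs. Throughout I use three earlier results. From \cref{prop:properties of Bellman operators}: $T^{\pi,\nu,\rho}$ and $T^{\nu,\rho}$ are monotone $\gamma$-contractions in $\ell_\infty$, and greedy policies exist (attainability). From \cref{cond: PE}: $V^{\pi,\nu,\rho}=T^{\pi,\nu,\rho}V^{\pi,\nu,\rho}$, and by uniqueness of the fixed point of a contraction, $V^{\pi,\nu,\rho}$ is \emph{the} fixed point of $T^{\pi,\nu,\rho}$. From \cref{cond: PO}: $V^{\nu,\rho}=T^{\nu,\rho}V^{\nu,\rho}$. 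Write $T=T^{\nu,\rho}$, $T^\pi=T^{\pi,\nu,\rho}$, $V^\pi=V^{\pi,\nu,\rho}$, $V^\star=V^{\nu,\rho}$. Since I have not seen the algorithm boxes, I assume the natural forms:
\begin{itemize}
\item value iteration starts from $v^0=V^{\pi^0}$, sets $v^{n+1}=Tv^n$, and takes $\pi^n$ greedy with respect to an iterate, i.e.\ $T^{\pi^n}v=Tv$ for that iterate $v$;
\item policy iteration takes $\pi^{n+1}$ greedy with respect to $V^{\pi^n}$.
\end{itemize}

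\textbf{Value iteration.} Since $V^\star$ is a fixed point of the $\gamma$-contraction $T$, induction gives $\|v^k-V^\star\|_\infty\le\gamma^k\|v^0-V^\star\|_\infty=\gamma^k\|V^{\pi^0}-V^\star\|_\infty$. Next I prove a greedy-policy lemma: if $T^\pi v=Tv$, then $\|V^\pi-V^\star\|_\infty\le\frac{2\gamma}{1-\gamma}\|v-V^\star\|_\infty$. Using \cref{cond: PE,cond: PO} and the greedy identity $T^\pi v=Tv$, decompose
\[
V^\pi-V^\star=(T^\pi V^\pi-T^\pi v)+(Tv-TV^\star).
\]
Taking norms and using contraction of each operator gives
\[
\|V^\pi-V^\star\|_\infty\le\gamma\|V^\pi-v\|_\infty+\gamma\|v-V^\star\|_\infty\le\gamma\|V^\pi-V^\star\|_\infty+2\gamma\|v-V^\star\|_\infty,
\]
and rearranging proves the lemma. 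Combining the lemma with the geometric bound on the iterates gives a bound of the form $\frac{2\gamma^{k+1}}{1-\gamma}\|V^{\pi^0}-V^\star\|_\infty$, which is at most the claimed $\frac{2\gamma^n}{1-\gamma}\|V^{\pi^0}-V^\star\|_\infty$ under whichever indexing convention the algorithm uses ($k=n$ or $k=n-1$).

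\textbf{Policy iteration.} First, monotone improvement. Greediness and \cref{cond: PE} give
\[
T^{\pi^{n+1}}V^{\pi^n}=TV^{\pi^n}\ge T^{\pi^n}V^{\pi^n}=V^{\pi^n}.
\]
Applying the monotone operator $T^{\pi^{n+1}}$ repeatedly shows that $(T^{\pi^{n+1}})^kV^{\pi^n}$ is nondecreasing in $k$ and bounded below by $TV^{\pi^n}$. By contraction it converges to the unique fixed point of $T^{\pi^{n+1}}$, which is $V^{\pi^{n+1}}$. Hence
\[
V^{\pi^{n+1}}\ge TV^{\pi^n}\ge V^{\pi^n}.
\]
Second, the rate. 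By definition of $V^\star$ as a supremum over $\PiH\supseteq\PiS$, we have $V^{\pi^{n+1}}\le V^\star$. Therefore, componentwise,
\[
0\le V^\star-V^{\pi^{n+1}}\le TV^\star-TV^{\pi^n},
\]
so $\|V^\star-V^{\pi^{n+1}}\|_\infty\le\gamma\|V^\star-V^{\pi^n}\|_\infty$. Induction then gives the stated bound.

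\textbf{Expected main obstacle.} The delicate point is identifying the limit of $(T^{\pi^{n+1}})^kV^{\pi^n}$ with $V^{\pi^{n+1}}$. This is exactly where \cref{cond: PE} is indispensable: contraction alone yields a unique fixed point, but not that this fixed point equals the ambiguity-averse value $\rho(V^{\pi,\tilde P}(s))$. The same issue arises in the value-iteration lemma through $V^\pi=T^\pi V^\pi$. Monotonicity of $\rho$ then enters only through the monotonicity of the operators. The remaining mismatch with the exact constants in the statement comes only from the unseen indexing of \cref{alg:value-iteration}, and I would adjust the index shift accordingly.
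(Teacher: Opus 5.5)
Your proof is correct and follows the paper's argument essentially step for step. For value iteration you use the same greedy-policy bound $\|V^{\pi,\nu,\rho}-V^{\nu,\rho}\|_\infty\le\frac{2\gamma}{1-\gamma}\|v-V^{\nu,\rho}\|_\infty$, and since the algorithm makes $\pi^n$ greedy for $V^{n-1}$, your $k=n-1$ case gives exactly the stated constant. For policy iteration you use the same monotone-improvement and one-step contraction argument; your version is slightly more careful than the paper's because you state explicitly the facts $0\le V^{\nu,\rho}-V^{\pi^{n+1},\nu,\rho}$ and $V^{\pi^{n+1},\nu,\rho}\ge T^{\nu,\rho}V^{\pi^n,\nu,\rho}$, which are needed to pass from the componentwise inequality to the norm bound and which the paper uses only implicitly when it routes through $T^{\pi\opt,\nu,\rho}$.
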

We also derive a convex program for computing $V^{\nu,\rho}$ when $\rho$ satisfies a convexity assumption; see Appendix~\ref{app:proof algorithms}. Deriving modern gradient techniques may require more assumptions on the risk measure $\rho$ (see~\citet{li2022robust} for the case of robust MDPs, where the analysis is not an easy consequence of existing results for nominal MDPs). We note that the main bottleneck in value and policy iterations lies in evaluating the Bellman operators $T^{\nu,\rho}$ and $T^{\pi,\nu,\rho}$, as well as the fixed point $V^{\pi^n,\nu,\rho}$ in Algorithm~\ref{alg:policy-iteration}. The specific computation to do this depends on the risk measure $\rho$ and the distribution $\nu$, see Appendix~\ref{app:proof algorithms} for some examples.

\begin{algorithm}%
\caption{Ambiguity-averse Value Iteration}\label{alg:value-iteration}
\begin{algorithmic}[1]
\STATE Initialize $\pi^{0} \in \PiS, V^0=V^{\pi^0,\nu,\rho}$.
\FOR{$n \geq 0$}
    \STATE Compute $V^{n+1}=T^{\nu,\rho}V^{n}$
    \STATE Choose $\pi^{n+1}$ such that
    $T^{\nu,\rho}V^{n} = T^{\pi^{n+1},\nu,\rho}V^{n}$
\ENDFOR
\end{algorithmic}
\end{algorithm}

\begin{algorithm}%
\caption{Ambiguity-averse Policy Iteration}\label{alg:policy-iteration}
\begin{algorithmic}[1]
\STATE Initialize $\pi^{0} \in \PiS$.
\FOR{$n \geq 0$}
    \STATE Compute $V^{\pi^n,\nu,\rho}$
    \STATE Choose $\pi^{n+1} \in \PiS$ such that
    \[\quad T^{\nu,\rho}V^{\pi^n,\nu,\rho} = T^{\pi^{n+1},\nu,\rho}V^{\pi^n,\nu,\rho}\eqsp,\]
    and $\pi^{n+1}= \pi^{n}$ if possible
    \STATE {\bf stop} if $\pi^{n+1} = \pi^{n}$.
\ENDFOR
\STATE {\bf return} $\pi^n$
\end{algorithmic}
\end{algorithm}

\subsection{Dynamic programming compatibility}\label{sec:dp compatibility}
We now characterize the risk measures for which \cref{cond: PE,cond: PO} hold. We focus solely on {\em law-invariant} risk measures here, which cover most of the common interpretable risk measures.
We first establish some important implications of dynamic programming, namely, a law-invariant risk measure that satisfies either \cref{cond: PE} or \cref{cond: PO} must be the identity on constant random variables, positively homogeneous and additive for independent variables.
\begin{proposition} \label{prop: additive and homogeneous}
    Let $\rho$ be a risk measure that is law-invariant and non-constant on $L^{\infty}_{c}(\R)$.
    If $\rho$ satisfies \cref{cond: PE} or \cref{cond: PO} in the case of either static or resampled kernels, then the following statements hold:
    \begin{enumerate}[nosep, itemsep=9pt]%
        \item $\rho$ on constant random variables is the identity:
        
        \noindent $\quad \forall x \in \R, \quad \rho(x) = x\eqsp.$ %
        \item $\rho$ is positive homogeneous on $L^{\infty}_{c}(\R)$: 

        \noindent $\quad \forall \tX \in L^{\infty}_{c}(\R), \forall \alpha \geq 0, \quad \rho(\alpha \tX ) = \alpha \rho(\tX)\eqsp.$%
        \item \label{stat: additivity}$\rho$ is additive independent on $L^{\infty}_{c}(\R)$: 
        \newline for any independent\footnote{A rigorous reformulation is provided in Remark~\ref{rmk: sum of variables}} $\tX, \tY \in L^{\infty}_{c}(\R)$ we have, \newline $\rho( \tX + \tY) = \rho(\tX) + \rho(\tY)\eqsp$.
    \end{enumerate}
\end{proposition}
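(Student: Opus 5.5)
The plan is to build small instances $\sfM_\nu \in \cM_c$ in which $V^{\pi,\tP}$ and $T^{\pi,\tP}v$ have explicit closed forms. The Bellman equation (or the optimality equation; in one-action instances $\PiS$ is a singleton, so \cref{cond: PO} reduces to \cref{cond: PE}) then becomes a functional equation for $\rho$. Law-invariance lets us realize any law in $L^\infty_c(\R)$ through a suitable $\nu$ with convex support. Every instance will have one action, so both conditions are covered at once. Instances will also be chosen so that, along every path of the chain, each state's row is used at most once, or the chain leads into absorbing states. This makes static and resampled kernels give the same value function, and one construction handles both models.

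\emph{Step 1 (constants).} Take a single absorbing state $s$ with reward $r$ and deterministic kernel. Then $V(s)=r/(1-\gamma)$ almost surely, so $V^{\pi,\nu,\rho}(s)=\rho(r/(1-\gamma))$. The fixed-point equation gives
\[
\rho\bigl(r/(1-\gamma)\bigr)=\rho\bigl(r+\gamma\rho(r/(1-\gamma))\bigr).
\]
Write $c=r/(1-\gamma)$ and $f(x)=\rho(x)$ on constants. This reads $f(c)=f\bigl((1-\gamma)c+\gamma f(c)\bigr)$ for all $c$ and all $\gamma\in[0,1)$. Two-state instances give richer identities. Take a transient state $s_0$ that moves deterministically with reward $r$ to an absorbing state with value $a$. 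This yields $f(r+\gamma a)=r+\gamma f(a)$ once we use an absorbing state whose $f$-value we already control. With $\gamma=0$, and treating $a$ as the value of the absorbing state, we get $f(f(x))=f(x)$ in the first instance. We then combine this with the transient-state identity, $f(r+\gamma f(a'))=r+\gamma f(f(a'))$, to obtain $f(y)=y$ on the range of the affine maps. Non-constancy of $\rho$ is used here to rule out $f\equiv$ const, which is a trivial solution of the idempotence relations. We use it by picking a random instance whose value distinguishes two outputs, forcing the range of $f$ to be nontrivial and hence all of $\R$ after translating by $r$.

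\emph{Step 2 (homogeneity and additivity).} Use a transient start state $s_0$ whose random transitions lead to two absorbing states $e_1,e_0$ with values $1/(1-\gamma)\cdot r_1$ and $0$. The rewards on the edges from $s_0$ are chosen so that $V(s_0)=\alpha\tX$ for a prescribed $\tX\in L^\infty_c([0,1])$: the kernel row is $(\tX,1-\tX)$, with $\tX$ having convex support. Absorbing states have deterministic values, which equal their $\rho$-values by Step 1. The Bellman equation at $s_0$ then becomes $\rho(V(s_0))=\rho(T^{\tP}V^{\nu,\rho}(s_0))$, and $T^{\tP}V^{\nu,\rho}(s_0)$ is an affine function of $\tX$ whose coefficients depend on $\gamma$, unlike $V(s_0)$ when the start state has a self-loop. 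Comparing the self-loop version, as in Example~\ref{ex: E, static main body}, with the direct version gives $\rho(g(\tX))=g'(\rho(\tX))$-type identities. By translation, through Step 1 and the reward shift, these reduce to $\rho(\alpha\tX)=\alpha\rho(\tX)$ for a dense set of $\alpha$. Monotonicity is not assumed, so density must be replaced by letting $\alpha$ range over all of $[0,\infty)$ through free reward scaling. For additivity, chain two transient states $s_0\to s_1\to$ absorbing, with independent rows $\tP(s_0,\cdot,\cdot)$ and $\tP(s_1,\cdot,\cdot)$ carrying $\tX$ and $\tY$; the product structure permits this. Arrange the rewards so that $V(s_0)=\tX+\tY$ while $T^{\tP}V^{\nu,\rho}(s_0)=\tX+\rho(\tY)$. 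The Bellman equation at $s_0$ then gives $\rho(\tX+\tY)=\rho(\tX+\rho(\tY))=\rho(\tX)+\rho(\tY)$, the last step by translation invariance. Translation invariance itself follows from the same chain with $\tY$ constant. Throughout, law-invariance is what lets us replace the specific random variables built from kernel entries by arbitrary ones with the same law.

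\emph{Main obstacle.} The hardest part is engineering rewards and kernels so that $V(s)$ is exactly $\tX+\tY$ (or $\alpha\tX$). Values are bilinear in transition probabilities and rewards, not linear, so the design must keep each random row used once and absorbing values deterministic, and it must handle supports of arbitrary convex shape while staying inside $\Delta(\cS)$. Affine reparametrization, which Step 1 justifies, is the first tool I would try for this.
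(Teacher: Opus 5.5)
There is a genuine gap. The functional equations you write down are not the ones the Bellman equation produces, and the instances you propose for homogeneity and translation invariance yield only tautologies.

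\emph{Step 1.} Take your two-state instance, where $s_0$ moves deterministically with reward $r$ into an absorbing state $e$ of nominal value $a$. The Bellman equation at $s_0$ reads
\[
f(r+\gamma a)=\rho\bigl(r+\gamma V^{\nu,\rho}(e)\bigr)=f\bigl(r+\gamma f(a)\bigr).
\]
It does not give $f(r+\gamma a)=r+\gamma f(a)$; that version already presupposes $f=\mathrm{id}$ at the point $r+\gamma f(a)$. At $\gamma=0$ all your identities are vacuous ($f(r)=f(r)$), so $f\circ f=f$ is not obtained, and without continuity you cannot let $\gamma\to 1$.

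The correct identity does suffice, and more directly than the paper's Lemma~\ref{lem: identity on R}, which only uses the case $a=0$ and needs an injectivity argument. Set $u=r+\gamma a$. Then $f(u)=f\bigl(u+\gamma(f(a)-a)\bigr)$ for all $u\in\R$ and $\gamma\in[0,1)$. So any $a$ with $f(a)\neq a$ makes $f$ invariant under every translation by an amount in a nondegenerate interval, hence constant.

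You still need $f=\rho|_{\R}$ to be non-constant, and the hypothesis only concerns $\rho$ on $L^\infty_c(\R)$. ``A random instance whose value distinguishes two outputs'' does not establish this. What works is a deterministic, reward-$0$ move from $s_0$ into a state $s_1$ whose nominal value has the law of $\tW/\gamma$. The Bellman equation at $s_0$ then gives $\rho(\tW)=f\bigl(\gamma\rho(\tW/\gamma)\bigr)$, so a constant $f$ would force $\rho$ to be constant on $L^\infty_c(\R)$.

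\emph{Step 2.} In your homogeneity instance, the random row of $s_0$ leads directly into absorbing states with deterministic values. Once $f=\mathrm{id}$, you have $V^{\nu,\rho}(e_i)=V^{\tP}(e_i)$, so $T^{\tP}V^{\nu,\rho}(s_0)=V^{\tP}(s_0)$ pointwise, and the Bellman equation reduces to $\rho(\alpha\tX)=\rho(\alpha\tX)$.

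The self-loop variant does not rescue this. It breaks your own ``each row used once'' design, since static and resampled values then differ. It also only yields nonlinear relations such as $\rho\bigl(\tfrac{1}{1-\gamma\tX}\bigr)=\rho\bigl(1+\gamma\tX\,\rho(\tfrac{1}{1-\gamma\tX})\bigr)$, and you give no route from these to $\rho(\alpha\tX)=\alpha\rho(\tX)$.

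Likewise, your chain with $\tY\equiv y$ gives $\rho(\tX+y)=\rho(\tX+\rho(y))=\rho(\tX+y)$, again a tautology. So the translation invariance you invoke for $\rho(\tX+\rho(\tY))=\rho(\tX)+\rho(\tY)$ is unproved.

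The missing idea, which is the heart of the paper's proof, is to place all randomness strictly downstream of a state whose own row is deterministic. Then the right-hand side of the Bellman equation is $\rho$ of a constant. In the paper, Start moves to Left/Right with probabilities $(\lambda,1-\lambda)$ and reward $0$, while Left and Right carry independent nominal values $\tX,\tY$. This gives
\[
\rho\bigl(\gamma(\lambda\tX+(1-\lambda)\tY)\bigr)=\gamma\bigl(\lambda\rho(\tX)+(1-\lambda)\rho(\tY)\bigr).
\]
Taking $\lambda=1$ and $\tY=0$ yields homogeneity, extended to $\alpha\geq 1$ by inversion. Taking $\gamma=\lambda=1/2$, together with homogeneity, yields additivity.

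This route needs no translation invariance and never requires realizing $V(s_0)=\tX+\tY$ exactly, which was your open ``main obstacle''. Your chain would also work if you froze $\tX$ rather than $\tY$: a deterministic edge with reward $x$ into $s_1$ gives $\rho(x+\gamma\tY)=x+\gamma\rho(\tY)$, which is homogeneity and translation invariance at once.
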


We derive these results using a single MDP instance where the geometry of the transitions enforces algebraic constraints on the risk measure, see Appendix~\ref{prop:proof prop addivite homo}. Combined with \cref{cond: PE,cond: PO}, transitions to a next state imply positive homogeneity, and transitions to two different next states imply additivity for independent variables. 
Note that item 1 and item 3 in Proposition~\ref{prop: additive and homogeneous} imply translation-invariance.
Proposition~\ref{prop: additive and homogeneous} already rules out several of the standard risk measures, e.g., positive homogeneity rules out \erm{} and additivity rules out \cvar. 
To completely characterize law-invariant risk measures satisfying \cref{cond: PE,cond: PO}, 
we rely on an additional assumption (monotonicity) to invoke representation theorems from the literature on statistics~\cite{mu2024monotone}, or to a stronger continuity notion ($W^1$-continuity).

Our first characterization requires the assumption that $\rho$ is monotone. We then use the results in~\citet{mu2024monotone} to obtain that $\rho$ can be decomposed as a mixture of entropic risk measures; i.e., $\rho$ must have a representation of the form
\begin{equation}\label{eq:erm representation}
\rho(\tX) = \int_{\overline \R} \erm_a(\tX) d\mu(a), \quad\forall \; \tX \in L_{c}^{\infty}(\R)\eqsp,
\end{equation}
for some measure $\mu$ with support on the extended real line $\overline{\R} = \R \cup \{+\infty,-\infty\}$. A key technical step in our proof is bridging the gap between our properties on variables with convex support and the domain of the representation theorem. We do so by extending $\rho$ to all bounded random variables via a noise-addition technique that convexifies supports while preserving the algebraic structure imposed by Proposition~\ref{prop: additive and homogeneous}.  Positive homogeneity then implies, through a dilatation argument on the mixing measure $\mu$, that $\mu$ can only be non-zero at $-\infty,\,0$ and $+\infty$. We further prove that $\rho$ must be multiplicative for non-negative random variables; i.e., for any independent 
$\tX,\tZ \in L^{\infty}_{c}(\R_+),\rho( \tZ \tX) = \rho(\tZ)\rho(\tX)$. 
Substituting independent uniform random variables into this multiplicativity identity yields quadratic polynomial equalities in the endpoints of the intervals. These equalities force $\mu$ to be a Dirac mass at $-\infty$, $0$, or $+\infty$, leading to the following theorem.

\begin{theorem} \label{th: DP-compatible risk measures}
Let $\rho$ be a risk measure that is law-invariant and non-constant on $L^{\infty}_{c}(\R)$. Assume additionally that $\rho$ is monotone on $L^{\infty}_{c}(\R)$.

\noindent For static kernels, the following are equivalent:
\begin{enumerate}[nosep, itemsep = \smallskipamount]%
\item $\rho$ satisfies \cref{cond: PE}.
\item $\rho$ satisfies \cref{cond: PO}.
\item $\rho$ is the essential infimum or the essential supremum on $L^{\infty}_{c}(\R)$. 
\end{enumerate}
\noindent For resampled kernels, the following are equivalent:
\begin{enumerate}[nosep, itemsep = \smallskipamount]%
\item $\rho$ satisfies \cref{cond: PE}.
\item $\rho$ satisfies \cref{cond: PO}.
\item $\rho$ is the essential infimum, the essential supremum, or the expectation on $L^{\infty}_{c}(\R)$. 
\end{enumerate}
\end{theorem}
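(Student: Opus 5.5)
The proof splits into the necessity direction, (1) or (2) $\Rightarrow$ (3), and the sufficiency direction, (3) $\Rightarrow$ (1) and (2), in each kernel model.

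\textbf{Sufficiency.} For $\essInf$ and $\essSup$ with static kernels, I would reduce to rectangular robust (or optimistic) MDPs. By Lemma~\ref{lem:essinf esssup = inf sup}, $\rho(V^{\pi,\tP}(s)) = \inf_{P\in\supp(\nu)} V^{\pi,P}(s)$. Since $\nu \in \cM_c$ has convex support and product structure, $\supp(\nu)$ is a convex compact s-rectangular set. The known robust Bellman equations (\citet{wiesemann2013robust}) then give \cref{cond: PE,cond: PO}. The optimistic case is symmetric, with $\sup$ replacing $\inf$.

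With resampled kernels, the infimum over i.i.d.\ sequences supported on $\supp(\nu)^{\N}$ is the Markovian robust MDP value. Rectangularity makes it equal to the static robust value and to the robust fixed point, and I would cite Appendix~\ref{app:reformulations} for this. For the expectation with resampled kernels, independence across time gives
\[
\E\bigl(V^{\pi,\tP}(s)\bigr) = V^{\pi,\bar P}(s), \qquad \bar P = \E(\tP_0).
\]
This is a nominal MDP, whose Bellman equations hold, and $T^{\pi,\nu,\E}$ coincides with $T^{\pi,\bar P}$ by linearity of $\E$ in $P$.

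\textbf{Necessity.} Suppose \cref{cond: PE} or \cref{cond: PO} holds. Proposition~\ref{prop: additive and homogeneous} gives that $\rho(x)=x$ on constants, that $\rho$ is positively homogeneous, and that $\rho$ is additive on independent variables in $L^\infty_c(\R)$. Together with monotonicity and law-invariance, I would apply the representation theorem of \citet{mu2024monotone}. That theorem is stated for all bounded variables, so $\rho$ must first be extended from $L^\infty_c$.

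To extend, for bounded $\tX$ I add an independent uniform noise $\tU$ that is wide enough for $\tX+\tU$ to have convex support. I then define $\bar\rho(\tX) = \rho(\tX+\tU) - \rho(\tU)$. Independent additivity makes this definition independent of the choice of $\tU$ and makes $\bar\rho$ additive and monotone. The theorem then yields the mixture \eqref{eq:erm representation}.

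Homogeneity comes next. Since $\erm_a(\lambda\tX) = \lambda\,\erm_{\lambda a}(\tX)$, comparing the mixtures for $\rho(\lambda \tX)$ and $\lambda\rho(\tX)$ shows that $\mu$ is invariant under dilations. A finite measure invariant under all dilations must be concentrated on $\{-\infty,0,+\infty\}$. Hence
\[
\rho = p\,\essInf + q\,\E + r\,\essSup, \qquad p+q+r=1.
\]

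To pin down the weights, I derive multiplicativity for independent nonnegative variables, $\rho(\tZ\tX) = \rho(\tZ)\rho(\tX)$. This comes from an MDP instance with a random self-loop or scaling transition, as in the proof of Proposition~\ref{prop: additive and homogeneous}. Plugging in independent uniforms on $[a,b]$ and $[c,d]$ gives polynomial identities in $a,b,c,d$, and these force exactly one of $p,q,r$ to equal $1$.

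It remains to exclude $\E$ for static kernels. Example~\ref{ex: E, static main body - continued} already provides a convex product instance violating both conditions. Since that MDP has a single policy, the violation covers \cref{cond: PO} as well as \cref{cond: PE}.

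\textbf{Main obstacle.} I expect the hardest step to be the extension from $L^\infty_c$ to all bounded variables, in particular showing that $\bar\rho$ is well-defined and still satisfies the hypotheses of the representation theorem. A close second is deriving multiplicativity from the conditions: a concrete MDP must be built in which the static or resampled structure produces a product of independent random quantities, $\tZ\tX$, inside $V^{\pi,\tP}$.
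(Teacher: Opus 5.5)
Your proposal is correct and follows essentially the same route as the paper. Sufficiency is via the robust, optimistic and DRMDP results; your multilinearity argument $\E\bigl(V^{\pi,\tP}(s)\bigr)=V^{\pi,\bar P}(s)$ is a valid self-contained substitute for the citation in the resampled-expectation case. Necessity goes through Proposition~\ref{prop: additive and homogeneous}, the uniform-noise extension, the representation theorem of \citet{mu2024monotone}, dilation invariance of the mixing measure, multiplicativity tested on independent uniforms, and Example~\ref{ex: E, static main body - continued} to rule out $\E$ for static kernels.

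The multiplicativity step you flag as an obstacle needs no new instance. Setting $\tY=0$ in the identity of Lemma~\ref{lem: main MDP} gives $\rho(\gamma\tZ\tX)=\rho(\gamma\tZ\rho(\tX))$. Positive homogeneity, together with $\rho(\tX)\geq\rho(0)=0$ from monotonicity, then yields $\rho(\tZ\tX)=\rho(\tZ)\rho(\tX)$.
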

The proof of Theorem~\ref{th: DP-compatible risk measures} is given in Appendix~\ref{app:proof DP compatible rm}. We note that the fact that the essential infimum and essential supremum satisfy \cref{cond: PE,cond: PO} for the static and resampled kernels is already known from the literature on robust MDPs and optimistic MDPs~\cite{wiesemann2013robust,grand2023beyond}. The fact that expectation also satisfies \cref{cond: PE,cond: PO} for resampled kernels is a consequence of seminal results on distributionally robust MDPs (see, e.g., Section 4 in~\citet{xu2012distributionally}). Therefore, our main contribution in Theorem~\ref{th: DP-compatible risk measures} is to prove that {\em only} these risk measures yield dynamic programming equations, for the class of monotone, law-invariant risk measures.

For our second characterization, we use a continuity assumption (specifically $W^1$-continuity, a form of ``continuity in quantiles''; see Appendix~\ref{app:w1-c0 dp rm}). 
Additivity and positive homogeneity give that
$\rho(\tX) = \rho\left(\frac{1}{n}\sum_{i=1}^{n} \tX_i\right)$,
for $\tX \in L_{c}^{\infty}(\R)$ and $\tX_i$ some i.i.d. copies of $\tX$. Using $W^1$-continuity and the Law of Large Numbers, we get that it is necessary that $\rho$ coincides with the expectation, and we conclude by distinguishing between static kernels (where expectation doesn't satisfy dynamic programming) and resampled kernels (where expectation satisfies dynamic programming).
This approach yields a more restrictive class of risk measures than Theorem~\ref{th: DP-compatible risk measures} (since $\essInf$ and $\essSup$ are not $W^1$-continuous).
\begin{theorem}\label{th: w1-c0 DP compatible risk measures}
    Let $\rho$ be a risk measure that is law-invariant and non-constant on $L^{\infty}_{c}(\R)$. Assume additionally that $\rho$ is $W^1$-continuous on $L^{\infty}_{c}(\R)$. 
    
    For static kernels, there is no such $\rho$ satisfying \cref{cond: PE} or \cref{cond: PO}.
    
    For resampled kernels, the only $\rho$ satisfying \cref{cond: PE} or \cref{cond: PO} is the expectation on $L^{\infty}_{c}(\R)$.
\end{theorem}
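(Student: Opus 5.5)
The plan is to reduce to the algebraic structure given by Proposition~\ref{prop: additive and homogeneous}, then use $W^1$-continuity together with the law of large numbers to force $\rho=\E$ on $L^{\infty}_{c}(\R)$. After that, we treat static and resampled kernels separately.

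\emph{Necessity.} Assume $\rho$ satisfies \cref{cond: PE} or \cref{cond: PO} for static or resampled kernels. By Proposition~\ref{prop: additive and homogeneous}, $\rho$ is positively homogeneous and additive for independent variables on $L^{\infty}_{c}(\R)$, and $\rho(x)=x$ on constants.

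Fix $\tX\in L^{\infty}_{c}(\R)$ and let $\tX_1,\dots,\tX_n$ be i.i.d.\ copies of $\tX$. Sums of independent variables with convex (interval) supports have interval support, so every partial sum and every rescaling stays in $L^{\infty}_{c}(\R)$. Applying additivity $n-1$ times, then homogeneity with $\alpha=1/n$, and finally law-invariance gives
\[\rho\Big(\tfrac1n\sum_{i=1}^n \tX_i\Big)=\tfrac1n\sum_{i=1}^n\rho(\tX_i)=\rho(\tX).\]
The empirical mean $\bar X_n$ is bounded in the same interval as $\tX$ and converges a.s.\ to $\E(\tX)$. Bounded a.s.\ convergence gives convergence in $L^1$, so $W^1(\mathcal L(\bar X_n),\delta_{\E \tX})\le \E|\bar X_n-\E\tX|\to 0$. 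The constant $\E\tX$ lies in $L^{\infty}_{c}(\R)$ (a point is a convex support), so $W^1$-continuity yields $\rho(\tX)=\lim_n\rho(\bar X_n)=\rho(\E\tX)=\E\tX$, using item 1 of Proposition~\ref{prop: additive and homogeneous} for the last equality. Hence $\rho=\E$ on $L^{\infty}_{c}(\R)$.

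I expect the main obstacle to be a careful check of the precise definition of $W^1$-continuity in the appendix, namely whether it is sequential continuity with respect to the Wasserstein-1 distance on laws restricted to $L^{\infty}_{c}(\R)$, and whether a degenerate limit is allowed. If the definition requires non-degenerate limits, I would instead compare $\bar X_n$ with $\E\tX+\epsilon_n \tU$ for a small independent uniform $\tU$, then let $\epsilon_n\to0$ using homogeneity and additivity.

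\emph{Static kernels.} The necessity step shows any such $\rho$ must equal $\E$ on $L^{\infty}_{c}(\R)$. Example~\ref{ex: E, static main body - continued} exhibits an instance in $\cM_{c}$: the support of $\nu$ is convex, and the product structure is trivial since only the {\sf Start} row is random. In this instance $\E$ violates \cref{cond: PE}, and it also violates \cref{cond: PO} because there is a single policy. The random quantities involved, $\sum_t(\tX/2)^t$ and $1+\tX\cdot(\ldots)/2$, have interval supports, so $\rho$ agrees with $\E$ on them and inherits the violation. This is a contradiction, so no such $\rho$ exists.

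\emph{Resampled kernels.} Necessity again gives $\rho=\E$. For the converse, I would invoke the fact that $\E$ satisfies \cref{cond: PE,cond: PO} for resampled kernels, already recorded in Theorem~\ref{th: DP-compatible risk measures}, since $\E$ is monotone and law-invariant. Alternatively, a direct argument works: independence of $\tP_0$ from $(\tP_t)_{t\ge1}$ lets expectation factor through the first-step Bellman operator. It remains to note that $\E$ is $W^1$-continuous and non-constant, which is immediate.
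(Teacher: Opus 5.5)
Your proposal is correct and follows essentially the same route as the paper. You use Proposition~\ref{prop: additive and homogeneous} together with i.i.d.\ averaging, the strong law of large numbers and $W^1$-continuity to force $\rho=\E$ on $L^{\infty}_{c}(\R)$, then Example~\ref{ex: E, static main body - continued} to rule out static kernels, and the known validity of dynamic programming for $\E$ with resampled kernels. Your extra checks (interval supports of the partial sums and of the quantities in the static counterexample, and that $\E$ is itself $W^1$-continuous and non-constant) fill in details the paper leaves implicit, and the fallback for degenerate limits is unnecessary since Definition~\ref{def:W1 C0} allows Dirac limits.
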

\subsection{Discussion}
\paragraph{Main takeaways}
We have established that dynamic programming implies several useful properties, including the existence of stationary optimal policies and the extension of classical algorithms to the ambiguity-averse framework.
We have also shown that, in the class of monotone law-invariant risk measures, only $\essInf,\essSup$ and $\E$ can satisfy dynamic programming equations. The risk measures $\essInf$ and $\essSup$ correspond to robust MDPs~\cite{nilim2005robust,iyengar2005robust}. Our impossibility results explain why no new model compatible with dynamic programming has emerged in the wake of those seminal papers, despite substantial interest in other alternatives involving~\var{} or \cvar. Assuming stronger continuity properties (e.g., $W^1$-continuity) only further restricts the class of risk measures compatible with dynamic programming.

Our work also clarifies the computational role of how uncertain kernels change over time. Static kernels induce significantly harder optimization problems than resampled kernels, mirroring the corresponding distinction for multi-model MDPs.
In fact, we note that our results closely align with the separations between NP-hard problems that require history-dependent policies (multi-model MDPs and percentile optimization with static kernels, e.g., \citet{delage2010percentile,steimle2021multi}) and models that can be solved efficiently and for which optimal stationary policies exist (e.g., robust and optimistic MDPs for both static and resampled kernels, see~\citet{wiesemann2013robust,grand2023beyond}). It would be interesting to understand if \cref{cond: PE,cond: PO} are ``equivalent'' (in some sense) to the existence of optimal stationary policies.
We also note that, under the assumptions of Theorem~\ref{th: DP-compatible risk measures} or of Theorem~\ref{th: w1-c0 DP compatible risk measures}, we establish that dynamic programming is valid for policy evaluation (\Cref{cond: PE}) if and only if it is valid for policy optimization (\Cref{cond: PO}). This is surprising since \cref{cond: PE} imposes a fixed-point equation for the value functions of all stationary policies, whereas \cref{cond: PO} imposes a fixed-point equation only for the optimal value function.
\paragraph{Limitations and assumptions} The translation-invariance and monotonicity assumptions are quite natural when the objective is to maximize returns, and are widely used in the risk measure literature.
As discussed in Section~\ref{subsec: ambiguous DP}, we could have written \cref{cond: PE,cond: PO} without the restriction to instances with convex support and product structure, but (a) this would actually lead to more restrictive conditions, since the fixed point equations would need to hold for a larger set of instances, and (b) robust MDPs and optimistic MDPs would not satisfy these more restrictive conditions~\cite{wang2023foundation,grand2024tractable}. So in a sense, our formulations of dynamic programming as in \cref{cond: PE,cond: PO} are the ``most general formulations" for which we can recover robust and optimistic MDPs. 
Finally, we show in Appendix~\ref{app:two rm} that relaxed versions of \cref{cond: PE,cond: PO}, where the risk measure is allowed to be different in the objective function and the Bellman operators, are equivalent to our formulation of \cref{cond: PE,cond: PO}.

\paragraph{Extensions to more general setups.} The impossibility side of our results extends immediately to any proposed framework that asks \cref{cond: PE,cond: PO} to hold for classes of instances where finite state and action sets are a special case, e.g. to MDPs with measure state and action sets. Positive results in general spaces would require additional measurable-selection and continuity assumptions to make the Bellman operators and risk functionals well-posed. Average- or total-reward criteria would also require a different technical setup: one must first specify the class of instances for which the criterion is well-defined. Rewriting and analyzing \cref{cond: PE,cond: PO} in those setups is a promising direction.

\paragraph{Beyond dynamic programming} Our results do not preclude that one can solve the ambiguity-averse MDP problem~\eqref{eq:AMDP} via other types of algorithms than dynamic programming over the set of states. In particular, while our results imply that attractive {\em law-invariant} risk criteria (such as \var{} or \cvar{}) do not satisfy dynamic programming for value functions indexed solely by states, it is possible that one could obtain DP equations on {\em augmented} state sets as in risk-averse MDPs~\cite{chow2015risk,hau2023dynamic}. Another option would be to drop the law-invariance assumption and take inspiration from nested risk measures~\cite{ruszczynski2010risk} to recover DP equations over the set of states (at the price of less interpretable risk criteria). Generalizing the dynamic principles to {\em collections of risk measures} in the same spirit as Bellman closedness in distributional reinforcement learning~\cite{rowland2019statistics} is also an interesting next step.
Finally, it is worth noting that direct optimization of the objective function~\eqref{eq:AMDP} (e.g., via gradient descent) is possible as soon as one can compute gradient estimates, although the potential absence of optimal {\em stationary} policies may be an issue for practical implementation.
\section{Conclusion}
We have introduced a unified theory of epistemic uncertainty in MDPs, in which the transition probabilities are treated as random variables. This ambiguity-averse MDP framework subsumes several existing models depending on the choice of risk measures in the objective functions. We define the ambiguity-averse equivalent of value functions and derive the fundamental properties of Bellman operators. Our main results show both the ``advantages" of dynamic programming (stationary optimality and classical algorithmic machinery), and its modeling implications, as we provide a comprehensive characterization of law-invariant risk measures compatible with dynamic programming. 
This provides useful takeaways for applications, where we can choose to stick with robust/optimistic formulations for dynamic programming tractability, or where we must anticipate the need for more advanced optimization methods (e.g., augmented state sets or nested risk measures). 

It is worth investigating in more detail the connections with other variants, such as risk-averse MDPs, regularized MDPs, and distributionally robust MDPs. 
Studying ambiguity-averse MDPs with other objectives (e.g., total or average return, Blackwell optimality) or with non-law-invariant risk measures is also a promising research direction. Deriving policy-gradient methods or value function approximation methods for ambiguity-averse MDPs with general risk measures is also an interesting next step.

\section*{Impact Statement}
This paper advances machine learning theory. As with any theoretical advance, there are many potential indirect societal consequences, but they are difficult to predict or contemplate. 

\section*{Acknowledgements}
Axel Benyamine, Michael I. Jordan and Alain Durmus are funded by the European Union (ERC-2022-SYG-OCEAN-101071601).
Views and opinions expressed are however those of the author(s) only and do not necessarily reflect those of the European Union or the European Research Council Executive Agency. Neither the European Union nor the granting authority can be held responsible for them.
Julien Grand-Cl{\'e}ment and Alain Durmus were supported by Hi! Paris and Agence Nationale de la Recherche (Grant 11-LABX-0047).
Alain Durmus a re\c{c}u un financement de la Fondation de l'École polytechnique dans le cadre de sa campagne Servir la science.
Alain Durmus is supported by the France 2030 program with the reference ANR-25-PEIA-0001 (THEOREM project).

\bibliography{ref_epistemic}
\bibliographystyle{icml2026}

\newpage
\appendix
\onecolumn

\section{Risk measures}~\label{app:risk measures}
In this section, we define general risk measures rigorously, present some common examples, and outline some of the most studied properties of risk measures. We follow the lines of Chapter 4 in~\citet{follmer2016stochastic} and Appendix A in ~\citet{hau2023entropic} here.
\paragraph{Definitions.} Let $(\Omega,\cF,\bP)$ be a probability space. The set $L(\Omega,\cF,\bP; \R)$ is the set of real random variables, i.e. of $\cF$-measurable functions from $\Omega$ to $\R$. 
A risk measure $\rho$ is a map from the set $L(\Omega,\cF,\bP; \R)$ to $\R$.
\begin{definition}[Support of a Random Variable and Support of a Measure]\label{def: support} \mbox{}\newline
     The \textbf{support} of a random variable $\tX$ is denoted $\supp(\tX)$, and it is defined as the smallest closed set $C \subseteq \mathbb{R}$ such that $\mathbb{P}(\tX \in C) = 1$.
    \newline 
    More generally, for a probability measure $\nu$ on $\R^n$, the \textbf{support} $\supp(\nu)$ of $\nu$ is the smallest closed set $C \subseteq \R^n$ such that $\nu(C) = 1$, or equivalently, the complement of the largest open set having $\nu$-measure zero. 
    \newline We say that a measure $\nu$ has convex support if its support $\supp(\nu)$ is convex, i.e., for any $\lambda\in[0,1]$ and $P,P'\in\supp(\nu)\subseteq \R^n$, we have $\lambda P + (1-\lambda)P'\in\supp(\nu)$. Accordingly, we say that a random variable $\tX$ has convex support if its support $\supp(\tX)$ is convex.
\end{definition}
Since by definition $\supp(\tX)$ is closed, we have that $\supp(\tX)$ is compact if and only if the random variable $\tX$ is a.s. bounded. Similarly, the support of a measure $\nu$ is compact if and only if there exists a bounded set with full $\nu$-measure.
\paragraph{Properties.}
Risk measures satisfying the monotonicity and translation-invariance properties given in Section~\ref{sec:risk measure} are sometimes called {\bf monetary}. A risk measure is said to be {\bf concave} if it is monetary and it satisfies, for any random variables $\tX,\tY$ and scalar $\lambda \in [0,1]$, the following inequality:
\[
  \rho(\lambda \tX + (1-\lambda)\tY) \geq \lambda \rho(\tX) + (1-\lambda)\rho(\tY)\eqsp.
\]
A {\bf coherent} risk measure is a concave risk measure that is positively homogeneous, i.e., that satisfies the following condition:
\[
  \forall \; \alpha \geq 0, \; \forall \tX, \quad\rho(\alpha \tX) = \alpha \rho(\tX)\eqsp.
\]
Finally, a risk measure $\rho$ is {\bf law-invariant} if it only depends on the law of its argument, i.e. if $\rho(\tX)=\rho(\tY)$ when $\cL(\tX)=\cL(\tY)$, where $\cL(\tX)$ is the law of the random variable $\tX$. Note that we will generalize this notion for the ambiguity-averse MDP framework in Appendix~\ref{app:aamdp formalism} and Section~\ref{subsec: proba modeling}.
\paragraph{Examples.} All the examples discussed below are law-invariant.
\begin{itemize}
    \item The expectation $\tX \mapsto \E(\tX)$ is a simple example of a coherent risk measure. As a side note, the variance $\tX \mapsto \E\left((\tX - \E(\tX))^2\right)$ is not a monetary risk measure (it is easy to see that it is not monotone nor translation-invariant). 
    \item The essential supremum $\tX \mapsto \essSup(\tX)$ and essential infimum $\tX \mapsto \essInf(\tX)$ are defined as
    \begin{align*}
        \essInf(\tX) & = \sup \{b \in \R \; | \; \bP(\tX < b) =0 \} \\
        \essSup(\tX) & = \inf \{b \in \R \; | \; \bP(b< \tX) =0 \}
    \end{align*}
    Both $\essInf$ and $\essSup$ are monotone and positively homogeneous, but only $\essInf$ is concave (hence coherent).
    Additionally, $\essInf$ and $\essSup$ coincide with the infimum and supremum of the variables, under some simple assumptions on the support of the random variables, as stated more formally below.
    \begin{lemma}\label{lem:essinf esssup = inf sup}
        Let $\tX$ be an a.s. bounded random variable. Then $\essInf$ and $\essSup$ coincide with the infimum and the supremum of the support of $\tX$:
       \begin{align*}
           \essInf(\tX) & = \inf \supp(\tX) \\
           \essSup(\tX) & = \sup \supp(\tX)\eqsp.
       \end{align*}
    \end{lemma}
    \begin{proof}[Proof of Lemma~\ref{lem:essinf esssup = inf sup}]
    Recall that by definition (Definition~\ref{def: support}), the support of $\tX$ is the smallest closed set $C$ such that $\bP(\tX\in C) =1$.
        We provide the proof for the essential supremum, the proof for the essential infimum follows exactly the same line.
        As noted in Definition~\ref{def: support}, $\supp(\tX)$ is compact if and only if $\tX$ is a.s. bounded.

        Let $M = \sup \supp(\tX)$. Since $\supp(\tX)$ is compact, we get that $M \in \supp(\tX)$. By definition we get that $\bP(\tX > M) = 0$ so that $\essSup(\tX)\leq M$. Additionally, since $M \in \supp(\tX)$, any neighborhood of $M$ has positive mass, i.e. for every $\epsilon>0$ we have $\bP(\tX > M-\epsilon)>0$. We conclude that $M-\epsilon \leq \essSup(\tX)$ for any $\epsilon>0$, i.e. we conclude that $M \leq \essSup(\tX)$. Therefore we have shown that $M=\essSup(\tX)$, which concludes the proof.
    \end{proof}
    \item The Value-at-Risk ($\var_{\alpha}$) is defined as the upper $(1-\alpha)$-quantile: \[\var_{\alpha}(\tX) = \inf \{x \in \R \; | \; \bP(\tX \leq x) > 1 - \alpha\}\] for some $\alpha \in (0,1)$, and the Conditional Value-at-Risk ($\cvar_{\alpha}$) is the expectation of the worst $(1-\alpha)$-fraction of $\tX$. \cvar{} can be computed via the following formula: 
    \begin{equation}\label{eq:cvar}
        \cvar_{\alpha}(\tX) = \sup_{\xi \in \R} \xi  - \frac{1}{1-\alpha}\E(\max\{\xi - \tX,0\})\eqsp.
    \end{equation}
\cvar{} is a concave risk measure, and positively homogeneous, thus \cvar{} is coherent. In contrast, \var{} is only monetary. Note we can extend \cvar{} by continuity, as $\cvar_{0}(\tX) = \E(\tX),\cvar_{1}(\tX) = \essInf(\tX)$.
\item 
The Entropic Risk Measure (\erm) at risk level $\beta \in \R$ is defined as \[\erm_{\beta}(\tX) = \frac{1}{\beta}\log(\E(\exp(\beta \cdot \tX)))\]
extended by continuity as $\erm_{0}(\tX) = \E(\tX),\erm_{+\infty}(\tX) = \essSup(\tX),\erm_{-\infty}(\tX) = \essInf(\tX)$. \erm{} is convex for $\beta>0$ and concave for $\beta \leq 0$.
The Entropic Value-at-Risk (\evar) with a confidence parameter $\alpha\in[0,1)$ is defined as 
\[\evar_{\alpha}(\tX) = \sup_{\beta > 0} \erm_{-\beta}(\tX) + \frac{\log(1-\alpha)}{\beta}\eqsp.\]
 We note that \erm{} is not coherent for $\beta\in\R\setminus \{0\}$, while \evar{} is a coherent risk measure. We refer to~\citet{hau2023entropic} for recent advances in using \erm{} and \evar{} in risk-averse MDPs.
\end{itemize}
For the sake of readability, we summarize the properties of some common risk measures in Table~\ref{tab:risk measures summary}.
\begin{table}[htb]
\centering
\caption{Properties of some common risk measures. The term ``Monetary" means that the risk measure is monotone and translation-invariant. The column ``$W^1$-$C^0$'' refers to the continuity in the $W^1$-sense defined in Appendix~\ref{app:w1-c0 dp rm} (see Definition~\ref{def:W1 C0}). }
\label{tab:risk measures summary}
\begin{tabular}{ll|ccccc} 
\toprule
& & \multicolumn{5}{c}{Risk properties} \\
\toprule
Risk measure & Notation & Coherent & Concave & Monetary & Law invariant & $W^1$-$C^0$ \\  
\midrule
Expectation &  $\E(\tX)$   & $\checkmark$ & $\checkmark$ & $\checkmark$ & $\checkmark$ & $\checkmark$ \\
Essential Supremum & $\essSup(\tX)$ & $\times$ & $\times$ & $\checkmark$ & $\checkmark$ & $\times$ \\
Essential Infimum & $\essInf(\tX)$ & $\checkmark$ & $\checkmark$ & $\checkmark$ & $\checkmark$ & $\times$ \\
Value-at-Risk & $\var_{\alpha}(\tX)$ & $\times$ & $\times$ & $\checkmark$ & $\checkmark$ & $\times$\\
Cond. Value-at-Risk & $\cvar_{\alpha}(\tX)$ & $\checkmark$ & $\checkmark$ & $\checkmark$ & $\checkmark$ & $\checkmark$ \\
Entropic Risk Measure & $\erm_{\beta}(\tX)$ & $\times$ & $\checkmark$ (for $\beta \leq 0$) & $\checkmark$ & $\checkmark$ & $\checkmark$ \\
Entropic Value-at-Risk & $\evar_{\alpha}(\tX)$ & $\checkmark$ & $\checkmark$ & $\checkmark$ & $\checkmark$ &  $\checkmark$ \\
\bottomrule
\end{tabular}
\end{table}
Note that most of this work focuses on risk measures that are law-invariant. For the sake of completeness, we provide a simple example of a non-law-invariant risk measure below.
\begin{example}[A Non-Law-Invariant Risk Measure]\label{ex: non-law-invariant}
\mbox{}\newline
    Consider a probability space $(\Omega, \mathcal{F}, \mathbb{P})$ where $\Omega = \{\omega_1, \omega_2\}$ with $\mathbb{P}(\{\omega_i\}) = 1/2$ for $i \in \{1,2\}$. Define a risk measure $\rho$ as follows: 
    for a random variable $\tX: \Omega \to \mathbb{R}$, let $\rho(\tX) = \tX(\omega_1) + \mathbb{E}[\tX]$, i.e., the risk measure depends explicitly on the value at a specific outcome $\omega = \omega_1$ in addition to the expectation.
    
    This risk measure is \emph{not law-invariant}. To see this, consider two random variables:
    \begin{align*}
        \tX(\omega) &= \begin{cases} 10 & \text{if } \omega = \omega_1 \\ 0 & \text{if } \omega = \omega_2 \end{cases} \\
        \tY(\omega) &= \begin{cases} 0 & \text{if } \omega = \omega_1 \\ 10 & \text{if } \omega = \omega_2 \end{cases} 
    \end{align*}
    Both $\tX$ and $\tY$ have the same law: each takes the value $10$ with probability $1/2$ and the value $0$ with probability $1/2$.
    
    However, we have:
    \begin{align*}
        \rho(\tX) &= \tX(\omega_1) + \mathbb{E}[\tX] = 10 + 10/2 = 15 \\
        \rho(\tY) &= \tY(\omega_1) + \mathbb{E}[\tY] = 0 + 10/2 = 5
    \end{align*}
    
    Since $\rho(\tX) \neq \rho(\tY)$ despite the fact that $\tX,\tY$ have the same law, the risk measure $\rho$ is not law-invariant. The dependence on the specific outcome $\tX(\omega_1)$ violates law-invariance because the risk measure distinguishes between different realizations of the same distribution.
    
    In the context of ambiguity-averse MDPs, non-law-invariant risk measures would depend on the specific sample path or realization of the uncertain transition kernel, rather than only on its distribution. This makes them less interpretable and difficult to work with, as they require tracking which particular realization of the uncertainty occurred, rather than just its statistical properties.
\end{example}
\begin{remark}
    We conclude by noting that risk measures are sometimes introduced for {\em minimizing costs} (e.g. this is the point of view adopted in~\citet{shapiro2021lectures}), 
    instead of {\em maximizing rewards} as in the present paper. These two settings are equivalent, in the sense that if we have a risk measure $\phi$ for minimizing costs, then we can consider the risk measure for maximizing reward $\rho(\tX):=-\phi(-\tX)$, which induces the same preferences over the random variables $\tX$ modeling rewards than the risk measure $\phi$ over random variables $-\tX$ modeling costs. It is worth noting that, in the cost minimization framework, the role of concavity is replaced by that of convexity and translation invariance becomes $\phi(\tX + c) = \phi(\tX) - c$ for $\tX$ a random variable and $c \in \R$.
\end{remark}

\section{Randomization and Risk Measure for Ambiguity-Averse MDPs}\label{app:aamdp formalism}
In this section, we provide a rigorous foundation for the ambiguity-averse MDPs introduced in Section~\ref{sec:ambiguity averse mdps}. We decide to keep this construction in the appendices to simplify the exposition of our results in the main body.
Overall, our goal is to provide a rigorous framework for the objective functions of the form $\pi \mapsto \rho(\mu\tr V^{\pi,\tP})$ that appear in the ambiguity-averse MDP optimization problem~\eqref{eq:AMDP}. Recall that by definition, a risk measure is a map from the set of real random variables $L(\Omega,\cF,\bP)$. However, to rigorously model ambiguity-averse MDPs, we need one probability space $(\Omega_{\sfM},\cF_{\sfM},\bP_{\sfM})$ per instance $\sfM$ of ambiguity-averse MDP, and in principle, there is a distinct risk measure for each ambiguity-averse MDP instance (since the set of arguments $(\Omega_{\sfM},\cF_{\sfM},\bP_{\sfM})$ depends on the instance). We therefore extend the notion of risk measure (defined in Appendix~\ref{app:risk measures} for a single probability space) to risk measures for ambiguity-averse MDPs (Definition~\ref{def:rm for admdp}), which can take as arguments random variables induced by different ambiguity-averse MDP instances (i.e., induced by different probability spaces).

\paragraph{Notation for sets of ambiguity-averse MDPs} Recall that we denote by $\cM_{c}$ the class of all ``convex, product" ambiguity-averse MDPs, i.e. the set of all $\sfM_\nu = (\cS, \cA, r, \nu, \gamma, \mu)$ with any $\cS,\cA,r,\gamma,\mu$, and any $\nu$ with convex support and product structure.
We also denote by $\cM$ the set of all ambiguity-averse MDPs, i.e. the set of all $\sfM_\nu = (\cS, \cA, r, \nu, \gamma, \mu)$ with any $\cS,\cA,r,\gamma,\mu$ and any $\nu$ (not necessarily with convex support or product structure).

\paragraph{Probability space for an instance.}
We first define a probability space $(\Omega,\cF,\bP)$ for each ambiguity-averse MDP.

We provide a slightly more rigorous definition of an ambiguity-averse MDP, which allows us to define a distribution over {\em sequences} of transition probabilities directly.
\begin{definition}\label{def: ambiguity-averse MDPs - appendix}
A (discounted) \emph{ambiguity-averse MDP}
is a tuple $\sfM_{\hat{\nu}} = (\cS, \cA, r, \hat{\nu}, \gamma,\mu)$ where $\cS,\cA,r,\gamma,\mu$ are defined exactly as in nominal MDPs (see Section~\ref{sec:preliminaries}) and $\hat{\nu}$ is a probability distribution over the set $\left(\Delta(\cS)^{\cS \times \cA}\right)^{\N}$ of feasible sequences of transition probabilities.
\end{definition}
When $\hat{\nu}$ is equal to $\nu^{\otimes \N}$ for a certain measure $\nu \in \Delta(\cS)^{\cS \times \cA}$, $\sfM_{\hat{\nu}}$ corresponds to an ambiguity-averse MDP with resampled  kernels according to $\nu$. Here $\nu^{\otimes\N}$ is the \iid~distribution over the path space $(\Delta(\cS)^{\cS \times \cA})^{\nset}$. When $\hat{\nu}$ is equal to the distribution of $(\tilde{P}_t)_{t\in\nset}$ such that $\tilde{P}_0 \sim \nu$ and $\tilde{P}_t = \tilde{P}_0$ for any $t$,  $\sfM_{\hat{\nu}}$ corresponds to an ambiguity-averse MDP with static  kernel.

For any ambiguity-averse MDP $\sfM_{\hat{\nu}}=(\cS,\cA,r,\hat{\nu},\gamma,\mu)\in\cM$, we define the uncertainty in the model transition kernel sequences via a probability space $(\Omega_{\sfM}, \cF_{\sfM}, \bP_{\sfM})$, where: \begin{itemize}
    \item $(\Omega_{\sfM}, \cF_{\sfM})$ is a measurable space with $\Omega_{\sfM} = \left(\Delta(\cS)^{\cS\times \cA}\right)^\N$ the set of transition kernel sequences and $\cF_{\sfM}$ the canonical $\sigma$-field $\mathcal{B}(\Omega_{\sfM}) = \mathcal{B}(\Delta(\cS)^{\cS\times \cA})^{\otimes \nset}$,
    \item 
    $\bP_{\sfM}=\hat{\nu}$ is a probability measure on $(\Omega_{\sfM}, \cF_{\sfM})$.
\end{itemize}
Without loss of generality, we consider $(\tP_t)_{t\in  \nset}$ to be the canonical process $\tP_t  : (p_s)_{s \in\nset} \mapsto p_t$. In particular, it follows law $\hat{\nu}$ in the sense of Definition~\ref{def: ambiguity-averse MDPs - appendix}.

When $\hat{\nu}$ is non-atomic (i.e., it has no set of positive measure that contains no subsets of smaller positive measure), the probability space $(\Omega_{\sfM}, \cF_{\sfM}, \bP_{\sfM})$ is \textbf{nonatomic} %
(i.e., it supports a random variable uniformly distributed on $[0,1]$) and therefore the space $L^{\infty}(\Omega_{\sfM}, \cF_{\sfM}, \bP_{\sfM};\R)$ is rich enough to contain real bounded random variables with arbitrary distributions, see for instance \citet{delage2019dice}. 
\newline Note that by Definition~\ref{def: ambiguity-averse MDPs}, the set $\cM_c$ of all ambiguity-averse MDPs such that the measure has convex support and product structure contains ambiguity-averse MDPs with non-atomic measure in both static kernels or resampled kernels cases (say for example the ambiguity-averse MDP of Example~\ref{ex: E, static} in both static and resampled cases).%

Formally, denoting by $\mathfrak{D}$ the set of all distributions with bounded support (i.e., $\mathfrak{D}$ is the set of non-decreasing right-continuous functions $F:\R\to [0,1]$ that attain both $0$ and $1$), for any $F\in\mathfrak{D}$ there exists $\sfM_{\nu}\in\cM_c$ and $\tX \in L^{\infty}(\Omega_{\sfM}, \cF_{\sfM}, \bP_{\sfM};\R)$ such that $F_{\tX} = F$, where $F_{\tX}$ is the distribution function of $\tX$ under $\bP_{\sfM_{\nu}}$ defined through $F_{\tX}(x)=\bP_{\sfM}(\tX\leq x)$.

\paragraph{Notation for sets of random variables}
Throughout this work, we denote by $L^{\infty}(B) = \cup_{\sfM \in \cM}L^{\infty}(\Omega_{\sfM}, \cF_{\sfM}, \bP_{\sfM};B)$ the set of all bounded real random variables with support in $B \subseteq \R$ across all these probability spaces, and by $L^{\infty}_{c}(B) = \cup_{\sfM \in \cM}L^{\infty}_c(\Omega_{\sfM}, \cF_{\sfM}, \bP_{\sfM};B)$ its analog restricted to random variables with convex supports in $B$.
\newline Note that, by nonatomicity of some probability spaces $(\Omega_{\sfM}, \cF_{\sfM}, \bP_{\sfM})$ (for example the one of Example~\ref{ex: E, static} whether we consider static or resampled kernels), the set $L^{\infty}(\R)$ (resp. $L^{\infty}_c(\R)$) is rich enough to contain real bounded random variables (resp. bounded real random variables with convex support) with arbitrary distributions.
\paragraph{Risk measures}
We now define the notion of risk measure for ambiguity-averse MDPs:
\begin{definition}[Risk Measure for Ambiguity-Averse MDPs]\label{def:rm for admdp}%
    A \textbf{risk measure for ambiguity-averse MDPs} is a family $\rho=(\rho_{\sfM})_{\sfM \in \cM}$ where each $\rho_{\sfM}$ is a risk measure on the corresponding probability space:
\[\forall \sfM\in\cM, \quad \rho_{\sfM} : L^{\infty}(\Omega_{\sfM}, \cF_{\sfM}, \bP_{\sfM};\R) \to \R\eqsp.\]
We also define the following properties of a risk measure $\rho$. We say that a risk measure $\rho$ is:
\begin{itemize}
    \item \textbf{law-invariant} on $L^{\infty}(\R)$ (resp. $L^{\infty}_c(\R)$) if for any $\sfM,\sfM'\in\cM$ and $\tX,\tX' \in L^{\infty}(\Omega_{\sfM}, \cF_{\sfM}, \bP_{\sfM};\R) \times L^{\infty}(\Omega_{\sfM'}, \cF_{\sfM'}, \bP_{\sfM'};\R)$ (resp. $L^{\infty}_c(\Omega_{\sfM}, \cF_{\sfM}, \bP_{\sfM};\R) \times L^{\infty}_c(\Omega_{\sfM'}, \cF_{\sfM'}, \bP_{\sfM'};\R)$),
\[ F_{\tX}=F_{\tX'} \implies \rho_{\sfM}(\tX)=\rho_{\sfM'}(\tX')\eqsp.\]
    \item \textbf{non-constant} on $L^{\infty}(\R)$ (resp. $L^{\infty}_c(\R)$) if there exists $\sfM,\sfM'\in\cM$ and $\tX,\tX' \in L^{\infty}(\Omega_{\sfM}, \cF_{\sfM}, \bP_{\sfM};\R) \times L^{\infty}(\Omega_{\sfM'}, \cF_{\sfM'}, \bP_{\sfM'};\R)$ (resp. $L^{\infty}_c(\Omega_{\sfM}, \cF_{\sfM}, \bP_{\sfM};\R) \times L^{\infty}_c(\Omega_{\sfM'}, \cF_{\sfM'}, \bP_{\sfM'};\R)$) such that 
    \[\rho_{\sfM}(\tX)\neq\rho_{\sfM'}(\tX')\eqsp.\]
    \item \textbf{monotone} on $L^{\infty}(\R)$ (resp. $L^{\infty}_c(\R)$) if for any $\sfM\in\cM$ and $\tX,\tY \in L^{\infty}(\Omega_{\sfM}, \cF_{\sfM}, \bP_{\sfM};\R)$ (resp. $L^{\infty}_c(\Omega_{\sfM}, \cF_{\sfM}, \bP_{\sfM};\R)$) such that $\tX\geq\tY$ a.s.,
    \[\rho_{\sfM}(\tX)\geq \rho_{\sfM}(\tY)\eqsp.\]
\end{itemize}
\end{definition}

For ease of notation, we drop the subscript $\sfM$ when considering risk measures for ambiguity-averse MDPs that are law-invariant on $L^{\infty}(\R)$ (resp. $L^{\infty}_c(\R)$) and note $\rho_{\sfM}(\tX)$ as $\rho(\tX)$ for $\tX \in L^{\infty}(\Omega_{\sfM}, \cF_{\sfM}, \bP_{\sfM};\R)$ (resp. $\tX \in L^{\infty}_c(\Omega_{\sfM}, \cF_{\sfM}, \bP_{\sfM};\R)$).

As expected, we also have the following characterization of law-invariant risk measures for ambiguity-averse MDPs:

\begin{proposition}[Existence and Uniqueness of $\varrho$]
    \label{prop: existence of varrho}
    If $\rho$ is a risk measure for ambiguity-averse MDPs that is law invariant on $L^{\infty}(\R)$, then there exists a unique functional $\varrho:\mathfrak{D} \rightarrow \R$ such that:
    \[  \rho_{\sfM}(\tX) = \varrho(F_{\tX}) \,,\quad \forall \sfM\in \cM, \forall \tX \in L^{\infty}(\Omega_{\sfM}, \cF_{\sfM}, \bP_{\sfM};\R)\eqsp.
    \]
\end{proposition}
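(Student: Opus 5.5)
We construct $\varrho$ directly, prove that it represents $\rho$, and then check uniqueness. The key input is the richness fact stated above: for every $F\in\mathfrak{D}$ there exist an instance $\sfM_F\in\cM_c\subseteq\cM$ and a variable $\tX_F\in L^{\infty}(\Omega_{\sfM_F},\cF_{\sfM_F},\bP_{\sfM_F};\R)$ with $F_{\tX_F}=F$. If one wants the construction spelled out, fix a nonatomic instance, for example the one in Example~\ref{ex: E, static main body}. In the static case the coordinate $\tX$ read off $\tP_0$ is uniform on $[0,1]$. Composing it with the generalized inverse $F^{-1}(u)=\inf\{x: F(x)\geq u\}$ gives a measurable map that is bounded, since $F$ attains $0$ and $1$. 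By the standard quantile transform its distribution function is $F$.

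For \textbf{existence}, invoke the axiom of choice to select, for each $F\in\mathfrak{D}$, one such pair $(\sfM_F,\tX_F)$, and set $\varrho(F):=\rho_{\sfM_F}(\tX_F)$. Now take any $\sfM\in\cM$ and $\tX\in L^{\infty}(\Omega_{\sfM},\cF_{\sfM},\bP_{\sfM};\R)$. Its distribution function $F_{\tX}$ is non-decreasing and right-continuous. Because $\tX$ is a.s. bounded, $F_{\tX}$ equals $0$ below $-\|\tX\|_\infty$ and $1$ at or above $\|\tX\|_\infty$, so $F_{\tX}\in\mathfrak{D}$. Since $F_{\tX}=F_{\tX_{F_{\tX}}}$, law-invariance of $\rho$ on $L^{\infty}(\R)$ (Definition~\ref{def:rm for admdp}, applied across the two possibly different instances $\sfM$ and $\sfM_{F_{\tX}}$) yields $\rho_{\sfM}(\tX)=\rho_{\sfM_{F_{\tX}}}(\tX_{F_{\tX}})=\varrho(F_{\tX})$. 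This also shows that $\varrho$ does not depend on which representative was chosen.

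For \textbf{uniqueness}, suppose $\varrho'$ also satisfies the representation. Fix $F\in\mathfrak{D}$ and take the realizing pair $(\sfM_F,\tX_F)$. Then $\varrho'(F)=\rho_{\sfM_F}(\tX_F)=\varrho(F)$, so $\varrho'=\varrho$ on all of $\mathfrak{D}$.

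The only real obstacle is surjectivity onto $\mathfrak{D}$: we need every bounded distribution to be realized on some instance's probability space, since otherwise $\varrho$ would be undetermined at some $F$. I would handle this by citing the nonatomicity remark preceding the proposition, backed by the explicit quantile-transform construction above. One point needs checking there: the uniform coordinate must be a genuine measurable function on the canonical space $\Omega_{\sfM}=(\Delta(\cS)^{\cS\times\cA})^{\N}$. This holds because the map $p\mapsto p_0(\mathsf{Start},a,\mathsf{Start})$ is a continuous coordinate projection, hence Borel measurable.
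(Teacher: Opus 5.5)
Your proof is correct and is essentially the paper's argument: every $F\in\mathfrak{D}$ is realized on a fixed nonatomic instance, cross-instance law-invariance transfers the value of $\rho$, and uniqueness follows from that surjectivity; the only difference is that you construct $\varrho$ by hand instead of citing Proposition 1 of \citet{delage2019dice} on the single space $\sfM_0$. Two cosmetic points: the axiom of choice is unnecessary, since $\varrho(F):=\rho_{\sfM_0}(F^{-1}(\tU))$ with $\tU$ the uniform coordinate is already a canonical choice, and because $F^{-1}(0)=\inf\{x: F(x)\geq 0\}=-\infty$ you should redefine $F^{-1}$ at $0$ (a null event) to obtain a genuinely real-valued, bounded variable.
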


The proof directly follows from Proposition 1 in \citet{delage2019dice}.

\begin{proof}
    Denoting by $\sfM_0$ the MDP of Example~\ref{ex: E, static} (whether we consider static or resampled kernels), we have that $(\Omega_{\sfM_0}, \cF_{\sfM_0}, \bP_{\sfM_0})$ is nonatomic, so by Proposition 1 in \citet{delage2019dice} there exists a unique functional $\varrho: \mathfrak{D} \to \R$ such that for any $\tX \in L^{\infty}(\Omega_{\sfM_0}, \cF_{\sfM_0}, \bP_{\sfM_0};\R)$:
    \[\rho_{\sfM_0}(\tX) = \varrho(F_{\tX}).
    \]
    Now, for any $\sfM\in\cM$ and $\tX \in L^{\infty}(\Omega_{\sfM}, \cF_{\sfM}, \bP_{\sfM};\R)$, because $(\Omega_{\sfM_0}, \cF_{\sfM_0}, \bP_{\sfM_0})$ is nonatomic, there exists $\tX_0 \in L^{\infty}(\Omega_{\sfM_0}, \cF_{\sfM_0}, \bP_{\sfM_0};\R)$ such that $F_{\tX} = F_{\tX_0}$. Therefore, by law invariance of $\rho$:
    \[
    \rho_{\sfM}(\tX) = \rho_{\sfM_0}(\tX_0) = \varrho(F_{\tX_0}) = \varrho(F_{\tX})\eqsp.
    \]
    Finally, if $\varrho,\varrho'$ are such that $\forall \sfM\in\cM, \forall \tX \in L^{\infty}(\Omega_{\sfM}, \cF_{\sfM}, \bP_{\sfM};\R), \quad \varrho'(F_{\tX}) = \rho_{\sfM}(\tX) = \varrho(F_{\tX})$, then, because $(\Omega_{\sfM_0}, \cF_{\sfM_0}, \bP_{\sfM_0})$ is nonatomic, for any $F \in \mathfrak{D}$ there exists $\tX \in L^{\infty}(\Omega_{\sfM_0}, \cF_{\sfM_0}, \bP_{\sfM_0};\R)$ such that $F_{\tX} = F$, and therefore $\varrho'(F) = \varrho(F)$. This shows uniqueness of $\varrho$.
\end{proof}

\begin{remark}\label{rmk: sum of variables}
    When adding or multiplying random variables $\tX$ and $\tY$, it is important to note that the risk measure $\rho$ is only defined on bounded random variables living in one of the probability spaces $\parenthese{(\Omega_{\sfM}, \cF_{\sfM}, \bP_{\sfM})}_{\sfM\in\cM}$. Consequently, to make sense of expressions like $\rho(\tX + \tY)$ or $\rho(\tX \cdot \tY)$, the random variables $\tX$ and $\tY$ must be defined on the same probability space $(\Omega_{\sfM}, \cF_{\sfM}, \bP_{\sfM})$ for some ambiguity-averse MDP $\sfM\in\cM$. When we state properties like additivity or multiplicativity for independent random variables, we implicitly assume they are defined on the same probability space and are independent with respect to the probability measure on that space. For instance, a rigorous formulation of Statement~\ref{stat: additivity} of Proposition~\ref{prop: additive and homogeneous} would be:

    Additive independence on $L^{\infty}_c$: 
    \[
        \text{For any } \sfM\in\cM \text{ and independent } \tX,\tY\in L^{\infty}_c(\Omega_{\sfM}, \cF_{\sfM}, \bP_{\sfM};\R), \text{ we have } \rho( \tX + \tY) = \rho(\tX) + \rho(\tY).
    \]
\end{remark}

\subsection{More details on resampled kernels}\label{app: dp resampled}
In this section, we provide more details on the resampled kernels setup.

\paragraph{Value functions for sequences of kernels}
For $\pi \in \PiH$ and $s \in \cS$, we naturally extend the definition of the value function for nominal MDPs (see Equation~\eqref{eq:value function nominal}) to the case of a non-random transition kernel sequence $(P_t)_{t\in\N}$ that changes over time. In this setup, the value function $V^{\pi,(P_t)_{t\in\N}} \in \R^{\cS}$ depends on $\pi$ and $(P_t)_{t\in\N}$ and can be written as, for $s \in \cS$,
\begin{equation*}\label{eq:value function changing}
    V^{\pi,(P_t)_{t\in\N}}(s) =  \E^{\pi,(P_t)_{t\in\N}}_s\Big(\sum_{t=0}^{\infty} \gamma^t r(\tilde{S}_t, \tilde{A}_t, \tilde{S}_{t+1}) \Big)\eqsp.
\end{equation*}
where $\E^{\pi,(P_t)_{t\in\N}}_s$ is the expectation over the random trajectories $(\ts_t,\ta_t)_{t \in \N}$ starting in state $s$, with the distribution over trajectories induced by $\pi$ and $(P_t)_{t\in\N}$. In the paper, we choose to overload the notation $V^{\pi,\tP}$ when $\tP = (\tP_t)_{t\in\N}$ for the sake of simplicity (instead of writing $V^{\pi,(\tP_t)_{t\in\N}}$).
\paragraph{Dynamic programming for non-i.i.d. kernels} We provide here a brief discussion of the case of resampled kernels with non-i.i.d. kernels. In particular, assume that $\tP_t \sim \nu_t$ for each period $t \in \N$ with each $\nu_t$ having a product structure, $\tP_t \indep \tP_t'$ for each $t \neq t'$, but where the law $\nu_t$ may be different at each period. To recover a dynamic programming equation, one needs to index value functions with a time index $t \in \N$, a situation similar to the case of finite horizon MDPs (see Chapter 4 in~\citet{puterman2014markov}). 
For distributionally robust MDPs,~\citet{xu2012distributionally}, the case of non-i.i.d. resampled kernels is called {\em nonstationary model} and the case of i.i.d. resampled kernels is called {\em stationary model}, see beginning of Section 4 in~\citet{xu2012distributionally}; the term {\em stationary}/{\em nonstationary} refers to the to the stationarity of the sequence $(\nu_t)_{t \geq 0}$ which is a stationary sequence of $\nu_{t} = \nu_{t'}, \forall \; t,t' \in \N$. A central result shown by~\citet{xu2012distributionally} is the case of {\bf i.i.d. resampled kernels and non-i.i.d. resampled kernels are equivalent} in the sense that they yield the same optimal value functions and optimal policies (see Theorem 4.1 and Theorem 4.2 in~\citet{xu2012distributionally}).
This diminishes the practical interest of the non-i.i.d. case, and this is one of the main reasons why we left a more formal analysis of this non-i.i.d. framework for future work.
\section{Connection with previous models}\label{app:reformulations}
In this section, we provide rigorous reformulations of several of the common variants of MDPs as ambiguity-averse MDPs. We also provide a literature review, discussing the connection of ambiguity-averse MDPs with risk-averse MDPs.
\subsection{Reformulations and related work on MDPs with uncertain parameters}
In this appendix we provide a rigorous reformulation of some of the classical variants of MDP models as ambiguity-averse MDPs. We summarize these reformulations in Table~\ref{tab:MDP variants}.
\paragraph{Nominal MDPs} Consider a nominal MDP with transition probability $P \in \Delta(\cS)^{\cS \times \cA}$. We note that in the standard MDP model, the transition probabilities are fixed over time. Therefore, the transition kernel sequence is static. Consider a Dirac measure $\nu$ that puts all its mass at $P:\nu = \delta_{P}$ (we denote by $\delta_{x}$ the distribution that puts a mass of $1$ on an element $x$). Then $\rho(\mu \tr V^{\pi,\tP}) =\rho(\mu \tr V^{\pi,P})$. We additionally choose $\rho=\E^{\nu}$. Because $\mu \tr V^{\pi,P}$ is a constant, we get that $\rho(\mu \tr V^{\pi,P}) = \mu \tr V^{\pi,P} $ and we recover the standard objective for nominal MDPs. This shows the following proposition.
\begin{proposition}
    Any nominal MDP with transition kernel $P$ can be reformulated as an ambiguity averse MDP with $\nu = \delta_{P}$ and $\rho(\tX) = \E^{\nu}(\tX)$.
\end{proposition}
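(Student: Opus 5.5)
The statement is that any nominal MDP with kernel $P$ can be cast as an ambiguity-averse MDP with $\nu=\delta_P$ and $\rho=\E^{\nu}$. The plan is to verify directly that this instance reproduces both the nominal objective and the nominal value functions, for every policy $\pi\in\PiH$. We take static kernels, matching the fact that a nominal MDP has time-homogeneous transitions. The resampled case behaves identically here, because $\delta_P^{\otimes\N}$ is also a Dirac mass, at the constant sequence $(P,P,\dots)$.

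First, fix $\sfM_{\nu}=(\cS,\cA,r,\delta_P,\gamma,\mu)$. In the formalism of Appendix~\ref{app:aamdp formalism}, the canonical process under $\bP_{\sfM}=\hat\nu$ satisfies $\tP_t=P$ for all $t$, $\bP_{\sfM}$-almost surely. It follows that $V^{\pi,\tP}(s)=V^{\pi,P}(s)$ almost surely, for every $s\in\cS$ and $\pi\in\PiH$. The nominal value function is a deterministic function of the kernel sequence, given by the expectation over trajectories in \eqref{eq:value function nominal}, and here that sequence is almost surely the constant one. The same almost-sure equality holds for $\mu\tr V^{\pi,\tP}$.

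Second, apply $\rho=\E^{\nu}$, the expectation under $\bP_{\sfM}$. The expectation of an almost surely constant bounded random variable equals that constant. This gives
\[
\rho\big(\mu\tr V^{\pi,\tP}\big)=\mu\tr V^{\pi,P} \quad\text{and}\quad V^{\pi,\nu,\rho}(s)=V^{\pi,P}(s)\quad \text{for all } s\in\cS,\ \pi\in\PiH .
\]
Taking suprema over $\PiH$ then shows that the objective \eqref{eq:AMDP} coincides with the nominal objective, so optimal policies coincide. We also get $V^{\nu,\rho}=V^P$.

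As a side remark, one can check that the Bellman operators coincide as well: $T^{\pi,\nu,\rho}=T^{\pi,P}$, by the same almost-sure-constant argument applied to $T^{\pi,\tP}v(s)$.

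The only point needing care is measure-theoretic rather than substantive. We must make sure the random variable $V^{\pi,\tP}(s)$ is well defined and measurable on $(\Omega_{\sfM},\cF_{\sfM})$, so that ``almost surely constant'' makes sense. This follows because $V^{\pi,(P_t)}$ is an absolutely convergent series whose terms are finite sums of products of the coordinates $P_t(s,a,s')$, hence measurable in the canonical coordinates. Beyond this, the proof is immediate.
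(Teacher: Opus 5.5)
Your proposal is correct and follows essentially the same argument as the paper. Under $\nu=\delta_P$ the return $\mu\tr V^{\pi,\tP}$ is almost surely the constant $\mu\tr V^{\pi,P}$, so applying $\rho=\E^{\nu}$ returns that constant; your additions (the resampled case, measurability, and equality of value functions and Bellman operators) are valid refinements that the paper does not spell out.
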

\paragraph{Robust MDPs and Optimistic MDPs} There are two main frameworks for robust MDPs~\cite{iyengar2005robust,nilim2005robust,wiesemann2013robust}. An uncertainty set $\cP \subseteq \Delta(\cS)^{\cS \times \cA}$ is {\em s-rectangular} if it satisfies the following Cartesian product property: $\cP = \times_{s \in \cS} \cP_{s}$ for some $\cP_{s} \in \Delta(\cS)^{\cA}$ for each state $s \in \cS$. Note that if $\nu$ is a distribution over $\Delta(\cS)^{\cS \times \cA}$ with product structure, then its support $\supp(\nu) \subseteq \Delta(\cS)^{\cS \times \cA}$ is a compact, s-rectangular uncertainty set.

We first focus on the model introduced in the seminal paper on s-rectangular robust MDPs~\cite{wiesemann2013robust}, which analyzes
\begin{equation}\label{eq:robust MDP}
  \sup_{\pi \in \PiH}  \inf_{P \in \cP} \mu \tr V^{\pi,P}
\end{equation}
for some convex compact s-rectangular uncertainty set $\cP$.
Note that here the transition probabilities $P$ are fixed over time, so that we are in the static kernel case as described in Definition~\ref{def: static resampled}. This model can be recovered by choosing a distribution $\nu$ supported on $\cP$ for the distribution of the random variable $\tP$, and $\cP$ such that $\supp(\nu) = \cP$, and $\rho(\tX) = \essInf(\tX)$. In this case, we indeed have, following Lemma~\ref{lem:essinf esssup = inf sup}, that
\[\essInf \mu \tr V^{\pi,\tilde{P}} = \inf_{P \in \supp(\nu)} \mu \tr V^{\pi,P} = \inf_{P \in \cP} \mu \tr V^{\pi,P}\]
so that we have shown the following proposition.
\begin{proposition}
   Consider a robust MDP as in~\eqref{eq:robust MDP} with convex compact s-rectangular uncertainty set $\cP$. We can reformulate any robust MDPs as in~\eqref{eq:robust MDP} with an ambiguity-averse MDP with product structure and static kernels with distribution $\nu$ supported on $\cP$ and $\rho(\tX) = \essInf(\tX)$.
\end{proposition}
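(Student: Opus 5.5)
The plan is to check directly that choosing $\nu$ with $\supp(\nu) = \cP$ and $\rho = \essInf$ under static kernels makes the ambiguity-averse objective~\eqref{eq:AMDP} coincide with~\eqref{eq:robust MDP} for every history-dependent policy, and that this instance lies in $\cM_{c}$. Everything reduces to Lemma~\ref{lem:essinf esssup = inf sup} plus one continuity argument.

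\textbf{Constructing $\nu$.} Write $\cP = \times_{s\in\cS}\cP_s$, where each $\cP_s \subseteq \Delta(\cS)^{\cA}$ is convex and compact; the factors inherit these properties from $\cP$. For each $s$, pick a probability measure $\nu_s$ on $\cP_s$ with $\supp(\nu_s) = \cP_s$. One concrete choice is a countable dense subset $\{p_k\}$ of $\cP_s$ with weights $2^{-k}$. Another, closer to the nonatomic spirit of Appendix~\ref{app:aamdp formalism}, is the normalized Lebesgue measure on the affine hull of $\cP_s$, whose support is the closure of the relative interior, namely $\cP_s$ itself. Set $\nu = \bigotimes_{s}\nu_s$. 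This has product structure by construction. The support of a finite product measure is the product of the supports, so $\supp(\nu) = \cP$, which is convex. Hence $\sfM_\nu \in \cM_{c}$, with static kernels $\tP_t = \tP_0 \sim \nu$.

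\textbf{Identifying the objectives.} Fix $\pi \in \PiH$ and define $f(P) = \mu\tr V^{\pi,P}$, where $P$ is used at every period. The main step is that $f$ is continuous on $\Delta(\cS)^{\cS\times\cA}$. Each finite-horizon truncation $\sum_{t<T}\gamma^t\E^{\pi,P}_\mu r(\cdot)$ is a polynomial in the entries of $P$ for fixed $\pi$, because the probability of each length-$T$ history is a product of $\pi$-terms and entries of $P$. The tail is bounded by $\gamma^T\|r\|_\infty/(1-\gamma)$ uniformly in $P$, so $f$ is a uniform limit of continuous functions and is therefore continuous. Then $\tX = f(\tP_0)$ is bounded.

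Next, I show $\supp(\tX) = \overline{f(\cP)} = f(\cP)$, using compactness of $\cP$ and continuity of $f$. For inclusion, $\bP(\tX \in f(\cP)) = 1$ and $f(\cP)$ is closed, so $\supp(\tX) \subseteq f(\cP)$. Conversely, take $x = f(P)$ with $P \in \cP$. Every neighborhood of $x$ pulls back under $f$ to an open neighborhood of $P$, and that neighborhood has positive $\nu$-mass because $P \in \supp(\nu)$. Hence $x \in \supp(\tX)$.

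Lemma~\ref{lem:essinf esssup = inf sup} then gives
\[\essInf(\mu\tr V^{\pi,\tP}) = \inf \supp(\tX) = \inf_{P\in\cP}\mu\tr V^{\pi,P}.\]
Taking $\sup_{\pi\in\PiH}$ on both sides shows that the two problems coincide, with the same optimal values and the same optimal policies.

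\textbf{Expected obstacle.} The only delicate points are the continuity of $P \mapsto V^{\pi,P}$ for history-dependent $\pi$, which the truncation argument handles, and the existence of a full-support $\nu$ with product structure, which is handled above. Neither is deep. This also justifies the equality $\inf_{P\in\supp(\nu)} = \inf_{P\in\cP}$ that is asserted informally in the text preceding the proposition.
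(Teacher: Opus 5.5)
Your proposal is correct and follows the same route as the paper: choose $\nu$ with $\supp(\nu)=\cP$ under static kernels and apply Lemma~\ref{lem:essinf esssup = inf sup} to get $\essInf(\mu\tr V^{\pi,\tP}) = \inf_{P\in\cP}\mu\tr V^{\pi,P}$ for every $\pi\in\PiH$. The paper asserts the step $\essInf(\mu\tr V^{\pi,\tP}) = \inf_{P\in\supp(\nu)}\mu\tr V^{\pi,P}$ without detail; your explicit product construction of $\nu$ and your continuity argument for $P\mapsto \mu\tr V^{\pi,P}$, which gives $\supp(\mu\tr V^{\pi,\tP}) = \{\mu\tr V^{\pi,P} : P\in\cP\}$, make that step rigorous.
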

A second popular model of RMDPs allows the transition kernel to change over time and focuses on solving
\begin{equation}\label{eq:robust MDP - markov}
  \sup_{\pi \in \PiH}  \inf_{P \in \cP^{\N}} \mu \tr V^{\pi,P}
\end{equation}
where $\cP^{\N}$ is the set of sequences of kernels: $\cP^{\N}  = \{\left(P_t\right)_{t \in \N} \; | \; P_t \in \cP, \forall \; t \in \N\}$ for some uncertainty set $\cP \subseteq \Delta(\cS)^{\cS \times \cA}$. Note that Proposition 2.2 in~\citet{grand2023beyond} shows that \eqref{eq:robust MDP} and \eqref{eq:robust MDP - markov} coincide (Proposition 2.2 in~\citet{grand2023beyond} in fact shows a stronger statement for history-dependent transition probabilities, but these include as a special case Markovian transition probabilities, so we indeed have that the two optimization problems \eqref{eq:robust MDP} and \eqref{eq:robust MDP - markov} coincide). 
We can reformulate~\eqref{eq:robust MDP - markov} as an ambiguity-averse MDP with the same setup as for~\eqref{eq:robust MDP} but where we allow for resampled kernels. Indeed, writing $\nu^{\otimes\N}$ for the law of the i.i.d. sequence of kernels $(\tP_t)_{t \in \N}$ where $\tP_t \sim \nu$ for each period $t \in \N$, we get that $\supp(\nu^{\otimes \N}) = \supp(\nu)^{\N}$, so that we obtain that for resampled kernels
\[\essInf \left( \mu \tr V^{\pi,\tilde{P}} \right) = \inf_{(P_t)_{t \in \N} \in \supp(\nu^{\otimes \N})} \mu \tr V^{\pi,P} = \inf_{(P_t)_{t \in \N} \in \supp(\nu)^{\N}}  \mu \tr V^{\pi,P} = \inf_{(P_t)_{t \in \N} \in \cP^{\N}}  \mu \tr V^{\pi,P}  \]
for $\cP = \supp(\nu)$.
 We have shown the following proposition.
\begin{proposition}
   Consider a robust MDP as in~\eqref{eq:robust MDP - markov} with convex compact s-rectangular uncertainty set $\cP$. We can reformulate any robust MDPs as in~\eqref{eq:robust MDP - markov} with an ambiguity-averse MDP with product structure and resampled kernels with distribution $\nu$ supported on $\cP$ for i.i.d. kernels and $\rho(\tX) = \essInf(\tX)$.
\end{proposition}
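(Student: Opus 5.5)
The plan is to construct the distribution $\nu$ explicitly from $\cP$, and then show that, for every fixed $\pi\in\PiH$, the random variable $\mu\tr V^{\pi,\tP}$ with $\tP\sim\nu^{\otimes\N}$ has essential infimum equal to $\inf_{P\in\cP^{\N}}\mu\tr V^{\pi,P}$. Taking $\sup_{\pi\in\PiH}$ on both sides then identifies the objective of~\eqref{eq:robust MDP - markov} with~\eqref{eq:AMDP} for $\rho=\essInf$ and resampled kernels.

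\textbf{Construction of $\nu$.} Write $\cP=\times_{s\in\cS}\cP_s$ using s-rectangularity. Each $\cP_s\subseteq\Delta(\cS)^{\cA}$ is compact and convex, hence separable. For each $s$ I choose a countable dense sequence $(q^s_k)_{k\geq 1}$ in $\cP_s$ and set $\nu_s=\sum_k 2^{-k}\delta_{q^s_k}$. Then $\supp(\nu_s)=\cP_s$: every open set meeting $\cP_s$ contains some $q^s_k$ and so has positive mass, and $\nu_s(\cP_s)=1$ with $\cP_s$ closed. Let $\nu=\otimes_{s}\nu_s$. This $\nu$ has product structure by construction. Its support is $\times_s\supp(\nu_s)=\cP$, because the support of a finite product measure is the product of the supports. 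Since $\cP$ is convex, $\sfM_\nu\in\cM_c$. By the same product argument applied coordinatewise on cylinder sets, $\supp(\nu^{\otimes\N})=\cP^{\N}$ in the product topology on $(\Delta(\cS)^{\cS\times\cA})^{\N}$.

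\textbf{Continuity of the return.} Fix $\pi\in\PiH$ and define $f(P)=\mu\tr V^{\pi,P}$ for sequences $P=(P_t)_{t\in\N}$. I will show that $f$ is continuous for the product topology. For each horizon $N$, the truncated return $f_N$ depends only on $P_0,\dots,P_{N-1}$. It is a finite sum of products of transition probabilities, policy weights and rewards, so it is polynomial, hence continuous, in those finitely many coordinates. Moreover $|f-f_N|\leq \gamma^N\|r\|_\infty/(1-\gamma)$ uniformly in $P$. Thus $f$ is a uniform limit of continuous functions and is continuous.

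\textbf{Identifying the essential infimum.} By Lemma~\ref{lem:essinf esssup = inf sup}, $\essInf f(\tP)=\inf\supp(f(\tP))$. I claim that $\supp(f(\tP))=f(\cP^{\N})$. The set $\cP^{\N}$ is compact by Tychonoff, so $f(\cP^{\N})$ is compact, hence closed. It has full measure, since $\bP(\tP\in\cP^{\N})=1$. This gives $\supp(f(\tP))\subseteq f(\cP^{\N})$.

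For the reverse inclusion, take $x=f(P)$ with $P\in\cP^{\N}$ and any $\epsilon>0$. The set $f^{-1}((x-\epsilon,x+\epsilon))$ is an open neighbourhood of $P$, which lies in the support of $\nu^{\otimes\N}$, so it has positive $\nu^{\otimes\N}$-measure. Hence every neighbourhood of $x$ receives positive mass from $f(\tP)$, and $x\in\supp(f(\tP))$. Therefore $\essInf f(\tP)=\min_{P\in\cP^{\N}}f(P)$, which is the inner infimum in~\eqref{eq:robust MDP - markov}, and taking $\sup_{\pi}$ concludes.

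\textbf{Main obstacle.} The delicate points are the infinite-product ones: the identity $\supp(\nu^{\otimes\N})=\cP^{\N}$ and the continuity of $f$ in the product topology. I handle both through cylinder sets and the uniform truncation bound $\gamma^N\|r\|_\infty/(1-\gamma)$.
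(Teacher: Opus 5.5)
Your proposal is correct and follows the same route as the paper: identify $\supp(\nu^{\otimes\N})=\supp(\nu)^{\N}=\cP^{\N}$, then use Lemma~\ref{lem:essinf esssup = inf sup} to write $\essInf(\mu\tr V^{\pi,\tP})$ as the infimum of $\mu\tr V^{\pi,P}$ over $P\in\cP^{\N}$. Your version is more careful than the paper's in two places: you explicitly construct a product-structured $\nu$ whose support is exactly $\cP$, and you prove that $P\mapsto\mu\tr V^{\pi,P}$ is continuous in the product topology, which is what justifies the paper's unstated step that the essential infimum of the return equals its infimum over $\supp(\nu^{\otimes\N})$.
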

We also note that robust MDPs with non-rectangular uncertainty sets have been studied in the literature, e.g. the r-rectangular model~\cite{goyal2023robust} or model based on $\ell_1$-norm between the kernels~\cite{kumar2025non}, but dynamic programming does not hold for these models {\em in all generality}, i.e. for all possible choices of instantaneous rewards. In particular, it is shown in~\citet{grand2024tractable} that s-rectangularity (implied by the product structure for the distribution $\nu$ over kernels) is a necessary condition for dynamic programming to hold, and r-rectangular models can only satisfy weaker forms of dynamic programming (i.e. they require strong assumptions, e.g. that the rewards do not depend on the next state, see~\citet{grand2024tractable}).

Finally, we note that the case of optimistic MDPs, which focus on $\sup_{\pi \in \PiH}  \sup_{P \in \cP} \mu \tr V^{\pi,P}$ can be reformulated analogously as for RMDPs by choosing a distribution $\nu$ over the transition probabilities in $\Delta(\cS)^{\cS \times \cA}$ with $\supp(\nu) = \cP$ and $\rho = \essSup$. As noted in~\citet{iyengar2005robust,goh2018data} and proved in~\citet{givan2000bounded}, the results for robust MDPs extend to optimistic MDPs and in particular \cref{cond: PE,cond: PO} hold for $\rho=\essSup$ in both the static and resampled kernel cases.
\paragraph{Multi-model MDPs} Multi-model MDPs (MMDPs) consider an optimization problem of the form
\begin{equation}\label{eq:multimodel MDP}
 \sup_{\pi \in \PiH} \E^{\nu} (\mu \tr V^{\pi,\tP}).
\end{equation}
This model has received many different names in the literature. For instance, ~\citet{le2007robust,steimle2021multi} focus on this model in the case of {\em finite-horizon} MDPs, and ~\citet{le2007robust} also treats the case of discounted return ({\em infinite-horizon} MDPs). The multi-model MDP framework has been reintroduced and studied independently over the years, and is also called {\em soft-robust MDPs}~\cite{derman2018soft,lobo2020soft}, {\em concurrent MDPs}~\cite{buchholz2019computation,buchholz2020concurrent}, and {\em Multiple-Environment MDPs (MEMDPs)}~\cite{raskin2014multiple}. More precisely, MEMDPs with prior semantics draw the true environment once from a fixed prior and are therefore equivalent to static MMDPs with $\rho=\E^\nu$~\cite{chatterjee2020multiple,bordais2026multi}. The adversarial semantics of MEMDPs instead ask for guarantees against every possible environment; this is the same as a static robust MDP with a finite, generally non-convex uncertainty set.
The optimization problem~\eqref{eq:multimodel MDP} is known to be NP-hard for static kernels, even in the case of supports of cardinality two, e.g.~\citet{le2007robust} 
proved that the following problem is NP-hard
\[
  \sup_{\pi \in \PiH} \frac{1}{2}\mu\tr V^{\pi,P_1} + \frac{1}{2}\mu\tr V^{\pi,P_2}.
\]
The same result for discounted return (infinite-horizon MDPs) is present in~\cite{buchholz2019computation}.
The following reformulation of MMDPs as ambiguity-averse MDPs is straightforward.
\begin{proposition}
Consider a multi-model MDP as in~\eqref{eq:multimodel MDP}, with static or resampled kernels. Then we can model it as an ambiguity-averse MDP $\sfM_{\nu}$ with the distribution $\nu$ using $\rho(\tX) = \E^{\nu}\left(\tX\right)$.
\end{proposition}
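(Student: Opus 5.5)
The statement to prove is that a multi-model MDP as in~\eqref{eq:multimodel MDP}, with static or resampled kernels, is the ambiguity-averse MDP $\sfM_{\nu}$ with $\rho(\tX)=\E^{\nu}(\tX)$. The plan is to work in the probability space $(\Omega_{\sfM},\cF_{\sfM},\bP_{\sfM})$ of Appendix~\ref{app:aamdp formalism}, and to show that the two objectives coincide for every $\pi\in\PiH$. Once the objective functions are equal pointwise in $\pi$, their suprema over $\PiH$ are equal and the optimal policies are the same.

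We define $\hat\nu$ as in Definition~\ref{def: ambiguity-averse MDPs - appendix}:
\begin{itemize}
\item in the static case, $\hat\nu$ is the law of the constant sequence $(\tP_0,\tP_0,\dots)$ with $\tP_0\sim\nu$;
\item in the resampled case, $\hat\nu=\nu^{\otimes\N}$.
\end{itemize}
We then set $\rho_{\sfM}(\tX)=\E^{\hat\nu}(\tX)$. This is the expectation under the law of the kernel sequence, which the statement writes as $\E^{\nu}$.

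The first step is to check that the random return is a well-defined bounded random variable, so that $\rho$ can be applied to it. For fixed $\pi$, consider the map $(P_t)_{t\in\N}\mapsto \mu\tr V^{\pi,(P_t)}$.
\begin{itemize}
\item \emph{Boundedness.} The map is bounded by $\|r\|_\infty/(1-\gamma)$.
\item \emph{Measurability.} The truncated sum up to horizon $T$ is a finite sum of products of kernel entries and policy probabilities, hence a polynomial in finitely many coordinates and therefore continuous. Truncation changes the value by at most $\gamma^{T}\|r\|_\infty/(1-\gamma)$, so the map is a uniform limit of measurable functions on the canonical space and is therefore measurable.
\end{itemize}
It follows that $\mu\tr V^{\pi,\tP}\in L^\infty(\Omega_{\sfM},\cF_{\sfM},\bP_{\sfM};\R)$.

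The second step is to compare the objectives. By definition, $\rho(\mu\tr V^{\pi,\tP})=\E^{\hat\nu}(\mu\tr V^{\pi,\tP})$, and this is exactly the MMDP objective in~\eqref{eq:multimodel MDP}.
\begin{itemize}
\item In the static case, $V^{\pi,\tP}=V^{\pi,\tP_0}$, so we recover the classical $\E^{\nu}(\mu\tr V^{\pi,\tP_0})$.
\item In the resampled case, the multi-model objective is interpreted as the expectation over i.i.d. kernels, and the two sides agree by definition.
\end{itemize}

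The only step with any substance is the measurability argument in the first step, and the truncation argument handles it routinely. The rest is unpacking definitions, in the same way as the preceding propositions for nominal and robust MDPs.
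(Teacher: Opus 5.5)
Your proposal is correct and follows the same definitional unpacking as the paper, which simply calls this reformulation straightforward and gives no further argument. Your truncation-based check that $\mu^\top V^{\pi,\tilde P}$ is a bounded measurable function on the canonical space, and your remark that $\E^{\nu}$ really means expectation under the law of the whole kernel sequence, are sound extra details the paper leaves implicit.
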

We note that multi-model MDPs also arise as a special case of distributionally robust MDPs (see Appendix~\ref{app: dp resampled} and Equation~\ref{eq:drmdp} below) where there is only a single possible distribution $\nu$ (and not a full set of possible probability distributions $\Nu$). Given the seminal results for distributionally robust MDPs as in Section 4 in~\citet{xu2012distributionally}, we know that multi-model MDPs with {\em resampled} kernels satisfy \cref{cond: PE,cond: PO}. It is known that multi-model MDPs with {\em static} kernels do not satisfy \cref{cond: PE,cond: PO}, and for the sake of completeness, we reprove this in Example~\ref{ex: E, static main body - continued}.

\paragraph{Percentile optimization}
Percentile optimization~\cite{mannor2004bias,delage2010percentile,petrik2019beyond,zhang2024soft} optimizes for the Value-at-Risk (\var) at a certain risk level $\alpha \in (0,1)$, i.e. it optimizes
\begin{equation}\label{eq:percentile optimization}
    \sup_{\pi \in \PiH} \var_{\alpha}\left(\mu \tr V^{\pi,\tP})\right).
\end{equation}
where $\tP$ is random and follows a distribution $\nu$ (typically assumed uniform).
Note that it is NP-hard to compute an optimal policy for general distributions (corollary 1 in~\citet{delage2010percentile}).
The following reformulation is straightforward.
\begin{proposition}
    Consider a percentile optimization problem as in~\eqref{eq:percentile optimization}. Then it can be formulated as an ambiguity-averse MDP with $\rho(\tX) = \var_{\alpha}(\tX)$ and a distribution $\nu$ for static kernels.
\end{proposition}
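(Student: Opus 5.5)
The statement is a proposition: a percentile-optimization problem~\eqref{eq:percentile optimization} can be written as an ambiguity-averse MDP with $\rho=\var_{\alpha}$ and static kernels drawn from $\nu$. The proof is a direct unfolding of definitions, following the same template the paper used for nominal, robust and multi-model MDPs.

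The plan has three steps.

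\begin{enumerate}
\item \textbf{Build the instance.} Given the percentile-optimization data $(\cS,\cA,r,\gamma,\mu)$ and the distribution $\nu$ of the uncertain kernel $\tP$, set $\sfM_{\hat\nu}=(\cS,\cA,r,\hat\nu,\gamma,\mu)$ in the sense of Definition~\ref{def: ambiguity-averse MDPs - appendix}. Here $\hat\nu$ is the law of the constant sequence $(\tP_t)_{t\in\N}$ with $\tP_t=\tP_0\sim\nu$, i.e.\ the static-kernel case of Definition~\ref{def: static resampled}. Since $\Delta(\cS)^{\cS\times\cA}$ is compact, $\nu$ has compact support, so this is a legitimate instance of $\cM$.

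\item \textbf{Match the objectives.} Under static kernels, for every realization the value function of the canonical process coincides with the nominal value function $V^{\pi,P_0}$ of~\eqref{eq:value function nominal}. Hence $\mu\tr V^{\pi,\tP}$ in~\eqref{eq:AMDP} is the same random variable, $\tP_0\mapsto \mu\tr V^{\pi,\tP_0}$, as in~\eqref{eq:percentile optimization}.

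\item \textbf{Check that $\rho$ is well defined.} This random variable is bounded by $\|r\|_\infty/(1-\gamma)$. It is measurable because, for fixed $\pi\in\PiH$, $P\mapsto V^{\pi,P}(s)$ is a uniform limit of finite-horizon sums, each a polynomial in the entries of $P$, so $V^{\pi,P}(s)$ is Borel. Therefore $\rho=\var_{\alpha}$ is applied to an element of $L^{\infty}(\Omega_{\sfM},\cF_{\sfM},\bP_{\sfM};\R)$. Since $\var_\alpha$ is law-invariant, its value depends only on the law of $\mu\tr V^{\pi,\tP}$ under $\nu$, which is exactly the law used in~\eqref{eq:percentile optimization}.
\end{enumerate}

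Taking $\sup_{\pi\in\PiH}$ on both sides shows the two problems coincide for every policy, which proves the proposition.

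The only step requiring care is the measurability in step~3, since the rest is bookkeeping. Note that no convexity or product structure is needed, because the statement only asks for membership in $\cM$, not in $\cM_c$.
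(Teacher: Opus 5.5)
Your proposal is correct and matches the paper, which simply calls this reformulation straightforward: take static kernels with $\tP_0\sim\nu$ and $\rho=\var_{\alpha}$, so the objective in~\eqref{eq:AMDP} is literally the percentile-optimization objective. Your added checks (the bound $\|r\|_\infty/(1-\gamma)$, measurability and in fact continuity of $P\mapsto \mu\tr V^{\pi,P}$ as a uniform limit of polynomials, and law-invariance of $\var_{\alpha}$ to pass to the canonical space) make explicit what the paper leaves implicit.
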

\paragraph{Other recent work} The authors in~\citet{lobo2020soft} consider optimizing for a convex combination of expectation and the Conditional Value-at-Risk (\cvar), i.e. optimizing $\pi \mapsto (1-\lambda) \E(\mu \tr V^{\pi,\tP}) + \lambda \cvar(\mu \tr V^{\pi,\tP})$ for some $\lambda \in [0,1]$. This optimization problem can be cast as an ambiguity-averse MDP using the risk measure $\rho(\tX) = (1-\lambda) \E(\tX) + \lambda \cvar(\tX)$.

\subsection{Other related work}\label{app:lit review}

\paragraph{Robust MDPs with uncertain rewards} An alternative robust MDP model considers that the rewards $r \in \R^{\cS \times \cA \times \cS}$ are uncertain while the transitions are known and fixed, or that both rewards and transitions are uncertain, e.g.~\citet{givan2000bounded,eysenbach2019if,brekelmans2022your,ashlag2025state}; it is also possible to analyze distributionally robust MDPs with both uncertain rewards and transitions, as in the seminal papers~\cite{xu2012distributionally}. RMDPs with uncertain rewards represent a simpler optimization problem compared to RMDPs with uncertain transitions, because value functions are linear in the rewards (this follows from the linearity of the expectation operator) but depend in a non-linear fashion on the transition probabilities. In our paper, we decide to focus solely on the case of uncertain kernels for the sake of conciseness, noting that our framework could be extended to also model uncertain rewards, at the cost of heavier notations.

\paragraph{Distributionally robust MDPs}
DRMDPs are introduced in~\citet{xu2012distributionally} as the following optimization problem:
\begin{equation}\label{eq:drmdp}
   \sup_{\pi \in \PiH} \inf_{\nu \in \Nu} \E^{\nu}(\mu \tr V^{\pi,\tP})
\end{equation}
where $\Nu$ is a set of probability distribution over the sequences of transition probabilities $(P_t)_{t \in \N}$. For ambiguity-averse MDPs to cover this setting, we would need to define them with sets of distributions $\Nu$ instead of a single distribution $\nu$. Note that DRMDPs also consider the case of uncertain transition kernels that can change over time in a non-i.i.d. fashion, so that we would need a different set $\Nu_t$ of possible distributions $\nu_t$ for the transition probabilities at each period $t$. This complicates the formulation of the optimization problem~\eqref{eq:AMDP}, which requires to define risk measures with arguments to have multiple distributions (one per choice of $\nu \in \Nu$). In principle, this can be done with the notion of an ambiguous probability space (see~\citet{delage2019dice}), but it appears beyond the scope of this paper. 

\paragraph{Necessary and sufficient conditions for dynamic programming} There are numerous papers obtaining necessary and sufficient conditions for the existence of dynamic programming in the context of MDPs. The authors in~\citet{grand2024tractable} focus on robust MDPs with an uncertainty set $\cP$, and show that a dynamic programming equation holds for policy evaluation if and only if $\cP$ is an s-rectangular uncertainty set. The authors in~\citet{marthe2023beyond} focus on risk-averse MDPs with a finite horizon (in the framework of distributional reinforcement learning) and show that for a Bellman optimality equation to hold it is necessary that the risk measure is $\erm_{\beta}$ for some $\beta \in \R$; the sufficiency is proved in~\citet{howard1972risk}. The authors in~\citet{rowland2019statistics} focus on distributional reinforcement learning and define {\em Bellman closedness} for a set of statistics, as the property that these statistics can be jointly optimized by dynamic programming (e.g., this is the case for the set consisting of the expectation and the variance~\cite{sobel1982variance}). The authors in~\citet{rowland2019statistics} show that the only sets of Bellman closed statistics are sets of moments (see Theorem 4.3 in~\citet{rowland2019statistics}).

\paragraph{Risk-averse MDPs}%
In risk-averse MDPs, the transition probabilities $P$ and the rewards $r$ are known, and the risk measures are with respect to the distribution over the trajectories induced by pairs of policies and transition kernels, i.e., risk-averse MDPs aim at solving
\begin{equation}\label{eq:risk averse - infinite horizon}
    \sup_{\pi \in \PiH} \rho\left(\sum_{t=0}^{+\infty} \gamma^t r(\tilde{S}_{t}, \tilde{A}_{t}, \tilde{S}_{t+1}) \right)\eqsp.
\end{equation}
It is also common to focus on the case of finite-horizon MDPs, i.e., to optimize, for a given horizon $T \in \N$, the following objective:
\begin{equation}\label{eq:risk averse - finite horizon}
    \sup_{\pi \in \PiH} \rho\left(\sum_{t=0}^{H} r(\tilde{S}_{t}, \tilde{A}_{t}, \tilde{S}_{t+1}) \right)\eqsp.
\end{equation}
The existence of dynamic programming equations in risk-averse MDPs has been studied extensively, often using the terms ``dynamic consistency'' or ``time consistency'' for risk measures. Informally, a risk measure is time consistent when the risk preferences do not change throughout the execution~\cite{roorda2005coherent,iancu2015tight}. Time consistency is typically a weaker requirement than our \cref{cond: PE,cond: PO} because it is sufficient that it holds only for $\gamma = 1$. 

Several works from the risk-averse MDP literature study necessary and sufficient conditions for the time-consistency of risk measures. \Citet{kupper2009representation} shows that only law-invariant, time-consistent, and relevant risk measures are the entropic risk measure, which subsumes the expectation operator. Relevance is a technical condition that excludes risk measures that are insensitive to the $\essInf$ of the random variable. Similar results have been derived in later work, including the restriction for law-invariant coherent risk measures~\cite{shapiro2012time}, and for law-invariant and $W^1$-continuous (see Appendix~\ref{app:w1-c0 dp rm}) risk measures~\cite{marthe2023beyond}.

Despite some superficial similarity to our results, the literature on the time consistency of risk-averse MDPs does not apply to ambiguity-averse MDPs. There is a fundamental difference in the setup of the two problems. In risk-averse MDPs, the risk aversion with respect to randomness is in the returns of the uncertain histories (or trajectories). As a result, we derive that the properties of ambiguity-averse MDPs stem from the interaction between the risk measure, the expectation operator, and the discount factor. 

Interestingly, one can use dynamic programming to compute policies for risk-averse MDPs using one of the following two approaches:
\begin{itemize}
\item \emph{Dropping the law-invariance condition: nested risk measures.}
  If one drops the law-invariance requirement for the risk measure, it is possible to recover a dynamic programming equation over the set of states using nested risk measures~\cite{ruszczynski2010risk} (also called Markov or iterated risk measures). Intuitively, nested risk measures replace the objective $\rho\left(\sum_{t=0}^{+\infty} \gamma^t r(\tilde{S}_{t}, \tilde{A}_{t}, \tilde{S}_{t+1}) \right)$ by applying the risk measure $\rho$ repeatedly, i.e., informally it corresponds to optimizing
\begin{equation}~\label{eq:nested rm}
    \rho\left(r(\tilde{S}_{0}, \tilde{A}_{0}, \tilde{S}_{1}) 
    + \gamma\rho\left( r(\tilde{S}_{1}, \tilde{A}_{1}, \tilde{S}_{2}) 
    + \gamma\rho\left( r(\tilde{S}_{2}, \tilde{A}_{2}, \tilde{S}_{3}) + \dots\right)\right)\right).
\end{equation}
From the above equation, one can construct a dynamic programming equation and derive value iteration algorithms~\cite{ruszczynski2010risk} to compute optimal Markov policies. However, nested risk measures are not law-invariant, and their values can depend on the order in which rewards are accrued, complicating their interpretation. That is, the quantity in~\eqref{eq:nested rm} differs from the original objective $\rho\left(\sum_{t=0}^{+\infty} \gamma^t r(\tilde{S}_{t}, \tilde{A}_{t}, \tilde{S}_{t+1}) \right)$~\cite{iancu2015tight}. 

\item \emph{Extended state space.}
For most common risk measures, including \var{}, \cvar{}, \evar{}, \erm{}, one can formulate dynamic programming equations after augmenting the state space with an additional variable. The particular augmentation depends on the risk measure. For example, to optimize discounted risk-averse objectives with \erm{} and \evar{}, one can augment the state space with the current time step, yielding Markov-optimal policies~\cite{hau2023entropic}. When the objective is \var{}, one can construct a dynamic program that augments the state space with the risk level $\alpha \in [0,1]$~\cite{li2022quantile,hau2023dynamic,hau2025qlearning}, resulting in a history-dependent policy. When the objective is a \cvar{}, one can construct a dynamic program by augmenting the state space with a real variable that keeps track of the accumulated rewards~\cite{bauerle2011markov} and compute an optimal history-dependent policy. When \emph{evaluating} a fixed policy with the \cvar{} objective, one can construct a dynamic programming that augments the state with the risk level $\alpha\in [0,1]$, similarly to \var{}. However, the risk-level augmentation fails for \cvar{} when \emph{optimizing} a policy~\cite{hau2023dynamic,godbout2025fundamental}.

\end{itemize}

\paragraph{Bayes-adaptive MDPs} Epistemic uncertainty over MDPs parameters has been studied extensively in the context of Bayes-adaptive MDPs~\cite{guez2012efficient}. Most research on Bayes-adaptive MDPs has focused on reinforcement learning algorithms that compute history-dependent policies for the risk-neutral objective. For the discussion of the differences with the risk-averse setting, we refer the interested reader to \citet[section~2]{lin2022bayesian}. 
While classical Bayesian MDPs often couple parameter uncertainty with trajectory observations through posterior updates~\cite{lin2022bayesian,shapiro2025episodic}, the ambiguity-averse objective first computes the nominal expected return for every kernel (or kernel sequence) and then applies the risk measure across possible kernels. This is why policies in our framework can differ qualitatively from Bayesian policies that average or learn parameter estimates from realized trajectories.

\section{Proof of Proposition~\ref{prop:properties of Bellman operators}}
\begin{proof}[Proof of Proposition~\ref{prop:properties of Bellman operators}]
\begin{enumerate}
\item Let $v,w \in \R^{\cS}$ with $v \leq w$. Since $T^{\pi,P}$ is a monotone operator for any realization of the transition probabilities $P$, that $T^{\pi,\tilde{P}}v \leq T^{\pi,\tilde{P}}w$ almost surely. Since $\rho$ is a monotone risk measure, we obtain that 
\begin{equation}\label{eq:tpinurho monotonoe}
    \rho\left(T^{\pi,\tilde{P}}v\right) \leq \rho\left(T^{\pi,\tilde{P}}w\right),
\end{equation} which proves that $T^{\pi,\nu,\rho}$ is a monotone operator for each $\pi \in \PiS$. The fact that $T^{\nu,\rho}$ is a monotone operator follows directly from taking the supremum over $\pi \in \PiS$ on both sides of~\eqref{eq:tpinurho monotonoe}.
\item Let $v \in \R^{\cS},c \in \R$. We have
\[
T^{\pi,\nu,\rho}(v + c \cdot 1_{\cS}) = \rho(T^{\pi,\tP}(v + c \cdot 1_{\cS})) =\rho(T^{\pi,\tP}v + \gamma c \cdot 1_{\cS}) = \rho(T^{\pi,\tP}v) + \gamma c \cdot 1_{\cS} =  T^{\pi,\nu,\rho}v + \gamma c \cdot 1_{\cS},
\]
where the first equality is by definition of the operator $T^{\pi,\nu,\rho}$, the second equality is because $T^{\pi,\tP}$ is translation-invariant for any realization of $\tP$, the third equality is from the translation-invariance of $\rho$, and the last equality is again by definition of our operators. This proves that $T^{\pi,\nu,\rho}$ is translation-invariant. The proof for $T^{\nu,\rho}$ follows the same line, using additionally that $\max_{s \in \cS} \{ w(s) + c \} = \max_{s \in \cS} \{w(s)\} + c$ for any $w \in \R^{\cS}$ and $c \in \R$.
    \item A direct calculation shows that the composition of a non-expansive risk measure and a contraction remains a contraction. Lemma 4.3 in~\citet{follmer2016stochastic} shows that a monotone, cash-invariant risk measure is non-expansive (this is also a consequence of Crandall-Tartar theorem~\cite{crandall1980some}), in the sense that
\begin{equation}\label{eq:lem4.3 FS}
    |\rho(\tX) - \rho(\tY) | \leq \essSup |\tX - \tY|, \quad\forall \; \tX,\tY\eqsp.
\end{equation}
Note that in~\cite{follmer2016stochastic} this result is written an $\ell_{\infty}$-norm for random variables, but for the sake of clarity we choose to use the notation $\| \cdot \|_{\infty}$ only for functions in $\R^{\cS}$ in this paper.
For the sake of completeness, we provide a concise proof of~\eqref{eq:lem4.3 FS} here. Let $\delta= \essSup |\tX - \tY|.$ By definition, $|\tX-\tY| \leq \delta$ a.s. so that $\tY-\delta \leq \tX \leq \tY + \delta$ a.s.. Using the monotonicity of $\rho$ we get that $\rho(\tY-\delta) \leq \rho(\tX) \leq \rho(\tY + \delta)$. Using translation invariance we get that $\rho(\tY)-\delta \leq \rho(\tX) \leq \rho(\tY) + \delta$ so that $|\rho(\tX) -\rho(\tY)| \leq \delta$, which concludes the proof of~\eqref{eq:lem4.3 FS}.

Now we have 
\begin{align*}
   \| T^{\pi,\nu,\rho}v - T^{\pi,\nu,\rho}w \|_{\infty} & = \max_{s \in \cS} |\rho(T^{\pi,\tP}v(s)) - \rho(T^{\pi,\tP}w(s))|  \\
   & \leq \max_{s \in \cS} \essSup |T^{\pi,\tP}v(s) - T^{\pi,\tP}w(s)|\\
   & \leq \max_{s \in \cS} \gamma \| v - w\|_{\infty}\\
    & \leq \gamma  \| v - w\|_{\infty} 
    \end{align*}
    where the first line is from the definition of $T^{\pi,\nu,\rho}$ and of the $\ell_{\infty}$-norm for vectors, the second line is from the non-expansiveness of $\rho$ as in~\eqref{eq:lem4.3 FS}, and the third line is from the fact for each kernel $P$ the operator $T^{\pi,P}$ (for nominal MDPs) is a contraction. Therefore, we have shown
\begin{equation}\label{eq:tpinurho contraction}
        \forall \; v,w \in \R^{\cS}, \|T^{\pi,\nu,\rho}v - T^{\pi,\nu,\rho}w \|_{\infty}  \leq \gamma \| v - w \|_{\infty}.
    \end{equation}
    The contraction property of $T^{\nu,\rho}$ follows directly from taking the supremum over $\pi \in \PiS$ in the right-hand side of~\eqref{eq:tpinurho contraction}.
\end{enumerate}
We next prove the attainability property.
We show that for any $v \in \R^{\cS}$, we can find $\pi \in \PiS$ such that $T^{\nu,\rho}v = T^{\pi,\nu,\rho}v$, i.e. such that $\sup_{\pi \in \PiS} T^{\pi,\nu,\rho}v(s) = \max_{\pi \in \PiS} T^{\pi,\nu,\rho}v(s)$ for each $s \in \cS$. To prove this, it suffices to show that $\pi \rightarrow T^{\pi,\nu,\rho}v(s)$ is continuous for $\pi \in \PiS$ (note that $\PiS= \Delta(\cA)^{\cS}$ is a compact set and that $T^{\pi,\nu,\rho}v(s)$ only depends on the s-th component of $\pi$). To prove that $\pi \rightarrow T^{\pi,\nu,\rho}v(s)$ is a continuous function, we use the definition of the Bellman operator: $T^{\pi,\nu,\rho}v(s) = \rho(T^{\pi,\tP}v(s))$.
Let $\pi \in \PiS$ and consider a sequence $(\pi_n)_{n \in \N}$ of stationary policies such that $\lim_{n\rightarrow + \infty} \pi_n = \pi$. Then
\begin{align*}
    |T^{\pi_n,\nu,\rho}v(s) - T^{\pi,\nu,\rho}v(s)| &= |\rho(T^{\pi_n,\tP}v(s))-\rho(T^{\pi,\tP}v(s))| \\ 
    & \leq \essSup |T^{\pi_n,\tP}v(s)-T^{\pi,\tP}v(s)|
\end{align*}
where the first equality uses the definition of the Bellman operator and the inequality uses the non-expansivity of $\rho$ (Equation~\eqref{eq:lem4.3 FS}, similarly as in the proof for the contraction property of the operators in the third point above). Now from the definition of the operator $T^{\pi,\tP}$ we get that for any $P \in \Delta(\cS)^{\cS \times \cA}$ we have
\[|T^{\pi_n,\tP}v(s)-T^{\pi,\tP}v(s)| \leq C \cdot \|\pi_n - \pi \|_{1}\]
where $C := \max_{s,a,s'} |r(s,a,s')| + \gamma \max_{s \in \cS} |v(s)|$.
Overall we obtained that 
\[|T^{\pi_n,\nu,\rho}v(s) - T^{\pi,\nu,\rho}v(s)| \leq C \cdot \|\pi_n - \pi \|_{1} \]
and it is therefore straightforward to conclude that $\lim_{n \rightarrow + \infty} |T^{\pi_n,\nu,\rho}v(s) - T^{\pi,\nu,\rho}v(s)| = 0$ when $\pi_n \rightarrow \pi$, i.e. we have shown that $\pi \mapsto T^{\pi,\nu,\rho}v(s)$ is a continuous function on the compact set $\PiS$. By Weierstrass theorem, we conclude that $\sup_{\pi \in \PiS} T^{\pi,\nu,\rho}v(s) = \max_{\pi \in \PiS} T^{\pi,\nu,\rho}v(s)$.
\end{proof}
\section{More details on Example~\ref{ex: E, static main body}-\ref{ex: E, static main body - continued}}\label{app:counterexample}
We provide a very simple example where \cref{cond: PE,cond: PO} fail to hold below.%
\begin{example}
    \label{ex: E, static}
    We consider, for $\rho = \E^{\nu}$, the static ambiguity-averse MDP of \cref{fig: MDP for example} where $\tX\sim \text{Unif([0,1])}$.
    \begin{figure}[h!] 
        \centering
        \includegraphics[width=0.3\linewidth]{fig/example_MDP.pdf}
        \caption{Ambiguity-averse MDP with $\cS=\{{\sf Start},{\sf End}\}$, 
        $\cA=\{1\}$. The edges $(s,a,s')$ are labeled with pairs $(P(s,a,s'),r(s,a,s'))$.} \label{fig: MDP for example}
        \end{figure}

    Explicitly, we have $\cS=\{{\sf Start}, {\sf End}\}$, $\cA=\{1\}$, discount factor $\gamma \in [0,1)$ and reward function $r({\sf Start},1,s') = 1$ and $ r({\sf End},1,s') = 0$ for any $s' \in \cS$. 
    Here, $\nu$ is the distribution of the random matrix $\begin{pmatrix}
        \tX & 1-\tX\\
        0 & 1
    \end{pmatrix}$. 
    We consider here the case of static kernels: there is a single sample of $\tX$ at time $t=0$, so that the kernel $\tP_t$ at time $t$ is
\begin{align*}  
\tP_t =
    \begin{pmatrix}
        \tX & 1-\tX\\
        0 & 1
    \end{pmatrix}, \forall t \in \N\eqsp.
    \end{align*}
    Explicitly, at any time step, from state ${\sf Start}$, the system transitions to state ${\sf Start}$ with probability $\tX$ and to state ${\sf End}$ with probability $1-\tX$, and from state ${\sf End}$, the system stays in state ${\sf End}$ with probability 1.

    Because the transitions are only random at ${\sf Start}$, the transition kernel law $\nu$ of this ambiguity-averse MDP has a product structure over spaces and identical marginals over time (because static), and because the uniform distribution has convex support, the overall transition kernel law has convex support.

    We consider the unique stationary policy $\pi$ that always takes the single action possible. The random nominal value function under policy $\pi$ and transition kernel sequence $\tP$ is given by:
    \[
        V^{\pi, \tP} =
        \begin{pmatrix}
            \sum_{t\geq 0} \gamma^t \tX^t\\
            0
        \end{pmatrix} 
        =\begin{pmatrix}
            \frac{1}{1-\gamma \tX} \\
            0
        \end{pmatrix} .
    \]
    Therefore, the ambiguity-averse value function under policy $\pi$ is:
    \[
        V^{\pi,\nu,\E^{\nu}} =
        \begin{pmatrix}
            \E\left(\frac{1}{1-\gamma \tX}\right) \\
            0
        \end{pmatrix} =
        \begin{pmatrix}
            \frac{-\log(1-\gamma)}{\gamma} \\
            0
        \end{pmatrix}\eqsp.
    \]
    On the other hand, the ambiguity-averse Bellman operator $T^{\pi,\nu, \E^{\nu}}$ is given by:
    \begin{align*}
        \forall v \in \R^2, \quad \quad T^{\pi,\nu, \E^{\nu}}\begin{pmatrix}
            v({\sf Start})\\v({\sf End})
        \end{pmatrix} =
        \begin{pmatrix}
            \E\left(1+ \tX \gamma v({\sf Start}) + (1-\tX) \gamma v({\sf End})\right) \\
            \E\left(0 \cdot \gamma v({\sf Start}) + 1 \cdot \gamma v({\sf End})\right)
        \end{pmatrix} =
        \begin{pmatrix}
            1 + \frac{\gamma}{2} v({\sf Start}) + \frac{\gamma}{2} v({\sf End}) \\
            \gamma v({\sf End})
        \end{pmatrix}\eqsp,
    \end{align*}

    and in particular, we have 
    \[
        T^{\pi,\nu, \E^{\nu}}V^{\pi,\nu,\E^{\nu}} 
        = \begin{pmatrix}
        1- \frac{\log(1-\gamma)}{2}\\
        0
        \end{pmatrix}\eqsp.
    \]
    Finally, we have $T^{\pi,\nu, \E^{\nu}}V^{\pi,\nu,\E^{\nu}} = V^{\pi,\nu,\rho} \Longleftrightarrow  -\log(1-\gamma)=\frac{2\gamma}{2-\gamma}$, so because the second assertion does not hold for every $\gamma \in [0,1)$, there exists $\sfM\in\cM_c$ and $\pi\in\Pi_S$ such that $T^{\pi,\nu, \E^{\nu}}V^{\pi,\nu,\E^{\nu}} \neq V^{\pi,\nu,\E^{\nu}}$. That is, the expectation does not satisfy \cref{cond: PE} for static kernels. 

    Because there is a single policy, $T^{\pi,\nu, \E^{\nu}}=T^{\nu, \E^{\nu}}$ and $V^{\pi,\nu,\E^{\nu}}=V^{\nu,\E^{\nu}}$, so we also have that the expectation doesn't satisfy \cref{cond: PO} for static kernels.

\end{example}

\section{Proof of Theorem~\ref{th: stationary optimal policy}}\label{app: proof th stationary opt policy}
\begin{proof}[Proof of Theorem~\ref{th: stationary optimal policy}]
Under the assumption that $\rho$ is monotone and translation invariant, from Proposition~\ref{prop:properties of Bellman operators}, both $T^{\pi,\nu,\rho}$ (for each $\pi \in \PiS$) and $T^{\nu,\rho}$ are monotone contractions. 

We first prove that there exists $\pi\opt \in \PiS$ such that $V^{\nu,\rho} = V^{\pi\opt,\nu,\rho}$. Indeed, for $s \in \cS$ we have
\[
  V^{\nu,\rho}(s) = T^{\nu,\rho}V^{\nu,\rho} = \max_{\pi \in \PiS} T^{\pi,\nu,\rho}V^{\nu,\rho}(s) = T^{\pi\opt,\nu,\rho}V^{\nu,\rho}(s),
\]
where the first equality is from \cref{cond: PO}, the second equality follows from the definition of the operator $T^{\nu,\rho}$ as in~\eqref{eq:bellman operator nu rho}, and the last equation is from defining the policy $\pi\opt \in \PiS$ attaining the $\arg \max$ in $\max_{\pi \in \PiS} T^{\pi,\nu,\rho}V^{\nu,\rho}(s)$ for each $s \in \cS$ (note that $\pi\opt$ exists from the attainability property stated in Proposition~\ref{prop:properties of Bellman operators}) and from the definition of the operator $T^{\pi,\nu,\rho}$ as in~\eqref{eq:bellman operator pi nu rho}.
This shows that $V^{\nu,\rho}$ is a fixed point of the contraction operator $T^{\pi\opt,\nu,\rho}$. Therefore $V^{\nu,\rho}$ coincides with the fixed point of $T^{\pi\opt,\nu,\rho}$, which is equal to $V^{\pi\opt,\nu,\rho}$ from \cref{cond: PE}.

We have
\[
  \sup_{\pi \in \PiH} \rho\left( V^{\pi,\tilde{P}}(s) \right) = V^{\nu,\rho}(s) = V^{\pi\opt,\nu,\rho}(s) \leq \max_{\pi \in \PiS} \rho\left( V^{\pi,\tilde{P}}(s) \right) 
  \eqsp,
\]
where the first equality is by definition of $V^{\nu,\rho}$, the second equality was proved in the previous paragraph, and the last equality is by taking the maximum over $\PiS$ (since $\pi\opt \in \PiS$). Since $\PiS \subseteq \PiH$, in general we also always have
\[
  \max_{\pi \in \PiS} \rho\left( V^{\pi,\tilde{P}}(s) \right) \leq \sup_{\pi \in \PiH} \rho\left(V^{\pi,\tilde{P}}(s) \right),
\]
from which we conclude that 
\[
  \sup_{\pi \in \PiH} \rho\left(V^{\pi,\tilde{P}}(s) \right) = \max_{\pi \in \PiS} \rho\left(V^{\pi,\tilde{P}}(s) \right) = \rho\left(V^{\pi\opt,\tilde{P}}(s)  \right),
\]
i.e., we conclude that $\pi\opt$ is a stationary optimal policy starting from any state.

The fact that $T^{\nu,\rho}$ is a contraction follows from Proposition~\ref{prop:properties of Bellman operators}. Additionally, we have defined $\pi\opt$ as any policy $\pi \in \PiS$ such that $T^{\nu,\rho}V^{\nu,\rho} = T^{\pi,\nu,\rho}V^{\nu,\rho}$, which concludes the proof.
\end{proof}
The next corollary shows that when \cref{cond: PE,cond: PO} hold for {\em both} static kernels and resampled kernels, then one can choose an optimal policy that is stationary and optimal for both models of kernels. This is stated in the next corollary.
\begin{corollary}\label{cor:static = resampled}
Assume that $\rho$ is monotone, translation-invariant. Assume that for static {\em and} resampled kernels, \cref{cond: PE,cond: PO} hold. 

Then there exists a {\em stationary} policy that is optimal starting from any state in both the model with static kernels and resampled kernels.
\end{corollary}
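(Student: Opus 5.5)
The corollary follows from applying Theorem~\ref{th: stationary optimal policy} twice and noticing that the Bellman operators do not see the kernel model. By~\eqref{eq:bellman operator pi nu rho} and~\eqref{eq:bellman operator nu rho}, the operators $T^{\pi,\nu,\rho}$ and $T^{\nu,\rho}$ depend only on the one-period marginal $\nu$ of $\tP_0$. This marginal is the same in the static and resampled models. Hence, for a fixed instance $\sfM_\nu \in \cM_c$, both models share a single optimal Bellman operator $T^{\nu,\rho}$, and a single operator $T^{\pi,\nu,\rho}$ for each $\pi \in \PiS$. The plan is to locate the optimal value functions of both models at the same point and then pick one greedy policy.

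\textbf{Step 1: the optimal value functions coincide.} Write $V^{\nu,\rho}_{\mathrm{st}}$ and $V^{\nu,\rho}_{\mathrm{rs}}$ for the optimal ambiguity-averse value functions under static and resampled kernels. Since \cref{cond: PO} holds in both models, each of these is a fixed point of the same operator $T^{\nu,\rho}$. By Proposition~\ref{prop:properties of Bellman operators}, $\rho$ being monotone and translation-invariant makes $T^{\nu,\rho}$ a $\gamma$-contraction for $\|\cdot\|_\infty$. By Banach's fixed point theorem it has a unique fixed point, so $V^{\nu,\rho}_{\mathrm{st}} = V^{\nu,\rho}_{\mathrm{rs}} =: V^\star$.

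\textbf{Step 2: one greedy policy for both models.} By the attainability property in Proposition~\ref{prop:properties of Bellman operators}, there is $\pi\opt \in \PiS$ with $T^{\nu,\rho}V^\star = T^{\pi\opt,\nu,\rho}V^\star$. Theorem~\ref{th: stationary optimal policy}, applied in the static model, says any such policy is optimal from every state there. Applied in the resampled model, it gives the same conclusion, because the defining condition on $\pi\opt$ involves only $T^{\nu,\rho}$, $T^{\pi\opt,\nu,\rho}$ and $V^\star$, and these are identical in the two models.

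As a sanity check, one can also unpack this rather than cite the theorem. Under \cref{cond: PE} in each model, $V^{\pi\opt,\nu,\rho}$ (static and resampled alike) is a fixed point of $T^{\pi\opt,\nu,\rho}$. This operator is a contraction, so its unique fixed point is $V^\star$. Therefore $\rho(V^{\pi\opt,\tP}(s)) = V^\star(s) = \sup_{\pi\in\PiH}\rho(V^{\pi,\tP}(s))$ for every $s$ in both models.

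\textbf{Main obstacle.} The only delicate point is Step 1. One must check that the operators really are model-independent: the definition uses only the law of a single-period kernel, and a static sequence and an i.i.d. sequence have identical one-period marginal $\nu$. One also needs uniqueness of the fixed point, which comes from the contraction property. \cref{cond: PO} alone only places the value functions among possibly several fixed points, so the contraction argument is what forces them to agree.
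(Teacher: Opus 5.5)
Your proof is correct and follows the paper's argument essentially verbatim. Like the paper, you observe that $T^{\nu,\rho}$ depends only on the time-$0$ marginal $\nu$, so it is the same $\gamma$-contraction in both kernel models; you then use uniqueness of its fixed point together with \cref{cond: PO} to identify the two optimal value functions, and invoke Theorem~\ref{th: stationary optimal policy} in each model for a single greedy policy (your ``sanity check'' is simply the paper's proof of that theorem inlined).
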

\begin{proof}[Proof of Corollary~\ref{cor:static = resampled}]
    Note that under the assumptions of Corollary~\ref{cor:static = resampled}, the operator $T^{\nu,\rho}$ is a contraction, hence it has a unique fixed point. Additionally, this operator only depends on the marginal at time $t=0$ (in evaluating $T^{\nu,\rho}v$ for some $v \in \R^{\cS}$, the transition kernels are sampled only once before a continuation value of $v(s')$ is obtained at state $s' \in \cS$), so that the operator $T^{\nu,\rho}$ is the same function in the case of static kernels and in the case of resampled kernels.
    Because \cref{cond: PO} holds for both the static and resampled kernels, the optimal value functions for both models of kernels are fixed point of the contraction operator $T^{\nu,\rho}$. Since a contraction has a unique fixed point, we get that the optimal value functions for both kernel models coincide, and we denote this common optimal value function as $V\opt$. 
    From \cref{th: stationary optimal policy} we know that in both kernel models (static or resampled) we can find a stationary optimal policy as any $\pi \in \PiS$ such that $T^{\nu,\rho}V\opt = T^{\pi,\nu,\rho}V\opt$, which concludes the proof.
\end{proof}
\section{Proof of Proposition~\ref{prop:algorithms}}\label{app:proof algorithms}
The proof of Proposition~\ref{prop:algorithms} follows from standard arguments in the MDP literature, e.g., Chapter 6 in~\citet{puterman2014markov}. The arguments rely entirely on the monotonicity, attainability, and contraction properties of the Bellman operators.
\begin{proof}[Proof of Proposition~\ref{prop:algorithms}]
\begin{enumerate}
    \item We start by proving the results for value iteration.

Let $V^{n} = T^{\nu,\rho}V^{n-1}$ for $n \geq 1$ and $V^0 \in \R^{\cS}$. By induction we get that $V^n = \left(T^{\nu,\rho}\right)^{n}V^{0}$ for any $n \geq 0$, and since $T^{\nu,\rho}$ is a contraction, we have that $\left(T^{\nu,\rho}\right)^{n}V^{0}$ converges to the fixed point of the operator $T^{\nu,\rho}$ as $n\rightarrow + \infty$, i.e. we get that $\lim_{n \rightarrow +\infty} \left(T^{\nu,\rho}\right)^{n}V^{0} = V^{\nu,\rho}$.
Moreover, 
 \[
   \|V^{\nu,\rho} - V^{n}\|_{\infty} = \|T^{\nu,\rho}V^{\nu,\rho} - T^{\nu,\rho}V^{n-1}\|_{\infty}\leq \gamma \|V^{\nu,\rho} - V^{n-1}\|_{\infty},
 \]
where the equality follows by definition of the sequence $(V^n)_{n \in \N}$ and because $V^{\nu,\rho}$ is a fixed point of $T^{\nu,\rho}$ and the inequality is because $T^{\nu,\rho}$ is a contraction under the assumption of Proposition~\ref{prop:algorithms}. We conclude by induction that $\|V^{\nu,\rho} - V^{n}\|_{\infty} \leq \gamma^n \|V^{\nu,\rho} - V^{0}\|_{\infty}$ for any $n \in \N$.

There remains to bound $\|V^{\pi^n,\nu,\rho}-V^{\nu,\rho}\|_{\infty}$ based on our bound on $\|V^{\nu,\rho} - V^{n}\|_{\infty}$. Note that
\begin{align*}
    \| V^{\nu,\rho} - V^{\pi^n,\nu,\rho}\|_{\infty}& = \| V^{\nu,\rho} - T^{\pi_n,\nu,\rho}V^{n-1} + T^{\pi_n,\nu,\rho}V^{n-1}  -V^{\pi^n,\nu,\rho} \|_{\infty} \\
    & = \|V^{\nu,\rho}  - T^{\nu,\rho}V^{n-1}  + T^{\pi_n,\nu,\rho}V^{n-1} -  V^{\pi^n,\nu,\rho} \|_{\infty} \\
    & \leq  \| V^{\nu,\rho}  - T^{\nu,\rho}V^{n-1} \|_{\infty} + \| T^{\pi_n,\nu,\rho}V^{n-1}  - V^{\pi^n,\nu,\rho}\|_{\infty} \\
    & = \| T^{\nu,\rho}V^{\nu,\rho}  - T^{\nu,\rho}V^{n-1} \|_{\infty} + \| T^{\pi_n,\nu,\rho}V^{n-1}  - T^{\pi_n,\nu,\rho}V^{\pi^n,\nu,\rho}\|_{\infty} \\
    & \leq \gamma \| V^{\nu,\rho}  - V^{n-1} \|_{\infty} + \gamma \| V^{n-1}  - V^{\pi^n,\nu,\rho}\|_{\infty}
\end{align*}
where the first equality is by adding and substracting the vector $T^{\pi_n,\nu,\rho}V^{n-1}$, the second equality is by definition of $\pi^n$ for value iteration, the first inequality is by triangle inequality, the third equality is by $V^{\nu,\rho}$ and $V^{\pi^n,\nu,\rho}$ being the fixed points of $T^{\nu,\rho}$ and $T^{\pi^n,\nu,\rho}$ following \cref{cond: PE,cond: PO}, and the last inequality is from these operators being contractions.
Now we also have
\begin{align*}
    \| V^{n-1}  - V^{\pi^n,\nu,\rho}\|_{\infty} & = \| V^{n-1}  - V^{\nu,\rho} + V^{\nu,\rho}  - V^{\pi^n,\nu,\rho}\|_{\infty}
     \leq \| V^{n-1}  - V^{\nu,\rho} \|_{\infty} +  \|V^{\nu,\rho}  - V^{\pi^n,\nu,\rho}\|_{\infty} 
\end{align*}
where the equality is by adding and substracting $V^{\nu,\rho}$ and the inequality follows the triangle inequality. Overall we have shown
\begin{align*}
     \| V^{\nu,\rho} - V^{\pi^n,\nu,\rho}\|_{\infty} \leq \gamma \| V^{\nu,\rho}  - V^{n-1} \|_{\infty} + \gamma (\| V^{n-1}  - V^{\nu,\rho} \|_{\infty} +  \|V^{\nu,\rho}  - V^{\pi^n,\nu,\rho}\|_{\infty})
\end{align*}
which we can rearrange to obtain
\[\| V^{\nu,\rho} - V^{\pi^n,\nu,\rho}\|_{\infty} \leq \frac{2 \gamma }{1-\gamma}\| V^{n-1}  - V^{\nu,\rho}\|_{\infty}\eqsp.\]
We therefore conclude that
$\|V^{\pi^n,\nu,\rho} - V^{\nu,\rho}\|_{\infty} \leq \frac{2\gamma^n}{1-\gamma} \|V^{\pi^0,\nu,\rho} - V^{\nu,\rho}\|_{\infty}.$
\item 
Let $n \in \N$.
We first show that $V^{\pi^{n},\nu,\rho} \leq V^{\pi^{n+1},\nu,\rho}$. We have
\[T^{\pi^{n+1},\nu,\rho}V^{\pi^{n},\nu,\rho} = T^{\nu,\rho}V^{\pi^{n},\nu,\rho} \geq T^{\pi^{n},\nu,\rho}V^{\pi^{n},\nu,\rho} = V^{\pi^{n},\nu,\rho}\]
where the first equality is by definition of $\pi^{n+1}$, where the inequality is from the definition of the Bellman optimality operator $T^{\nu,\rho}$, and the second equality is from \cref{cond: PE}. Therefore we obtain 
$T^{\pi^{n+1},\nu,\rho}V^{\pi^{n},\nu,\rho} \geq V^{\pi^{n},\nu,\rho}$. Since $T^{\pi^{n+1},\nu,\rho}$ is a monotone increasing contraction under the assumptions of Proposition~\ref{prop:algorithms}, applying repeatedly the operator $T^{\pi^{n+1},\nu,\rho}$ to the component-wise inequality $T^{\pi^{n+1},\nu,\rho}V^{\pi^{n},\nu,\rho} \geq V^{\pi^{n},\nu,\rho}$ yields $V^{\pi^{n+1},\nu,\rho} \geq V^{\pi^{n},\nu,\rho}$. 

We now prove that $V^{\pi^{n},\nu,\rho}$ converges to $V^{\nu,\rho}$ at a linear rate. We have
\begin{align*}
    V^{\nu,\rho} - V^{\pi^{n+1},\nu,\rho} & = V^{\nu,\rho} - T^{\pi^{n+1},\nu,\rho}V^{\pi^{n+1},\nu,\rho}  \\
    & = V^{\nu,\rho} - T^{\pi^{n+1},\nu,\rho}V^{\pi^{n},\nu,\rho} + T^{\pi^{n+1},\nu,\rho}V^{\pi^{n},\nu,\rho} -  T^{\pi^{n+1},\nu,\rho}V^{\pi^{n+1},\nu,\rho} \\
    & \leq V^{\nu,\rho} - T^{\pi^{n+1},\nu,\rho}V^{\pi^{n},\nu,\rho}
\end{align*}
where we use \cref{cond: PE} for the first equality, where we add and subtract the vector $T^{\pi^{n+1},\nu,\rho}V^{\pi^{n},\nu,\rho}$ in the second equality, and where the inequality follows from $T^{\pi^{n+1},\nu,\rho}V^{\pi^{n},\nu,\rho}\leq  T^{\pi^{n+1},\nu,\rho}V^{\pi^{n+1},\nu,\rho}$, itself a consequence of $V^{\pi^{n},\nu,\rho} \leq V^{\pi^{n+1},\nu,\rho}$ and the fact that $T^{\pi^{n+1},\nu,\rho}$ is a monotone contraction. Now we have
\begin{align*}
    \|V^{\nu,\rho} - T^{\pi^{n+1},\nu,\rho}V^{\pi^{n},\nu,\rho}\|_{\infty} & = \|T^{\nu,\rho}V^{\nu,\rho} - T^{\pi^{n+1},\nu,\rho}V^{\pi^{n},\nu,\rho} \|_{\infty} \\
    & = \|T^{\pi\opt,\nu,\rho}V^{\nu,\rho} - T^{\pi^{n+1},\nu,\rho}V^{\pi^{n},\nu,\rho}\|_{\infty}\\
    & \leq \|T^{\pi\opt,\nu,\rho}V^{\nu,\rho} - T^{\pi\opt,\nu,\rho}V^{\pi^{n},\nu,\rho}\|_{\infty}\\
    & \leq \gamma \| V^{\nu,\rho} - V^{\pi^{n},\nu,\rho}\|_{\infty}
\end{align*}
where the equality uses \cref{cond: PO}, the second equality uses the existence of a policy $\pi\opt \in \PiS$ such that $T^{\pi\opt,\nu,\rho}V^{\nu,\rho} = V^{\nu,\rho}$ (the attainability property from Proposition~\ref{prop:properties of Bellman operators}), the first inequality is from the definition of $\pi^{n+1}$, and the last inequality is from the operator $T^{\pi\opt,\nu,\rho}$ being a contraction. By induction we directly obtain that
\[\| V^{\nu,\rho} - V^{\pi^{n},\nu,\rho}\|_{\infty} \leq \gamma^n \| V^{\nu,\rho} - V^{\pi^{0},\nu,\rho}\|_{\infty}\]
which concludes the proof.
\end{enumerate}
\end{proof}

It is also possible to obtain a convex program with $V^{\nu,\rho}$ as a unique solution, as we show in the next proposition. We need here that $\rho$ is a convex risk measure, in the sense defined in Appendix~\ref{app:risk measures}:
$\rho(\lambda \tX + (1-\lambda)\tY) \leq \lambda \rho(\tX) + (1-\lambda)\rho(\tY)$ for any random variables $\tX,\tY$ and scalar $\lambda \in [0,1]$. An example of convex risk measure is the essential supremum.

\begin{proposition}
    Assume that $\rho$ is monotone, translation-invariant and convex. Then $V^{\nu,\rho}$ can be recovered as the unique solution to the following convex optimization program:
    \[
      \min \left\{ \sum_{s \in \cS} V(s) \; | \; V \in \R^{\cS}, V(s) \geq T^{\nu,\rho}V(s), \forall \; s \in \cS \right\}.
    \]
\end{proposition}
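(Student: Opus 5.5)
The plan is to follow the classical linear-programming argument for nominal MDPs and replace linearity by monotonicity, translation-invariance and convexity. The argument has three parts: show the program is convex, show every feasible point dominates $V^{\nu,\rho}$, and show $V^{\nu,\rho}$ is itself feasible. The third part uses only that $V^{\nu,\rho}$ is the unique fixed point of $T^{\nu,\rho}$. Uniqueness of the fixed point follows from the contraction property in Proposition~\ref{prop:properties of Bellman operators}, provided \cref{cond: PO} holds; I will assume this (as in Proposition~\ref{prop:algorithms}), or otherwise simply work with the fixed point $V^\star$ of $T^{\nu,\rho}$.

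\emph{Convexity.} For fixed $\pi\in\PiS$ and $s$, the map $V\mapsto T^{\pi,\tP}V(s)$ is affine in $V$ for every realization of $\tP$. Composing with the convex risk measure $\rho$ therefore gives a convex function $V\mapsto T^{\pi,\nu,\rho}V(s)$. Taking the supremum over $\pi\in\PiS$ preserves convexity, so $V\mapsto T^{\nu,\rho}V(s)$ is convex and finite (it is Lipschitz by the contraction property). Each constraint $V(s)-T^{\nu,\rho}V(s)\geq 0$ is then a superlevel set of a concave function, hence convex. The objective is linear, so the program is convex.

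\emph{Feasible points dominate the fixed point.} Let $V$ satisfy $V\geq T^{\nu,\rho}V$. By monotonicity of $T^{\nu,\rho}$ (Proposition~\ref{prop:properties of Bellman operators}), induction gives $V\geq (T^{\nu,\rho})^n V$ for all $n$. By the contraction property, $(T^{\nu,\rho})^nV\to V^\star$, so $V\geq V^\star$ componentwise. Moreover, $V^\star$ is feasible, with equality in every constraint. Hence $V^\star$ attains the minimum of $\sum_s V(s)$.

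\emph{Uniqueness.} Any other minimizer $V$ satisfies $V\geq V^\star$ and $\sum_s V(s)=\sum_s V^\star(s)$, which forces $V=V^\star$. Finally, $V^\star=V^{\nu,\rho}$ by \cref{cond: PO}. The only delicate point is this identification of the fixed point with $V^{\nu,\rho}$, which requires \cref{cond: PO}. The remaining steps are routine. Feasibility of the program is clear, and a minimizer exists because $V^\star$ itself attains the minimum.
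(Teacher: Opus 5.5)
Your proof is correct and follows the paper's route. Monotonicity and contraction of $T^{\nu,\rho}$ make its unique fixed point the unique minimizer (you prove this contraction lemma inline where the paper cites it), and convexity comes from composing the affine map $V\mapsto T^{\pi,\tP}V(s)$ with the convex $\rho$ and then taking a pointwise supremum over $\pi$. You are also right that identifying this fixed point with $V^{\nu,\rho}$ needs \cref{cond: PO}: the paper's proof uses it silently although the statement omits it, and it can fail, e.g.\ for $\rho=\E$ with static kernels (Example~\ref{ex: E, static}), which is monotone, translation-invariant and convex, yet the program returns $2/(2-\gamma)$ at ${\sf Start}$ instead of $V^{\nu,\rho}({\sf Start})=-\log(1-\gamma)/\gamma$.
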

\begin{proof}
    Since $\rho$ is monotone and translation invariant, we know that $T^{\nu,\rho}$ is a monotone contraction from Proposition~\ref{prop:properties of Bellman operators}.
        From the contraction lemma (Lemma 3.1 in~\citet{grand2025convex}), $V^{\nu,\rho}$ is the unique fixed point of the optimization problem 
        \[
          \min \left\{ \sum_{s \in \cS} V(s) \; | \; V \in \R^{\cS}, V(s) \geq T^{\nu,\rho}V(s), \forall \; s \in \cS \right\}\eqsp.
        \]
It remains to show that this optimization problem is convex. By construction, $T^{\pi,P}$ is a monotone affine operator for each $\pi \in \PiS,P \in \Delta(\cS)^{\cS \times \cA}$. Therefore, $v \mapsto T^{\pi,\nu,\rho}v(s)$ is convex, since by definition $T^{\pi,\nu,\rho}$ is the composition of a convex operator $T^{\pi,P}$ with monotone convex random variable $\rho$ (see Section 3.2.4 in~\citet{boyd2004convex}), and $v \mapsto T^{\nu,\rho}v(s)$ is convex as the pointwise supremum of convex functions (see Section 3.2.3 in~\citet{boyd2004convex}).
\end{proof}

\paragraph{Evaluating the Bellman operators}
The efficient computations of the Bellman operators have been studied in some cases:

\begin{itemize}
    \item {\em The case of robust MDPs.} In this case we have, for $v \in \R^{\cS}$ and $s \in \cS$,
    \begin{align*}
        T^{\pi,\nu,\rho}v(s) & = \inf_{P \in \cP} T^{\pi,P}v(s) \\
        T^{\nu,\rho}v(s) & = \sup_{\pi \in \PiS} \inf_{P \in \cP} T^{\pi,P}v(s)
    \end{align*}
    which shows that evaluating $T^{\pi,\nu,\rho}v$ requires solving linear programs over $\cP$, and evaluating $T^{\nu,\rho}v$ requires solving saddle-point problems over $\PiS \times \cP$. Under some additional assumptions on the uncertainty set $\cP$, one can evaluate the Bellman operators in polynomial time, e.g., for sets based on weighted $\ell_1$-norm~\cite{iyengar2005robust,ho2018fast}, $\ell_2$-norm~\cite{iyengar2005robust}, entropy and Kullback-Leibler divergence~\cite{nilim2005robust}, and $\ell_{\infty}$-norm~\cite{givan2000bounded}.
    \item {\em The case of optimistic MDPs.} This case can be treated similarly to robust MDPs by noting that \[\essSup(\tX) = - \essInf(-\tX).\]
    \item {\em The case of multi-model MDPs.} In this case we have $\rho(\tX) = \E^{\nu}(\tX)$, and
    \[T^{\pi,\nu,\rho}v(s) = \E^{\nu}\left( T^{\pi,\tP}v(s)\right) = T^{\pi,\E^{\nu}(\tP)}v(s)\]
    where the second equality is because $P \mapsto T^{\pi,P}v(s)$ is an affine function. This shows that evaluating the Bellman operator for $\rho=\E^{\nu}$ can be done in two steps: (i) computing the expected kernel $\E^{\nu}(\tP)$ (ii) computing the nominal Bellman operator with the expected kernel. 
    
    \item The case of more general risk measures may require resorting to numerical estimates. From the definition of the ambiguity-averse Bellman operators, evaluating $T^{\pi,\nu,\rho}v(s)$ requires evaluating $\rho(\tX)$ for some random variable $\tX$.
    
    For instance, for \cvar{} one can compute $\cvar(\tX)$ by solving the optimization problem~\eqref{eq:cvar}, which only involves a supremum over a single real variable. In general, one can result to sample-average approximation or Monte-Carlo methods to estimate \var{} or \cvar, and we refer to Chapters 5-6 in~\citet{shapiro2021lectures} for some discussion on this and to~\citet{rockafellar2000optimization} for a seminal reference. It is worth noting that our main theorems (Theorem~\ref{th: DP-compatible risk measures}-Theorem~\ref{th: w1-c0 DP compatible risk measures}) show that dynamic programming does not apply beyond $\essSup,\essInf,\E^{\nu}$, so that evaluating the Bellman operator in order to apply Value Iteration or Policy Iteration may not lead to meaningful optimization over the set of stationary policies.
\end{itemize}

\paragraph{Approximate dynamic-programming.} The convergence analyses of value and policy iteration in our setting rely on the same monotonicity, contraction, and translation-invariance properties of Bellman operators as in nominal MDPs, recovered in Proposition~\ref{prop:properties of Bellman operators}. Therefore, error-propagation arguments for approximate value or policy iteration could be adapted (in the lines of~\citet{scherrer2015approximate})  when the Bellman operator is evaluated inexactly.

\section{Proofs of Section~\ref{sec:dp compatibility}}\label{app:proof DP compatible rm}
In this appendix we provide the proof for the results stated in Theorem~\ref{th: DP-compatible risk measures} and Theorem~\ref{th: w1-c0 DP compatible risk measures}. As discussed in Section~\ref{sec:dp compatibility}, some of these results are already known:
\begin{itemize}
    \item For static kernels, the literature on s-rectangular robust MDPs~\cite{wiesemann2013robust} and optimistic MDPs~\cite{givan2000bounded} show that \cref{cond: PE,cond: PO} hold when $\rho \in \{ \essInf,\essSup\}$.
    \item For resampled kernels, a recent result for s-rectangular robust MDPs (Proposition 2.2 in~\citet{wiesemann2013robust}) and a seminal result in DRMDPs (Theorem 4.1 and Theorem 4.2 in~\citet{xu2012distributionally}) show that \cref{cond: PE,cond: PO} are implied by the fact that $\rho \in \{\essInf,\essSup,\E^{\nu}\}$.
\end{itemize}
We also recall that $\essSup,\essInf$ are not $W^1$-continuous (see Appendix~\ref{app:risk measures}). 
Therefore, we are left with proving that:
\begin{itemize}
\item For proving Theorem~\ref{th: DP-compatible risk measures}, there remains to show that for $\rho$ monotone and law-invariant,
\begin{enumerate}
    \item for static kernels, each \cref{cond: PE} and \cref{cond: PO} individually implies $\rho \in \{ \essInf,\essSup\}$
    \item for resampled kernels, each \cref{cond: PE} and \cref{cond: PO} individually implies $\rho \in \{ \essInf,\essSup,\E^{\nu}\}$.
\end{enumerate}
\item For proving Theorem~\ref{th: w1-c0 DP compatible risk measures}, we show that for $\rho$ $W^1$-continuous and law-invariant, each \cref{cond: PE} and \cref{cond: PO} individually implies that $\rho = \E^{\nu}$ for both the cases of static and resampled kernels. We then conclude since we know that $\rho = \E^{\nu}$ doesn't yield dynamic programming equation for static kernels, as highlighted in example \ref{ex: E, static}.
\end{itemize} 
We provide the proof of Proposition~\ref{prop: additive and homogeneous} in Appendix~\ref{prop:proof prop addivite homo}, the proof of Theorem~\ref{th: DP-compatible risk measures} in Appendix~\ref{app:proof dp mono}, and the proof of Theorem~\ref{th: w1-c0 DP compatible risk measures} in Appendix~\ref{app:w1-c0 dp rm}. {\bf In the remainder of this section, we use the notations and the rigorous probabilistic framework introduced in Appendix~\ref{app:aamdp formalism}}.

\subsection{Proof of Proposition \ref{prop: additive and homogeneous}}\label{prop:proof prop addivite homo}
We start with the following lemma. Both equations can be derived from \cref{cond: PE} or \cref{cond: PO} applied to the single-policy MDP structure of Figure~\ref{fig: MDP for lemma}. We emphasize here that we only use the single-policy MDP structure of Figure~\ref{fig: MDP for lemma} among all the possibilities in $\cM_c$ (because both conditions \cref{cond: PE} and \cref{cond: PO} concern all the ambiguity-averse MDPs in $\cM_c$). While the second equation will be useful later to establish that $\rho(0)=0$, the first equation alone contains the key property needed to derive all the results of Proposition~\ref{prop: additive and homogeneous}. We recall that $(\Omega_{\sfM}, \cF_{\sfM}, \bP_{\sfM})$ is the probability space associated with a specific ambiguity-averse MDP $\sfM$, as defined right after Definition~\ref{def:rm for admdp}.
\begin{lemma}
    \label{lem: main MDP}
    Let $\rho$ be a risk measure that is law invariant on $L^{\infty}_{c}(\R)$.
    If $\rho$ satisfies \cref{cond: PE} or \cref{cond: PO} in either the case of static kernels or resampled kernels, then the following statements hold:
    
    \begin{enumerate}
        \item for any $\sfM\in\cM$ and independent random variables $(\tX, \tY, \tZ) \in L^{\infty}_c(\Omega_{\sfM}, \cF_{\sfM}, \bP_{\sfM};\R)^2 \times L^{\infty}_c(\Omega_{\sfM}, \cF_{\sfM}, \bP_{\sfM};[0,1])$,
        \[
        \forall \gamma \in [0,1), \quad  
        \rho\Big( \gamma\big(\tZ \tX + (1-\tZ) \tY\big)\Big) = \rho \Big( \gamma\big(\tZ \rho(\tX) + (1-\tZ) \rho(\tY)\big)\Big)\eqsp.\]
        \item $\forall C \in \R, \forall \gamma \in [0,1), \rho(C) = \rho(C+\gamma \rho(0))\eqsp.$
    \end{enumerate}
\end{lemma}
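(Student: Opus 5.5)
We will build a single-action ambiguity-averse MDP in $\cM_c$ and read off both identities from the fixed-point equation. By law-invariance, and by Proposition~\ref{prop: existence of varrho} restricted to convex supports, only the laws of $\tX,\tY,\tZ$ matter. So it suffices to realize independent variables with the prescribed (convex-support) laws inside one instance. Because there is a single policy, $T^{\nu,\rho}=T^{\pi,\nu,\rho}$ and $V^{\nu,\rho}=V^{\pi,\nu,\rho}$, so \cref{cond: PE} and \cref{cond: PO} coincide and the two can be treated together.

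\textbf{The instance.} Take states $\{s_0,s_1,s_2,s_3,s_4\}$ with one action. The random transitions are as follows.
\begin{itemize}[nosep]
\item From $s_0$ the chain goes to $s_1$ with probability $\tZ$ and to $s_2$ with probability $1-\tZ$, with reward $0$.
\item From $s_1$ it stays in $s_1$ with a random probability $\tU$ and otherwise moves to the absorbing zero-reward state $s_3$, with reward $1$ on the self-loop and $0$ otherwise.
\item From $s_2$ the analogous construction with a variable $\tW$ leads to $s_4$.
\end{itemize}
Take $\tZ,\tU,\tW$ independent with convex supports. Then $\nu$ has a product structure and convex support, so $\sfM_\nu\in\cM_c$.

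For the \emph{static} case, $V^{\pi,\tP}(s_1)=1/(1-\gamma\tU)$ is a continuous monotone function of $\tU$, so its support is convex. Choosing the law of $\tU$ through a quantile transform lets $V^{\pi,\tP}(s_1)$ have any prescribed law with convex support in $[1,1/(1-\gamma))$. Affine reward rescaling (rewards $a$ on the loop, plus an exit reward) then yields an arbitrary bounded convex-support law $\cL(\tX)$, and likewise $\cL(\tY)$ at $s_2$.

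For the \emph{resampled} case we cannot use a loop, because the value would be an i.i.d.-average. Instead we make $s_1$ transition in one step: from $s_1$, go to $s_3$ with probability $\tU$ (reward $b$) or to $s_4'$ with probability $1-\tU$ (reward $c$), where both targets are absorbing with zero reward. Then $V^{\pi,\tP}(s_1)=\tU b+(1-\tU)c$ is affine in $\tU$, so any convex-support bounded law is attainable, and the same design works for static kernels too. In both cases $V^{\pi,\tP}(s_0)=\gamma(\tZ V(s_1)+(1-\tZ)V(s_2))$. Here $V(s_1),V(s_2)$ depend on the kernel at time $1$ (or the static kernel), independent of $\tZ$, and have the laws of $\tX,\tY$. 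For resampled kernels, independence of $\tP_0(s_0)$ from $\tP_1(s_1),\tP_1(s_2)$ is automatic; for static kernels it follows from the product structure.

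\textbf{Reading off the identities.} The fixed-point equation at $s_0$ gives
\[
\rho\big(V^{\pi,\tP}(s_0)\big)=\rho\big(T^{\pi,\tP}V^{\pi,\nu,\rho}(s_0)\big)=\rho\big(\gamma(\tZ\rho(\tX)+(1-\tZ)\rho(\tY))\big),
\]
using $V^{\pi,\nu,\rho}(s_i)=\rho(V^{\pi,\tP}(s_i))$ and law-invariance. The left side equals $\rho(\gamma(\tZ\tX+(1-\tZ)\tY))$ by law-invariance, because the laws of the triples agree.

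One technical point must be checked: every argument fed to $\rho$ must lie in $L^\infty_c$. This holds because $(z,x,y)\mapsto zx+(1-z)y$ is continuous and the product of convex supports is connected, so the image of the support is a compact interval.

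For item 2, take a state $s$ with a deterministic transition and reward $C$ into an absorbing zero-reward state $e$. Then $V(e)=0$ a.s., so $V^{\pi,\nu,\rho}(e)=\rho(0)$. Moreover $V^{\pi,\tP}(s)=C$, so the fixed-point equation gives $\rho(C)=\rho(C+\gamma\rho(0))$. Constants have convex (singleton) support, so they are admissible.

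\textbf{Main obstacle.} The expected difficulty is the careful realization of arbitrary convex-support laws and independence within $\cM_c$, in particular verifying that $\supp(\nu)$ is convex. It is convex because it is a product of segments (images of the intervals $\supp\tZ$, $\supp\tU$, \dots) under affine maps into the simplices, together with fixed points for deterministic rows.
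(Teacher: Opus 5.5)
Your proposal is correct and follows essentially the paper's construction: a single-action instance branches from a start state with probability $\tZ$ into two states whose values are affine in a random transition probability, which realizes the laws of $\tX$ and $\tY$; you read off the fixed-point equation at the start state, and item 2 comes from a deterministic reward-$C$ transition into a zero-reward absorbing state (the paper's {\sf WinL}$\to${\sf LoseL} equation). The differences are cosmetic: you put next-state-dependent rewards on the two exits of $s_1$ instead of routing through a {\sf Win} state with reward $\gamma^{-1}c$ (so $\gamma=0$ needs no separate treatment), and you explicitly check the convex-support requirements that the paper leaves implicit; the abandoned loop digression is unnecessary, and there in fact $V^{\pi,\tP}(s_1)=\tU/(1-\gamma\tU)$, not $1/(1-\gamma\tU)$.
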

\begin{proof}
    Let $\rho$ be a risk measure that is law-invariant on $L^{\infty}_{c}(\R)$ and that satisfies \cref{cond: PE} or \cref{cond: PO} in either the case of static kernels or resampled kernels and $\gamma\in[0,1)$. The statement is obvious for $\gamma=0$, and we assume $\gamma \in (0,1)$ in the rest of this proof. We will consider an ambiguity-averse MDP with a single possible action, therefore there is a single possible policy (the one that consists in always taking the only possible action with probability $1$), noted $\pi$, and \cref{cond: PE} or \cref{cond: PO} applied to this specific ambiguity-averse MDP yield the exact same conclusion.

    Let $\sfM\in\cM$ and $(\tX, \tY, \tZ) \in L^{\infty}_c(\Omega_{\sfM}, \cF_{\sfM}, \bP_{\sfM};\R)^2 \times L^{\infty}_c(\Omega_{\sfM}, \cF_{\sfM}, \bP_{\sfM};[0,1])$ be three independent random variables. Because $ \tX,\tY$ are bounded, there exists $a,b,c,d \in \R^2 \times (\R^\star)^2$ such that $\frac{\tX-a}{c} \in [0,1]$ a.s. and $\frac{\tY-b}{d} \in [0,1]$ a.s. Our goal is now to define the ambiguity-averse MDP of Figure~\ref{fig: MDP for lemma} in such a way that $\cL(\tX')=\cL(\frac{\tX-a}{c})$, $\cL(\tY')=\cL(\frac{\tY-b}{d})$ and $\cL(\tZ')=\cL(\tZ)$, where $\cL(\cdot)$ denotes the law of a random variable.
    
    Let $\cS=\{{\sf Start}, {\sf Left}, {\sf Right}, {\sf WinL}, {\sf LoseL}, {\sf WinR}, {\sf LoseR}\}$, $\cA=\{1\}$, $\gamma \in (0,1)$ and $r$ a reward function such that $r({\sf Left},1,s') = a,\, r({\sf Right},1,s') = b, \, r({\sf WinL},1,s') = \gamma^{-1}c,\, r({\sf WinR},1,s') = \gamma^{-1}d$ and the reward associated with any other transition is equal to $0$. We now define the transition kernel law $\nu$ as 
    \[\nu=\cL\parenthese{ \begin{pmatrix}
    0 & \tZ & 1-\tZ & 0 & 0 & 0 & 0 \\
    0 & 0 & 0 & \frac{\tX-a}{c} & 1-\frac{\tX-a}{c} & 0 & 0 \\
    0 & 0 & 0 & 0 & 0 & \frac{\tY-b}{d} & 1-\frac{\tY-b}{d} \\
    0 & 0 & 0 & 0 & 1 & 0 & 0 \\
    0 & 0 & 0 & 0 & 1 & 0 & 0 \\
    0 & 0 & 0 & 0 & 0 & 0 & 1 \\
    0 & 0 & 0 & 0 & 0 & 0 & 1
    \end{pmatrix}}\eqsp.
    \]
    Because $\tX,\tY,\tZ$ all have convex support, the law $\nu$ has convex support and by independence of $\tX,\tY,\tZ$, $\nu$ also has a product structure. Consequently, having $\mu\in\Delta(\cS)$, the tuple $(\cS,\cA,r,\nu,\gamma,\mu)$ is a valid "convex, product" ambiguity-averse MDP that we note $\sfM'\in\cM_c$.

    The transition kernel sequence now depends on the sampling case:
    \begin{itemize}%
        \item \textbf{Static transition kernel sequence:}
        \newline Denoting by $\tP$ the static transition kernel sequence of the ambiguity-averse MDP $\sfM'$, we define $\tZ'=\tP_{t=0}({\sf Start},1,{\sf Left}), \, \tX'=\tP_{t=0}({\sf Left},1,{\sf WinL})$ and $\tY'=\tP_{t=0}({\sf Right},1,{\sf WinR})$. By construction $\tX',\tY',\tZ'$ are three independent random variables in $L^{\infty}_c(\Omega_{\sfM'}, \cF_{\sfM'}, \bP_{\sfM'};[0,1])$ such that $\cL(\tX')=\cL(\frac{\tX-a}{c})$, $\cL(\tY')=\cL(\frac{\tY-b}{d})$ and $\cL(\tZ')=\cL(\tZ)$. In particular, by independence, we also have $\cL\parenthese{(c\tX'+a,d\tY'+b,\tZ')}=\cL\parenthese{(\tX,\tY,\tZ)}$.
    
        The transition kernel of $\sfM'$ is such that ${\sf Start}$ goes to ${\sf Left}$ with probability $\tZ'$ and to ${\sf Right}$ with probability $1-\tZ'$. From ${\sf Left}$, the transition to ${\sf WinL}$ happens with probability $\tX' $ and to ${\sf LoseL}$ with probability $1-\tX'$. From ${\sf Right}$, the transition to ${\sf WinR}$ happens with probability $\tY' $ and to ${\sf LoseR}$ with probability $1-\tY'$. From ${\sf WinL}$/${\sf WinR}$, the transition goes deterministically to ${\sf LoseL}$/${\sf LoseR}$, and from ${\sf LoseL}$/${\sf LoseR}$, the state self-loops.

        \begin{figure}[ht]
        \centering
        \includegraphics[width=0.7\linewidth]{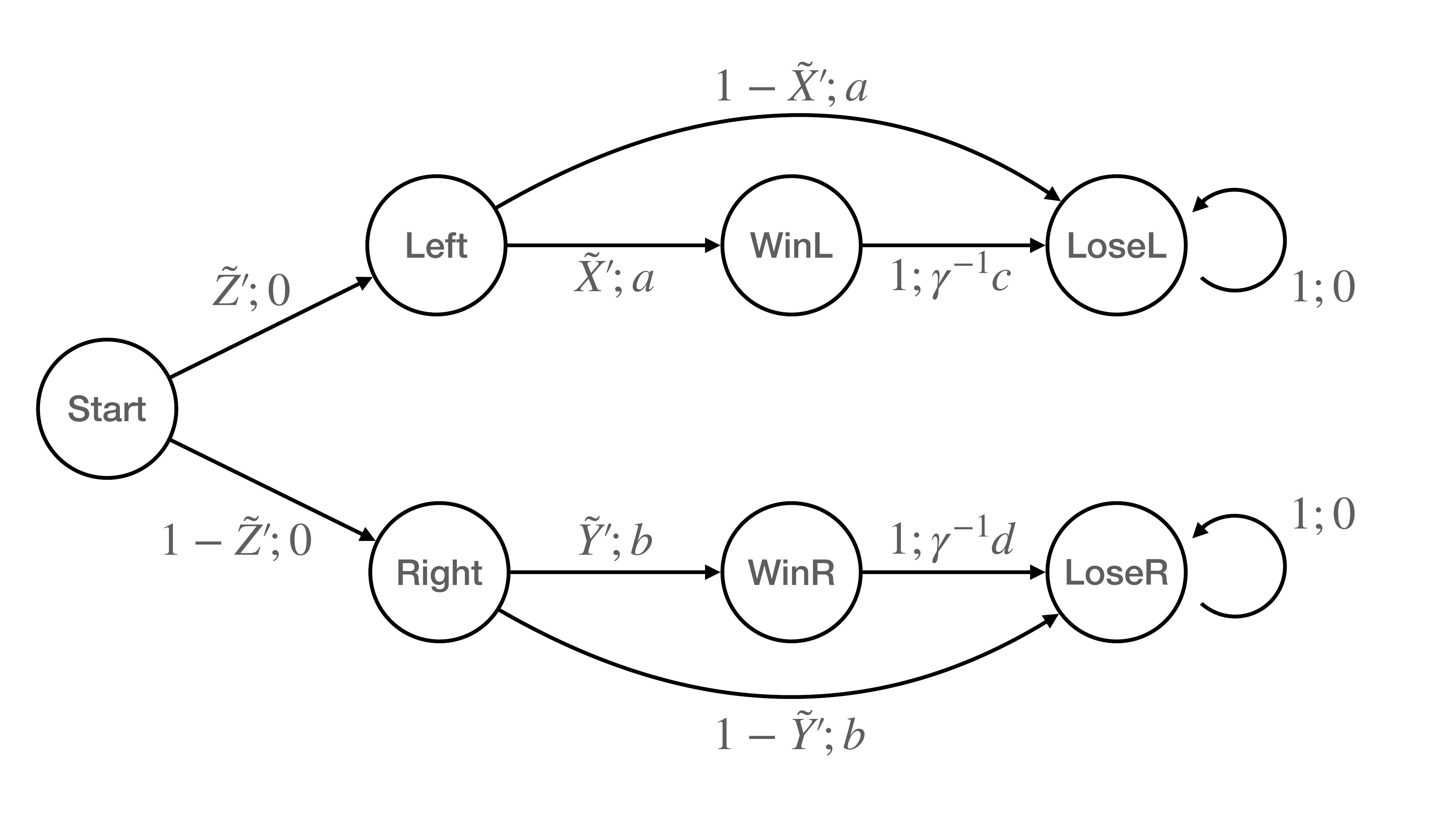}
        \caption{Ambiguity-averse MDP with $\cS=\{{\sf Start}, {\sf Left}, {\sf Right}, {\sf WinL}, {\sf LoseL}, {\sf WinR}, {\sf LoseR}\}$,
        $\cA=\{1\}$. The edges $(s,a,s')$ are labeled with pairs $(P(s,a,s'),r(s,a,s'))$.} \label{fig: MDP for lemma}
        \end{figure}

        On the one hand, because there is a single policy, the ambiguity-averse value function and the optimal ambiguity-averse value function are equal, and $V=V^{\pi, \nu,\rho}=V^{\nu,\rho}$ is given by
        \begin{align*}
            V({\sf Start}) &= \rho \left(\gamma \left( \tZ' (a+\gamma \tX'\cdot\gamma^{-1}c)  + (1-\tZ') (b+\gamma \tY'\cdot\gamma^{-1}d) \right)\right) \\
            &= \rho \left(\gamma \left( \tZ \tX  + (1-\tZ) \tY \right)\right) \\
            V({\sf Left}) &= \rho \left(a + \gamma \tX'\gamma^{-1}c + \gamma (1-\tX') \cdot 0\right) = \rho \left(a + \tX' c\right) = \rho(\tX) \\
            V({\sf Right}) &= \rho \left(b + \gamma \tY'\gamma^{-1}d + \gamma (1-\tY') \cdot 0\right) = \rho \left(b + \tY' d\right)= \rho(\tY) \\
            V({\sf WinL}) &= \rho \left(\gamma^{-1} c\right) \\
            V({\sf LoseL}) &= \rho(0) \\
            V({\sf WinR}) &= \rho \left(\gamma^{-1} d\right) \\
            V({\sf LoseR}) &= \rho(0)\eqsp,
        \end{align*}
        where the last equalities in the first three lines come from $\cL\parenthese{(c\tX'+a,d\tY'+b,\tZ')}=\cL\parenthese{(\tX,\tY,\tZ)}$ and $\rho$ is law-invariant.
        
        On the other hand, the operator $T=T^{\pi,\nu,\rho}=T^{\nu,\rho}$ is defined by:
    \begin{align*}
        \forall v \in \R^7, \quad 
        T\left(\begin{pmatrix} 
            v({\sf Start})\\
            v({\sf Left})\\
            v({\sf Right})\\
            v({\sf WinL})\\
            v({\sf LoseL})\\
            v({\sf WinR})\\
            v({\sf LoseR})
        \end{pmatrix}\right) 
        = \begin{pmatrix}
            \rho \left( \gamma (\tZ' v({\sf Left}) + (1-\tZ') v({\sf Right}))\right)\\
            \rho \left( a+ \gamma (\tX' v({\sf WinL}) + (1-\tX') v({\sf LoseL}) )\right)\\
            \rho \left( b+ \gamma (\tY' v({\sf WinR}) + (1-\tY') v({\sf LoseR}) )\right)\\
            \rho \left( \gamma^{-1} c + \gamma v({\sf LoseL})\right)\\
            \rho \left(\gamma v({\sf LoseL})\right)\\
            \rho \left( \gamma^{-1} d + \gamma v({\sf LoseR})\right)\\
            \rho \left(\gamma v({\sf LoseR})\right)
        \end{pmatrix}\eqsp.
    \end{align*}

        Then, by \cref{cond: PE} or \cref{cond: PO}, because $\sfM'\in\cM_c$ (and $\pi\in\Pi_S$), we have $V=TV$:
        \[
        \left\{
        \begin{aligned}
            &\rho \left(\gamma \left( \tZ \tX  + (1-\tZ) \tY \right)\right) = \rho \left( \gamma \left(\tZ' \rho (\tX) + (1-\tZ') \rho (\tY)\right)\right) \\
            &\rho \left(\tX\right) = \rho \left( a+ \gamma \left(\tX' \rho \left(\gamma^{-1} c\right) + (1-\tX') \rho(0) \right)\right) \\
            &\rho \left(\tY\right) = \rho \left( b+ \gamma \left(\tY' \rho \left(\gamma^{-1} d\right) + (1-\tY') \rho(0) \right)\right) \\
            &\rho \left(\gamma^{-1} c\right) = \rho \left( \gamma^{-1} c + \gamma \rho(0)\right) \\
            &\rho(0) = \rho \left(\gamma \rho(0)\right) \\
            &\rho \left(\gamma^{-1} d\right) = \rho \left( \gamma^{-1} d + \gamma \rho(0)\right) \\
            &\rho(0) = \rho \left(\gamma \rho(0)\right)
        \end{aligned}
        \right.
        \]
        The first equation gives the first part of the desired result after using $\cL(\tZ')=\cL(\tZ)$ and law-invariance.
        
        We also note that the fourth equation $\rho \left(\gamma^{-1} c\right) = \rho \left( \gamma^{-1} c + \gamma \rho(0)\right)$ holds for any $c\in\R$. 
        Consequently, we also have $\forall C \in \R, \rho(C)=\rho(C+\gamma \rho(0))$.
        
        \item \textbf{Resampled kernels:}
        Denoting by $\tP$ the resampled transition kernel sequence of the ambiguity-averse MDP $\sfM'\in\cM_c$, we define, for any $t\in\N$, $\tZ'_t=\tP_{t}({\sf Start},1,{\sf Left}), \, \tX'_t=\tP_{t}({\sf Left},1,{\sf WinL})$ and $\tY'_t=\tP_{t}({\sf Right},1,{\sf WinR})$. By construction, for any $t\in\N$, $\tX'_t,\tY'_t,\tZ'_t$ are three independent random variables in $L^{\infty}_c(\Omega_{\sfM'}, \cF_{\sfM'}, \bP_{\sfM'};[0,1])$ such that $\cL(\tX'_t)=\cL(\frac{\tX-a}{c})$, $\cL(\tY'_t)=\cL(\frac{\tY-b}{d})$ and $\cL(\tZ'_t)=\cL(\tZ)$. In particular, by independence, for any $t_1,t_2,t_3\in\N$ we also have $\cL\parenthese{(c\tX'_{t_1}+a,d\tY'_{t_2}+b,\tZ'_{t_3})}=\cL\parenthese{(\tX,\tY,\tZ)}$.

        Therefore, at time $t\in\N$, the transition kernel of $\sfM'$ is such that ${\sf Start}$ goes to ${\sf Left}$ with probability $\tZ'_t$ and to ${\sf Right}$ with probability $1-\tZ'_t$. From ${\sf Left}$, the transition to ${\sf WinL}$ happens with probability $\tX'_t $ and to ${\sf LoseL}$ with probability $1-\tX'_t$. From ${\sf Right}$, the transition to ${\sf WinR}$ happens with probability $\tY'_t $ and to ${\sf LoseR}$ with probability $1-\tY'_t$. From ${\sf WinL}$/${\sf WinR}$, the transition goes deterministically to ${\sf LoseL}$/${\sf LoseR}$, and from ${\sf LoseL}$/${\sf LoseR}$, the state self-loops.

        Now, $V=V^{\pi, \nu,\rho}=V^{\nu,\rho}$ from starting at time $t=0$ in a state $s$ is given by
        \begin{align*}
            V({\sf Start}) &= \rho \left(\gamma \left( \tZ'_0 (a+\gamma \tX'_1\cdot\gamma^{-1}c)  + (1-\tZ'_0) (b+\gamma \tY'_1\cdot\gamma^{-1}d) \right)\right) \\
            &= \rho \left(\gamma \left( \tZ \tX  + (1-\tZ) \tY \right)\right) \\
            V({\sf Left}) &= \rho \left(a + \gamma \tX'_0\gamma^{-1}c + \gamma (1-\tX'_0) \cdot 0\right) = \rho \left(a + \tX'_0 c\right) = \rho(\tX) \\
            V({\sf Right}) &= \rho \left(b + \gamma \tY'\gamma^{-1}d + \gamma (1-\tY'_0) \cdot 0\right) = \rho \left(b + \tY'_0 d\right)= \rho(\tY) \\
            V({\sf WinL}) &= \rho \left(\gamma^{-1} c\right) \\
            V({\sf LoseL}) &= \rho(0) \\
            V({\sf WinR}) &= \rho \left(\gamma^{-1} d\right) \\
            V({\sf LoseR}) &= \rho(0)
        \end{align*}
        where the last equalities in the first three lines come from the law invariance of $\rho$ and from the fact that $\cL\parenthese{(a+ c\tX'_1, b+ d\tY'_1, \tZ'_0)} = \cL\parenthese{(\tX, \tY, \tZ)}$ and $\cL\parenthese{(a+c\tX'_0, b+d\tY'_0)} = \cL\parenthese{(\tX, \tY)}$.
        
        To explicitly describe the random return in the first line, we have that starting from position ${\sf Start}$ at time $t=0$, the transition to ${\sf Left}$ happens with probability $\tZ'_0$ and to ${\sf Right}$ with probability $1-\tZ'_0$. If the next state is ${\sf Left}$, then at time $t=1$, the transition to ${\sf WinL}$ happens with probability $\tX'_1$ and to ${\sf LoseL}$ with probability $1-\tX'_1$, yielding a total return of $a + \gamma \tX'_1\cdot\gamma^{-1}c$. If the next state is ${\sf Right}$, then at time $t=1$, the transition to ${\sf WinR}$ happens with probability $\tY'_1$ and to ${\sf LoseR}$ with probability $1-\tY'_1$, yielding a total return of $b + \gamma \tY'_1\cdot\gamma^{-1}d$.

        On the other hand, the operator $T=T^{\pi,\nu,\rho}=T^{\nu,\rho}$ is defined by:
    \begin{align*}
        \forall v \in \R^7, \quad 
        T\left(\begin{pmatrix} 
            v({\sf Start})\\
            v({\sf Left})\\
            v({\sf Right})\\
            v({\sf WinL})\\
            v({\sf LoseL})\\
            v({\sf WinR})\\
            v({\sf LoseR})
        \end{pmatrix}\right) 
        = \begin{pmatrix}
            \rho \left( \gamma (\tZ'_0 v({\sf Left}) + (1-\tZ'_0) v({\sf Right}))\right)\\
            \rho \left( a+ \gamma (\tX'_0 v({\sf WinL}) + (1-\tX'_0) v({\sf LoseL}) )\right)\\
            \rho \left( b+ \gamma (\tY'_0 v({\sf WinR}) + (1-\tY'_0) v({\sf LoseR}) )\right)\\
            \rho \left( \gamma^{-1} c + \gamma v({\sf LoseL})\right)\\
            \rho \left(\gamma v({\sf LoseL})\right)\\
            \rho \left( \gamma^{-1} d + \gamma v({\sf LoseR})\right)\\
            \rho \left(\gamma v({\sf LoseR})\right)
        \end{pmatrix}\eqsp.
    \end{align*}

        Then, by \cref{cond: PE} or \cref{cond: PO}, because $\sfM'\in\cM_c$ (and $\pi\in\Pi_S$), we have $V=TV$:
        \[
        \left\{
        \begin{aligned}
            &\rho \left(\gamma \left( \tZ \tX  + (1-\tZ) \tY \right)\right) = \rho \left( \gamma (\tZ'_0 \rho(\tX) + (1-\tZ'_0) \rho(\tY)\right) \\
            &\rho(\tX) = \rho \left( a+ \gamma (\tX'_0 \rho \left(\gamma^{-1} c\right) + (1-\tX'_0) \rho(0) )\right) \\
            &\rho(\tY) = \rho \left( b+ \gamma (\tY'_0 \rho \left(\gamma^{-1} d\right) + (1-\tY'_0) \rho(0) )\right) \\
            &\rho \left(\gamma^{-1} c\right) = \rho \left( \gamma^{-1} c + \gamma \rho(0)\right) \\
            &\rho(0) = \rho \left(\gamma \rho(0)\right), \\
            &\rho \left(\gamma^{-1} d\right) = \rho \left( \gamma^{-1} d + \gamma \rho(0)\right) \\
            &\rho(0) = \rho \left(\gamma \rho(0)\right)
        \end{aligned}
        \right.
        \]
        The first equation gives
        \begin{align*}
            \rho \left(\gamma \left( \tZ \tX  + (1-\tZ) \tY \right)\right) &= \rho \left( \gamma (\tZ'_0 \rho(\tX) + (1-\tZ'_0) \rho(\tY))\right)\\
            &= \rho \left( \gamma (\tZ \rho(\tX) + (1-\tZ) \rho(\tY))\right)\eqsp, 
        \end{align*}
        after using the law invariance of $\rho$ and from the fact that $\cL(\tZ'_0) =\cL(\tZ)$.
        
        As in the case of static kernels, the fourth equation implies that $\forall C \in \R, \rho(C)=\rho(C+\gamma \rho(0))$.
    \end{itemize}
    \end{proof}

We will use the next technical lemma.
\begin{lemma}
    \label{lem: identity on R}
    Let $f:\R\to\R$ a non-constant function such that
    \begin{align}
        & \forall (x,y,z) \in \R^2 \times [0,1], \forall \gamma \in [0,1), \quad f\Big( \gamma\big(z  x + (1-z) y\big)\Big) = f \Big( \gamma\big(z f(x) + (1-z) f(y)\big)\Big) \label{eq:idR-1} \\
        & \forall C \in \R,\forall \gamma \in [0,1) \quad f(C)=f(C+\gamma f(0)) \label{eq:idR-2}
    \end{align}
    Then $f$ is the identity: $f(x) = x, \forall \; x \in \R$.
\end{lemma}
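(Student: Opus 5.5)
The plan is to prove that $f$ is injective, and then read off $f(x)=x$ from the special case $z=1$ of \eqref{eq:idR-1}. I do not expect to need \eqref{eq:idR-2}, except possibly as a sanity check that $f(0)=0$.

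\emph{Step 1 (the key identity).} Take $z=1$ in \eqref{eq:idR-1}. This gives $f(\gamma x)=f(\gamma f(x))$ for all $x\in\R$ and $\gamma\in[0,1)$. If $f$ is injective, then $\gamma x=\gamma f(x)$, and choosing any $\gamma\in(0,1)$ yields $f(x)=x$. So everything reduces to proving injectivity.

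\emph{Step 2 (non-injectivity forces constancy).} Suppose $f(x)=f(x')$ with $x\neq x'$, say $\delta:=x'-x>0$. Fix $\gamma\in(0,1)$ and set $\varphi_\gamma(w)=f(\gamma w)$. Apply \eqref{eq:idR-1} twice, once to the triple $(x,y,z)$ and once to $(x',y,z)$. The right-hand sides coincide because $f(x)=f(x')$, so
\[
\varphi_\gamma\big(zx+(1-z)y\big)=\varphi_\gamma\big(zx'+(1-z)y\big).
\]
Now let $w\in\R$ and $z\in[0,1)$ be arbitrary, and choose $y=(w-zx)/(1-z)$. This gives $\varphi_\gamma(w)=\varphi_\gamma(w+z\delta)$ for all $w$ and all $z\in[0,1)$. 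Hence every $t\in[0,\delta)$ is a period of $\varphi_\gamma$. Since sums of such periods cover $[0,\infty)$, and a period $t$ also gives $\varphi_\gamma(w-t)=\varphi_\gamma(w)$, the function $\varphi_\gamma$ is constant on $\R$. Because $w\mapsto\gamma w$ is a bijection of $\R$ for $\gamma>0$, $f$ itself is constant, which contradicts the hypothesis. Therefore $f$ is injective, and Step 1 concludes.

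\emph{Expected obstacle.} The only delicate point is the translation trick in Step 2. It relies on $z$ ranging over $[0,1)$ and $y$ over all of $\R$, so that the free variable $y$ can absorb any shift. This step is what uses the full strength of \eqref{eq:idR-1}. Note that $z=1$ must be excluded there, since $y$ is defined by dividing by $1-z$; this is harmless because $[0,1)$ already contains positive values.

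If this argument has a gap, the fallback is to use \eqref{eq:idR-2} first. If $f(0)\neq0$, then $f(C)=f(C+s)$ for all $s$ between $0$ and $f(0)$ (excluding $f(0)$ itself), so $f$ would be constant, a contradiction. Thus $f(0)=0$, and I would then redo Step 2 anchored at $0$.
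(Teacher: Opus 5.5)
Your proof is correct, and while it has the same two-stage skeleton as the paper's (first injectivity, then read off the identity), the execution differs in a way worth noting.

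\textbf{The paper's route.}
\begin{enumerate}
\item It first uses \eqref{eq:idR-2} to obtain $f(0)=0$.
\item It proves an auxiliary fact: $f(a)=f(b)$ with $a<b$ forces $f$ to be constant on $(a,b)$. The argument writes points of the interval as $\gamma$-scaled convex combinations of $a$ and $b$, so that the right-hand side of \eqref{eq:idR-1} collapses to $f(\gamma f(a))$.
\item It applies \eqref{eq:idR-1} with $\gamma=z=1/2$ and a free anchor $u$. This transports $f(a)=f(b)$ to every pair of points at the single distance $(b-a)/4$, and combining with the auxiliary fact gives global constancy.
\item It reads off $f(x/4)=f(f(x)/4)$ from $\gamma=z=1/2$, $y=0$. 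This is exactly where $f(0)=0$, and hence \eqref{eq:idR-2}, is needed.
\end{enumerate}

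\textbf{Your route.} You let both $y$ and $z$ vary: $y$ absorbs the base point and $z$ sweeps the shift. This yields at once that every $t\in[0,\gamma\delta)$ is a period of $f$, so the auxiliary interval lemma is unnecessary. In the final step you take $z=1$, which kills the $f(y)$ term, so you never need $f(0)=0$.

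\textbf{What this buys.} Your argument is shorter and shows that \eqref{eq:idR-2} is redundant. Identity \eqref{eq:idR-1} together with non-constancy already forces $f(x)=x$ for all $x$, after which \eqref{eq:idR-2} holds trivially. In the paper's application this costs nothing, because the second identity is produced by the same MDP instance in Lemma~\ref{lem: main MDP}. Still, your version makes clear that only the first identity of that lemma is actually used at this step.
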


\begin{proof}
    Let $f$ be a non-constant function that satisfies Equations~\eqref{eq:idR-1}-\eqref{eq:idR-2}.
    We first have that $f(0)=0$. Indeed, if $f(0) \neq 0$, then for any $C \in \R$, Equation~\eqref{eq:idR-2} directly implies that $f$ is constant on any closed interval of length $\lvert f(0)/2 \rvert > 0$ and then, splitting the real line with points $(n\frac{f(0)}{2})_{n\in\mathbb{Z}}$, we get that $f$ is constant on $\R$, which contradicts the assumption.
    
    We show the following auxiliary result: $\forall a<b, f(a) = f(b)\implies f \text{ is constant on } (a,b)$.
    Let $a<b$ such that $f(a) = f(b)$, and let $x,y \in (a,b)$. There exists $\gamma \in [0,1)$ such that $x,y \in [\gamma a,\gamma b]$.
    By \eqref{eq:idR-1},
    \begin{align*}
        f(x) = f\Big( \gamma\big(z  b + (1-z) a\big)\Big) &= f \Big( \gamma\big(z f(b) + (1-z) f(a)\big)\Big)\\
        &= f\Big( \gamma f(a)\Big)\\
        &= f\Big( \gamma\big(z'  f(b) + (1-z') f(a)\big)\Big)\\
        &= f\Big( \gamma\big(z'  b + (1-z') a\big)\Big) = f(y)\eqsp,
    \end{align*}
    where $z = \frac{x/\gamma - a}{b-a} \in [0,1]$ and $z' = \frac{y/\gamma - a}{b-a} \in [0,1]$.
    So $f$ is constant on $(a,b)$.

    We now prove by contradiction that $f$ is injective.
    Let $a<b$ such that $f(a) = f(b)$. We show that $f$ is constant on $\R$.
    Let $x<y \in \R$ such that $y-x = (b-a)/4$. We define $u=4x - a = 4y-b\in \R$.
    Then, by \eqref{eq:idR-1} with $\gamma =z= 1/2$,
    \begin{align*}
        f(x) = f\parenthese{ \frac{1}{4} a + \frac{1}{4} u} &= f\parenthese{ \frac{1}{4} f(a) + \frac{1}{4} f(u)} \\
        &= f\parenthese{ \frac{1}{4} f(b) + \frac{1}{4} f(u)}\\
        &= f\parenthese{ \frac{1}{4} b + \frac{1}{4} u} = f(y)\eqsp.
    \end{align*}
    Therefore, using our auxiliary result, $f$ is constant on any open interval of length $\abs{b-a}/4>0$, so $f$ is constant on $\R$, which contradicts the assumption.

    Finally, by \eqref{eq:idR-1} with $\gamma = z= 1/2$ and $y=0$, we have
    \[ \forall x\in \R, f(x/4) = f\parenthese{\frac{1}{4}x + \frac{1}{4}0}=f\parenthese{f(x)/4 + f(0)/4} = f(f(x)/4)\eqsp,\]
    so by injectivity of $f$, we have $\forall x \in \R, x/4 = f(x)/4$ and $f$ is the identity.
\end{proof}

We now prove a slightly more general result than Proposition \ref{prop: additive and homogeneous}:

\begin{proposition} \label{prop: additive homogeneous monotone appendix}
    Let $\rho$ be a risk measure that is law-invariant and non-constant on $L^{\infty}_{c}(\R)$.
    If $\rho$ satisfies \cref{cond: PE} or \cref{cond: PO} in either the case of static kernels or resampled kernels, then the following statements hold:
    \begin{enumerate}
        \item $\rho$ coincides with the identity on constant random variables:          
        $\quad \forall x \in \R, \quad \rho(x) = x$
        \item $\rho$ is positive homogeneous on $L^{\infty}_{c}(\R)$: $\quad \forall \tX \in L^{\infty}_{c}(\R), \forall \alpha \geq 0, \quad \rho(\alpha \tX ) = \alpha \rho(\tX)$
        \item $\rho$ is additive independent on $L^{\infty}_{c}(\R)$:
        \[\text{for any } \sfM\in\cM \text{ and independent } \tX,\tY \in L^{\infty}_c(\Omega_{\sfM}, \cF_{\sfM}, \bP_{\sfM};\R), \quad \rho( \tX + \tY) = \rho(\tX) + \rho(\tY)\eqsp.\]
        \item If $\rho$ is also monotone on $L^{\infty}_{c}(\R)$, then, 
        \[\text{for any } \sfM\in\cM \text{ and independent } \tX,\tY \in L^{\infty}_c(\Omega_{\sfM}, \cF_{\sfM}, \bP_{\sfM};\R_+), \quad \rho( \tX  \tY) = \rho(\tX)\rho(\tY)\eqsp.\]
    \end{enumerate}
\end{proposition}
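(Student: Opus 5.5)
The plan is to derive everything from Lemma~\ref{lem: main MDP}, with Lemma~\ref{lem: identity on R} handling the restriction of $\rho$ to constants.

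\emph{Step 1 (item 1).} Define $f(x)=\rho(x)$ on constant random variables; constants have convex (singleton) support, so $f$ is well defined. Taking $\tX=x$, $\tY=y$, $\tZ=z$ constant in Lemma~\ref{lem: main MDP} gives \eqref{eq:idR-1}, and the second statement of Lemma~\ref{lem: main MDP} gives \eqref{eq:idR-2}.

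Lemma~\ref{lem: identity on R} also requires $f$ to be non-constant. The hypothesis only says that $\rho$ is non-constant on $L^\infty_c(\R)$. Suppose $f\equiv k$. Take $\tZ=1$ and $\tY=0$ in item 1 of Lemma~\ref{lem: main MDP}. This gives $\rho(\gamma\tX)=\rho(\gamma\rho(\tX))=k$ for all $\gamma\in(0,1)$ and all $\tX\in L^\infty_c(\R)$. Since any $\tX$ can be written as $\gamma(\tX/\gamma)$, $\rho$ would then be constant, a contradiction. Hence $\rho(x)=x$.

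\emph{Step 2 (item 2).} Again take $\tZ=1$ in Lemma~\ref{lem: main MDP}. Using item 1, this gives $\rho(\gamma\tX)=\rho(\gamma\rho(\tX))=\gamma\rho(\tX)$ for all $\gamma\in[0,1)$. For $\alpha\geq1$, write $\tX=\gamma(\alpha\tX)$ with $\gamma=1/\alpha$ (when $\alpha>1$). Then $\rho(\tX)=\alpha^{-1}\rho(\alpha\tX)$. This covers all $\alpha\ge0$, with $\alpha=1$ trivial.

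\emph{Step 3 (item 3).} Take $\tZ=1/2$ constant and $\tX,\tY$ independent. Lemma~\ref{lem: main MDP} gives
\[\rho\bigl(\tfrac{\gamma}{2}(\tX+\tY)\bigr)=\rho\bigl(\tfrac{\gamma}{2}(\rho(\tX)+\rho(\tY))\bigr)=\tfrac{\gamma}{2}(\rho(\tX)+\rho(\tY)).\]
Positive homogeneity then yields additivity. One point to check is that $\tX+\tY$ has convex support, so that $\rho$ is applied within $L^\infty_c$. This holds because the support of a sum of independent variables is the closure of the Minkowski sum of the supports, and the sum of two intervals is an interval.

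\emph{Step 4 (item 4, multiplicativity).} This is the main obstacle, and monotonicity must enter here. Take $\tY=0$ and $\tZ$ random in $[0,1]$, independent of $\tX\geq0$. Lemma~\ref{lem: main MDP} gives $\rho(\tZ\tX)=\rho(\tZ\rho(\tX))=\rho(\tX)\rho(\tZ)$, using homogeneity and the fact that $\rho(\tX)\ge\rho(0)=0$. Homogeneity also requires $\rho(\tX)\ge0$, which follows from monotonicity.

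By rescaling, this extends to any nonnegative bounded $\tZ$: replace $\tZ$ by $\tZ/M$ with $M$ an upper bound, then use homogeneity. The expected difficulty is the convexity bookkeeping, namely confirming that $\tZ\tX$ has convex support (the image of a product of intervals under multiplication is an interval) and handling degenerate supports.

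I would also double check whether monotonicity is genuinely needed for anything beyond the sign of $\rho(\tX)$. If $\rho(\tX)<0$ for some $\tX\ge0$ were possible, then $\rho(\tZ\rho(\tX))=\rho(\tX)\rho(-\tZ)$, which need not equal $-\rho(\tZ)\rho(\tX)$. So monotonicity is used exactly to get $\rho(\tX)\geq0$.
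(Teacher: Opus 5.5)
Your proposal is correct and follows the paper's proof essentially step for step. It uses the same specializations of Lemma~\ref{lem: main MDP}: $\tZ=1,\tY=0$ for non-constancy on constants and for homogeneity, $\tZ=1/2$ for additivity, and $\tY=0$ with random $\tZ$ for multiplicativity, together with Lemma~\ref{lem: identity on R} for item 1 and monotonicity only to get $\rho(\tX)\geq 0$. The differences are cosmetic: you argue non-constancy by contradiction rather than via the paper's direct witness $\rho(2\tW)/2$, you spell out the convex-support checks the paper leaves implicit, and you drop the factor $\gamma$ in Step 4 (harmless by homogeneity); the sign slip in your closing side remark (it should read $\rho(\tZ\rho(\tX))=-\rho(\tX)\rho(-\tZ)$) does not affect the proof.
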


\begin{proof}
    Let $\rho$ be a risk measure that is law-invariant and non-constant on $L^{\infty}_{c}(\R)$, and that satisfies \cref{cond: PE} or \cref{cond: PO} in either the case of static kernels or resampled kernels.\\

    By the conclusion of Lemma~\ref{lem: main MDP} we have, for any $\sfM\in\cM$ and independent random variables $(\tX,\tY,\tZ) \in L^{\infty}_c(\Omega_{\sfM}, \cF_{\sfM}, \bP_{\sfM};\R)^2 \times L^{\infty}_c(\Omega_{\sfM}, \cF_{\sfM}, \bP_{\sfM};[0,1])$,
    \begin{equation}
        \label{eq: distributivity}
    \forall \gamma \in [0,1), \quad  
    \rho\Big( \gamma\big(\tZ  \tX + (1-\tZ) \tY\big)\Big) = \rho \Big( \gamma\big(\tZ \rho(\tX) + (1-\tZ) \rho(\tY)\big)\Big)\eqsp.
    \end{equation}
    and $\forall C \in \R, \forall \gamma \in [0,1), \rho(C) = \rho(C+\gamma \rho(0))$ .
    
    We first show that $\rho_{\mid \R}:x\in\R\mapsto \rho(x)$, the restriction of $\rho$ to $\R$, is non-constant. In fact, because $\rho$ is non-constant on $L^{\infty}_{c}(\R)$, there exists $\tW \in L^{\infty}_{c}(\R)$ such that $\rho(\tW) \neq \rho(0)$. Then, applying \eqref{eq: distributivity} with $\tY=0$, $\tZ=1$, $\gamma = 1/2$ and $\tX = 2 \tW$, we have $\rho(1/2 \cdot \rho(2\tW)) = \rho(1/2 \cdot 2\tW) = \rho(\tW) \neq \rho(0)$ so $\rho(2\tW)/2$ is a real number whose image by $\rho$ is different from $\rho(0)$ and $\rho_{\mid \R}$ is non-constant.
    
    By Lemma~\ref{lem: identity on R} applied to $\rho_{\mid \R}$, we now know that $\forall x \in \R,\, \rho(x)=x$.
    
    Then, Equation~\eqref{eq: distributivity} with $\tZ=\lambda$ for $\lambda \in [0,1]$ now becomes, for any $\sfM\in\cM$ and independent random variables $\tX,\tY \in L^{\infty}_c(\Omega_{\sfM}, \cF_{\sfM}, \bP_{\sfM};\R)$,
    \begin{equation}
        \label{eq: distributivity identity}
        \forall \lambda \in [0,1], \forall \gamma \in [0,1), \quad  
        \rho\Big( \gamma\big(  \lambda \tX + (1-\lambda) \tY\big)\Big) = \gamma\big(\lambda \rho(\tX) + (1-\lambda) \rho(\tY)\big)\eqsp.
    \end{equation}

    Taking $\tY=0$ and $\lambda = 1$ in Equation~\eqref{eq: distributivity identity} we obtain that,
    \[
        \forall \tX \in L^{\infty}_{c}(\R), \forall \gamma \in [0,1), \quad \rho(\gamma \tX) = \gamma \rho(\tX)\eqsp,
    \]
    which is still true for $\gamma = 1$. Now take $\alpha \geq 1$, we have $0 \leq \frac{1}{\alpha} \leq 1$ so that $\frac{1}{\alpha} \rho(\alpha \tX) = \rho(\frac{1}{\alpha} \alpha \tX )  = \rho(\tX)$, i.e. we have shown that $\rho$ is positively homogeneous on $L^{\infty}_{c}(\R)$:
    \[
        \forall \tX \in L^{\infty}_{c}(\R), \forall \alpha \geq 0, \quad \rho(\alpha \tX) = \alpha \rho(\tX)\eqsp.
    \]

    Then, taking $\gamma= \lambda = 1/2$ in Equation~\eqref{eq: distributivity identity} and using positive homogeneity, we obtain that $\rho$ is additive for independent random variables: 
    \[\text{for any } \sfM\in\cM \text{ and independent } (\tX,\tY) \in L^{\infty}_c(\Omega_{\sfM}, \cF_{\sfM}, \bP_{\sfM};\R), \quad \rho( \tX + \tY) = \rho(\tX) + \rho(\tY)\eqsp.\]
    Finally, if $\rho$ is also monotone, using Equation~\eqref{eq: distributivity} with $\tY=0, \lambda = 1$ and $\gamma =1/2$, we have that for any $\sfM\in\cM$ and independent random variables $(\tX,\tZ) \in L^{\infty}_c(\Omega_{\sfM}, \cF_{\sfM}, \bP_{\sfM};\R_+) \times L^{\infty}_c(\Omega_{\sfM}, \cF_{\sfM}, \bP_{\sfM};[0,1])$:
    \[
        \rho\Big( \frac{1}{2} \tZ \tX\Big) = \rho \Big( \frac{1}{2} \tZ \rho(\tX)\Big) = \rho(\tX) \rho \Big(\frac{1}{2} \tZ\Big)\eqsp,
    \]
    where we have used positive homogeneity and the fact that $\rho(\tX)\geq \rho(0)=0$ by monotonicity.
    Then, by positive homogeneity again, we obtain
    \[
        \rho( \tZ \tX) = \rho(\tZ)\rho(\tX)\eqsp,
    \]
    which is still true if $(\tX,\tZ) \in L^{\infty}_c(\Omega_{\sfM}, \cF_{\sfM}, \bP_{\sfM};\R_+)^2$ by positive homogeneity applied to $\tZ$.
\end{proof}
\subsection{Proof of Theorem \ref{th: DP-compatible risk measures}}\label{app:proof dp mono}
We now provide the proof for Theorem \ref{th: DP-compatible risk measures}.
The proof relies on the following representation theorem from~\citet{mu2024monotone}, which we state here with the notations and terminology of our paper.
\begin{theorem}[Theorem 1 in~\citet{mu2024monotone}]\label{th:th1 mu2024}
    If a risk-measure $\Phi \colon L^{\infty}(\R) \to \R$ coincides with the identity for constants and is law-invariant, monotone and additive independent on $L^{\infty}$, then there exists a unique Borel probability measure $\mu$ on the extended real line $\overline{\R}$ such that for every $\tX \in L^{\infty}(\R)$
\[
    \Phi(\tX) = \int_{\overline{\R}}\erm_a(\tX)\,d\mu(a).
\]
\end{theorem}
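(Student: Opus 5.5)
The statement above is Theorem~1 of \citet{mu2024monotone}, which the paper imports as a black box. If I had to prove it myself, I would follow the route of that paper. The plan is to compress each law into the curve $a \mapsto \erm_a(\tX)$ on $\overline{\R}$, show that $\Phi(\tX)$ depends on the law only through this curve and monotonically so, and then obtain the mixing measure from Riesz representation.

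\textbf{Step 1: the basic identities.} From additivity on independent variables and $\Phi(c)=c$, I get three facts:
\begin{itemize}
\item $\Phi(\tX+c)=\Phi(\tX)+c$;
\item $\Phi(\tX_1+\dots+\tX_n)=n\Phi(\tX)$ for i.i.d.\ copies $\tX_i$ of $\tX$;
\item every $\erm_a$ (with the conventions $\erm_{\pm\infty}=\essSup/\essInf$ and $\erm_0=\E$) satisfies the same additivity, because the cumulant generating function $K_{\tX}(t)=\log\E e^{t\tX}$ is additive on independent sums.
\end{itemize}
Monotonicity, combined with law-invariance on a nonatomic space, upgrades to monotonicity with respect to first-order stochastic dominance. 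This holds because dominance in law can be realised by a coupling with $\tX\geq\tY$ a.s.

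\textbf{Step 2: the comparison step, which I expect to be the main obstacle.} Suppose $\erm_a(\tX)>\erm_a(\tY)$ for every $a\in\overline{\R}$. I want to conclude $\Phi(\tX)\geq\Phi(\tY)$. I would try to prove a ``large-sample stochastic dominance'' lemma: for $n$ large there are i.i.d.\ sums $S_n^{\tX}$ and $S_n^{\tY}$ such that $S_n^{\tX}$ dominates $S_n^{\tY}$ in first order. If exact dominance fails, I would aim for dominance of $S_n^{\tX}+\epsilon n$ over $S_n^{\tY}$ for each $\epsilon>0$, possibly after adding an independent catalyst variable $\tZ$ to both sides.

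The proof of the lemma splits the line into two regimes:
\begin{itemize}
\item in the bulk and moderate-deviation region, I would control the CDFs of the sums by Cram\'er-type large deviations, using the strict inequality of the cumulant generating functions at every finite $t$;
\item at the two extremes, the endpoint conditions $\essSup\tX>\essSup\tY$ and $\essInf\tX>\essInf\tY$ handle the tails.
\end{itemize}
Uniformity in $t$ over all of $\overline{\R}$ is exactly where the extended-real endpoints come in, and this is the delicate part.

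Once the lemma holds, I apply monotonicity and additivity to get $n\Phi(\tX)+\Phi(\tZ)+\epsilon n\geq n\Phi(\tY)+\Phi(\tZ)$. Dividing by $n$ and letting $n\to\infty$, then $\epsilon\to0$, gives $\Phi(\tX)\geq\Phi(\tY)$. By translation invariance, the same comparison also covers the non-strict case.

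\textbf{Step 3: from comparison to a measure.} After Step~2, $\Phi(\tX)=F(\erm_\cdot(\tX))$ for a functional $F$ that is monotone on the set $\mathcal{K}=\{a\mapsto\erm_a(\tX)\}\subset C(\overline{\R})$. Additivity of both $\Phi$ and $\erm_a$ on independent sums makes $F$ additive on $\mathcal{K}$, which is closed under sums. Positive homogeneity along $\mathbb{Q}_+$ then follows, and monotonicity extends it to $\R_+$. I extend $F$ to a positive linear functional on the span of $\mathcal{K}$, which is a lattice-dense subspace of $C(\overline{\R})$ containing the constants. Density would come from a Stone--Weierstrass-type argument using Bernoulli and two-point laws. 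The extended functional is bounded because it is positive and normalised on constants, so the Riesz representation theorem gives a Borel measure $\mu$ with $\Phi(\tX)=\int\erm_a(\tX)\,d\mu(a)$. Finally, $\mu$ is a probability measure because $\Phi(1)=1$, and it is unique because the functions $a\mapsto\erm_a(\tX)$ separate measures on $\overline{\R}$.
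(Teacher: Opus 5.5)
The paper gives no proof of this statement; it imports Theorem~1 of \citet{mu2024monotone} as a black box. Your outline follows the architecture of that source: large-sample first-order dominance from the strict ordering of $a\mapsto\erm_a$ on $\overline{\R}$, reduction of $\Phi$ to a monotone additive functional of the curve $a\mapsto\erm_a(\tX)\in C(\overline{\R})$, positive extension, then Riesz. Your Step~2 is viable as described, and no catalyst is needed. Split the thresholds at a point between $\E\tY$ and $\E\tX$, and compare upper tails above it and lower tails below it. For instance, for $x\in(\E\tX,\essSup\tY)$, the maximiser $a^\star\in(0,\infty)$ of $ax-\log\E e^{a\tX}$ gives $I_{\tY}(x)\geq a^\star x-a^\star\erm_{a^\star}(\tY)>I_{\tX}(x)$. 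Meanwhile $\essSup\tX>\essSup\tY$ keeps $I_{\tX}$ finite and continuous up to $\essSup\tY$. Chernoff bounds on one side, Cram\'er lower bounds on the other, and a finite partition of thresholds (both tail functions are monotone) then give exact dominance for all large $n$. The uniformity you need is in the threshold, not in $a$.

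The step whose justification fails as written is the density claim in Step~3. The span of $\mathcal{K}$ is neither a subalgebra nor a sublattice of $C(\overline{\R})$, so no version of Stone--Weierstrass applies to it. Relatedly, $\mathcal{K}$ is not closed under multiplication by $q\in\mathbb{Q}_+\setminus\mathbb{N}$: if $\tX$ is Bernoulli, $\frac12K_{\tX}\notin\mathcal{K}$, since a two-point law is not a sum of two i.i.d.\ variables. So $L$ must be \emph{defined} on the $\mathbb{Q}$-span by clearing denominators, with well-definedness and positivity coming from Step~2 applied to independent sums. It is then extended using $|L(f)|\leq\|f\|_\infty$.

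Existence does not need density: Hahn--Banach dominated by $f\mapsto\max_{\overline{\R}}f$ already gives a positive extension to $C(\overline{\R})$. Uniqueness, however, is \emph{equivalent} to density. Constants lie in the span, so a nonzero annihilating signed measure has total mass $0$, and its normalised positive and negative parts would be two distinct representing probability measures.

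A working argument with Bernoulli laws goes as follows. Let $k_p(a)=\frac1a\log(1-p+pe^a)$. The map $p\mapsto k_p$ is smooth from $(0,1)$ into $C(\overline{\R})$, because all $p$-derivatives are bounded uniformly in $a$. Its derivatives are $\partial_p^nk_p=(-1)^{n+1}(n-1)!\,h_p\,u_p^{\,n-1}$, where $u_p(a)=\frac{e^a-1}{1-p+pe^a}$ and $h_p=u_p/a$. Since $u_p$ is an increasing homeomorphism of $\overline{\R}$ onto $[-\frac1{1-p},\frac1p]$, Weierstrass puts $h_p\,g$ in the closed span for every $g\in C(\overline{\R})$. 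Because $h_p>0$ on $\R$ and $h_p(\pm\infty)=0$, these products are dense in $\{f: f(\pm\infty)=0\}$: take $g=fh_p/(h_p^2+\delta)$ and let $\delta\downarrow0$. Adding $1$ and $k_p$ then gives all of $C(\overline{\R})$.
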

\begin{remark}
    Monotonicity in \citet{mu2024monotone} is defined with respect to the ﬁrst-order stochastic
    dominance but, as argued before Theorem 1 in \citet{mu2024monotone}, this is equivalent to our notion of monotonicity because of law-invariance.
\end{remark}
For the purpose of this work, we need a version of Theorem~\ref{th:th1 mu2024} for risk measures defined on bounded random variables {\em with a convex support} $L^{\infty}_{c}(\R)$ (whereas Theorem~\ref{th:th1 mu2024} only applies to risk measures on $L^{\infty}(\R)$). This is done in the next lemma, which states that an additive independent law-invariant risk measure defined on $L^{\infty}_{c}(\R)$ can be extended to a law-invariant risk measure $\rrho$ defined on $L^{\infty}(\R)$, which preserves its main properties (e.g., additive independence and monotonicity). 

The key insight is that any random variable with nonconvex support can be represented as the difference of two random variables with convex support by adding an independent uniform noise. This property allows us to define the extension $\rrho$ of any random variable with not-necessarily convex support as the difference of the images by $\rho$ of two random variables with convex support and this way preserve the algebraic properties (monotonicity, additive independence, positive homogeneity) required by the representation theorem. We recall that $(\Omega_{\sfM}, \cF_{\sfM}, \bP_{\sfM})$ is the probability space associated with a specific ambiguity-averse MDP $\sfM\in\cM$, as defined right after Definition~\ref{def:rm for admdp}.
\begin{lemma} \label{lem: rho bis}
    Let $\rho$ be a risk measure that is law-invariant on $L^{\infty}_{c}(\R)$. Assume additionally that $\rho$ is additive independent on $L^{\infty}_{c}(\R)$:
    \[\text{for any } \sfM\in\cM \text{ and independent } \tX,\tY \in L^{\infty}_c(\Omega_{\sfM}, \cF_{\sfM}, \bP_{\sfM};\R), \quad \rho( \tX + \tY) = \rho(\tX) + \rho(\tY)\eqsp.\]
    Then, there exists risk measure $\rrho$ that is law-invariant on $L^{\infty}(\R)$, that coincides with $\rho$ on $L^{\infty}_{c}(\R)$ (i.e., $\forall X \in L^{\infty}_{c}(\R),\, \rrho(X)= \rho(X)$) and such that the following statements hold:
    \begin{enumerate}
        \item $\rrho$ is additive independent on $L^{\infty}(\R)$:
        \[\text{for any } \sfM\in\cM \text{ and independent } \tX,\tY \in L^{\infty}(\Omega_{\sfM}, \cF_{\sfM}, \bP_{\sfM};\R), \quad \rrho( \tX + \tY) = \rrho(\tX) + \rrho(\tY)\eqsp.\]
        \item if $\rho$ is positive homogeneous on $L^{\infty}_{c}(\R)$, then $\rrho$ is positive homogeneous on $L^{\infty}(\R)$:
        \[\Big[\forall \tX \in L^{\infty}_{c}(\R), \forall \alpha \geq 0, \quad \rho(\alpha \tX ) = \alpha \rho(\tX)\Big] \implies \Big[\forall \tX \in L^{\infty}(\R), \forall \alpha \geq 0, \quad \rrho(\alpha \tX ) = \alpha \rrho(\tX)\Big]\eqsp.\]
        \item if $\rho$ is monotone on $L^{\infty}_{c}(\R)$, then $\rrho$ is monotone on $L^{\infty}(\R)$:
        \begin{align*}
            & \Big[\text{for any } \sfM\in\cM \text{ and } \tX,\tY \in L^{\infty}_c(\Omega_{\sfM}, \cF_{\sfM}, \bP_{\sfM};\R)  \text{ s.t. } \tX\geq\tY \text{ a.s.}, \quad \rho( \tX) \geq \rho(\tY)\Big]\\
            & \implies \Big[\text{for any } \sfM\in\cM \text{ and } \tX,\tY \in L^{\infty}(\Omega_{\sfM}, \cF_{\sfM}, \bP_{\sfM};\R)  \text{ s.t. } \tX\geq\tY \text{ a.s.}, \quad \rrho( \tX) \geq \rrho(\tY)\Big]\eqsp.
        \end{align*}

    \end{enumerate}
\end{lemma}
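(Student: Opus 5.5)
The plan is to use the hint preceding the statement: regularize an arbitrary bounded variable by adding independent uniform noise, which makes its support convex. Fix $\tX \in L^{\infty}(\R)$ with $\supp(\tX)\subseteq[m,M]$. Let $\tU$ be independent of $\tX$ and uniform on $[0,L]$ with $L \geq M-m$. Then $\supp(\tX+\tU)=\supp(\tX)+[0,L]$ is a finite union of intervals of length $L$ whose left endpoints lie in $[m,M]$, so it equals the interval $[m,M+L]$. We then define
\[
\rrho(\tX) := \rho(\tX+\tU) - \rho(\tU).
\]
Here $\tX,\tU$ must live on a common space. By law-invariance only the joint law matters, so we realize the product law on a nonatomic space (the instance of Example~\ref{ex: E, static}) and define $\rrho$ through $\varrho$-type reasoning as in Proposition~\ref{prop: existence of varrho}. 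This makes $\rrho$ law-invariant on $L^{\infty}(\R)$ by construction.

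\textbf{Well-definedness.} The value must not depend on $L$ (nor on the choice of the uniform noise). Take two admissible noises $\tU_1$ on $[0,L_1]$ and $\tU_2$ on $[0,L_2]$, with $\tX,\tU_1,\tU_2$ independent. Both $\tX+\tU_1$ and $\tX+\tU_2$ have convex support, and $\tU_1,\tU_2$ lie in $L^\infty_c$. Additive independence on $L^\infty_c$ gives
\[
\rho(\tX+\tU_1+\tU_2)=\rho(\tX+\tU_1)+\rho(\tU_2)=\rho(\tX+\tU_2)+\rho(\tU_1),
\]
which is exactly the equality of the two candidate definitions. The same identity with $\tX\in L^\infty_c$ (where $\tX+\tU$ is again convex-supported) gives $\rrho(\tX)=\rho(\tX)$, so $\rrho$ extends $\rho$.

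\textbf{Additivity.} For independent $\tX,\tY\in L^\infty$, take independent noises $\tU,\tV$ adapted to each, and $\tW$ adapted to $\tX+\tY$. All noises are independent uniforms. Apply additive independence to $\tX+\tU$ and $\tY+\tV$, both convex-supported and independent:
\[
\rho(\tX+\tY+\tU+\tV)=\rrho(\tX)+\rrho(\tY)+\rho(\tU)+\rho(\tV)=\rrho(\tX)+\rrho(\tY)+\rho(\tU+\tV).
\]
Since $\tU+\tV$ is convex-supported with a long enough support, a variant of the well-definedness argument shows that noise $\tU+\tV$ (not necessarily uniform, but convex-supported with support length at least the spread of $\tX+\tY$) also yields $\rrho(\tX+\tY)$. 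It is therefore cleaner to prove well-definedness for any independent convex-supported noise with sufficiently long support from the start. Then $\rrho(\tX+\tY)=\rho(\tX+\tY+\tU+\tV)-\rho(\tU+\tV)$, and additivity follows.

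\textbf{Homogeneity and monotonicity.} For positive homogeneity with $\alpha>0$, note that $\alpha\tU$ is admissible noise for $\alpha\tX$. Hence $\rrho(\alpha\tX)=\rho(\alpha(\tX+\tU))-\rho(\alpha\tU)=\alpha\rrho(\tX)$. For $\alpha=0$ we get $\rrho(0)=\rho(0)=0$, which follows from additivity applied to $0+0$.

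Monotonicity is the main obstacle. If $\tX\geq\tY$ a.s. on the same space, pick one noise $\tU$, independent of the pair $(\tX,\tY)$, long enough for both. Then $\tX+\tU\geq \tY+\tU$ a.s., both are convex-supported, and monotonicity of $\rho$ gives $\rrho(\tX)\geq\rrho(\tY)$. The subtlety is that $\tU$ must be independent of the pair while sitting on the same space as $\tX,\tY$. We handle this by enlarging to a product space via law-invariance: the joint law of $(\tX,\tY,\tU)$ can be realized on a nonatomic instance in $\cM$. Because $\rrho$ depends only on the marginal laws involved, this suffices.
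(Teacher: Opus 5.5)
Your proposal is correct and follows the paper's proof essentially step for step. You define $\rrho(\tX)=\rho(\tX+\tU)-\rho(\tU)$ with independent uniform noise realized on a nonatomic instance, and use additive independence to show that any independent convex-supported noise of sufficient length gives the same value (the paper's identity for an arbitrary noise $\tV$). You then read off additivity with noise $\tU+\tV$, positive homogeneity with noise $\alpha\tU$, and monotonicity with a single noise independent of the pair. One small slip: $\supp(\tX)+[0,L]$ is in general an uncountable union of intervals, not a finite one, and it equals $[\min\supp(\tX),\max\supp(\tX)+L]$ rather than $[m,M+L]$. It is still an interval, which is all you need.
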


\begin{proof}[Proof of Lemma \ref{lem: rho bis}]
    Let $\rho$ be a risk measure that is law-invariant and additive independent on $L^{\infty}_{c}(\R)$. Let $\sfM_0\in\cM$ an ambiguity-averse MDP such that $(\Omega_{\sfM_0}, \cF_{\sfM_0}, \bP_{\sfM_0})$ is non-atomic (for example the one of Example~\ref{ex: E, static}).
    
    For any $\tX \in L^{\infty}_{c}(\R)$, we define $l_X = \essSup(\tX)-\essInf(\tX)$. By non-atomicity $(\Omega_{\sfM_0}, \cF_{\sfM_0}, \bP_{\sfM_0})$, there exists $\tX',\tU_{\tX}\in L^{\infty}(\Omega_{\sfM_0}, \cF_{\sfM_0}, \bP_{\sfM_0};\R)$ such that $\cL(\tX')=\cL(\tX)$, $\tU_{\tX} \sim \text{Unif}([0,l_X])$ and $\tX',\tU_{\tX}$ are independent. We call $\tU_{\tX}$ a noise for $\tX'$.
    
    We now observe that the support of $\tX'+\tU_{\tX}$ is the Minkowski sum of $\supp(\tX')\subseteq [\essInf(\tX), \essSup(\tX)]$ and of $[0,l_X]$, so it is a closed interval of $\R$. 
    Therefore, we can now define 
    \begin{equation}\label{eq: def rhorond}
        \rrho(\tX) := \rho(\tX'+ \tU_{\tX}) - \rho( \tU_{\tX}) \eqsp.
    \end{equation}
    Note that by definition, $\rrho$ is well defined by law-invariance of $\rho$ and it is itself a law-invariant risk measure on $L^{\infty}(\R)$ (as a difference of two law-invariant risk measures) that coincides with $\rho$ on $L^{\infty}_{c}(\R)$ by additive independence of $\rho$ on $L^{\infty}_{c}(\R)$.

    We also have, for any noise $\tV\in L^{\infty}(\Omega_{\sfM_0}, \cF_{\sfM_0}, \bP_{\sfM_0};\R)$ independent of $\tX'$ and $\tU_{\tX}$ with convex support that includes $[0, l_X]$, 
    \begin{equation} \label{eq: any noise rho}
        \rrho(\tX) = \rho(\tX'+\tV) - \rho(\tV) \eqsp.
    \end{equation}
    Indeed, $\tU_{\tX}$, $\tV$, $\tX'+\tU_{\tX}$, $\tX'+\tV$ and $\tX' + \tU_{\tX} + \tV$ all have convex support, so,
    \begin{align*}
        \rho(\tX'+\tV) - \rho(\tV) &= \rho((\tX' + V) + \tU_{\tX}) - \rho(\tU_{\tX})- \rho(\tV)\\
        &=\rho((\tX' + \tU_{\tX}) + \tV)- \rho(\tV)- \rho(\tU_{\tX})\\
        &=\rho(\tX' + \tU_{\tX})- \rho(\tU_{\tX})\\
        &=\rrho(\tX)\eqsp.
    \end{align*}
    \begin{enumerate}
    \item  Let $\sfM\in\cM$ and two independent random variables $\tX,\tY \in L^{\infty}(\Omega_{\sfM}, \cF_{\sfM}, \bP_{\sfM};\R)$. There exists independent random variables $\tX', \tY',\tU_{\tX},\tU_{\tY}\in L^{\infty}(\Omega_{\sfM_0}, \cF_{\sfM_0}, \bP_{\sfM_0};\R)$ such that $\cL(\tX')=\cL(\tX)$, $\cL(\tY')=\cL(\tY)$ and $\tU_{\tX},\tU_{\tY}$ are noises for $\tX'$ and $\tY'$. We now have,
    \begin{align*}
         \rrho(\tX) + \rrho(\tY) &= \rho( \tX' + \tU_{\tX}) + \rho( \tY' + \tU_{\tY}) - \rho(\tU_{\tX}) - \rho(\tU_{\tY}) \\
        &= \rho(\tX' + \tY' + \tU_{\tX} + \tU_{\tY}) - \rho( \tU_{\tX} + \tU_{\tY}) \\
        &= \rrho( \tX +  \tY) \eqsp, 
    \end{align*}
    where the last equality comes from using \cref{eq: any noise rho} because $\tU_{\tX} +  \tU_{\tY}$ has support exactly $[0, l_X + l_Y]$ and is independent of $\tX +  \tY$.

    Therefore, $\rrho$ is additive independent on $L^{\infty}(\R)$.
        
        \item Assume that $\rho$ is positive homogeneous on $L^{\infty}_{c}(\R)$. 
        
        Then, for any $\tX \in L^{\infty}(\R)$ and $\alpha \geq 0$, there exists independent $\tX',\tU_{\tX}\in L^{\infty}(\Omega_{\sfM_0}, \cF_{\sfM_0}, \bP_{\sfM_0};\R)$ such that $\cL(\tX')=\cL(\tX)$ and $\tU_{\tX}$ is a noise for $\tX'$. We now have
    \begin{align*}
        \alpha \rrho(\tX)  &= \alpha \rho( \tX' + \tU_{\tX})  - \alpha \rho(\tU_{\tX}) \\
        &= \rho(\alpha \tX' + \alpha \tU_{\tX}) - \rho(\alpha \tU_{\tX}) \\
        &= \rrho(\alpha \tX )\eqsp, 
    \end{align*}
    where the last equality comes from using \cref{eq: any noise rho} because $\alpha \tU_{\tX} $ has support exactly $[0, \alpha l_X ]$ and is independent of $\alpha \tX'$.

    Therefore, $\rrho$ is positive homogeneous on $L^{\infty}(\R)$.

    \item Assume that $\rho$ is monotone on $L^{\infty}_{c}(\R)$.

    Let $\sfM\in\cM$ and two random variables $\tX,\tY \in L^{\infty}(\Omega_{\sfM}, \cF_{\sfM}, \bP_{\sfM};\R)$ such that $\tX\geq\tY$ almost surely. There exists random variables $\tX', \tY',\tU\in L^{\infty}(\Omega_{\sfM_0}, \cF_{\sfM_0}, \bP_{\sfM_0};\R)$ such that $\cL\parenthese{(\tX', \tY')}=\cL\parenthese{(\tX, \tY)}$, $\tU\sim\text{Unif}([0, \max(l_X, l_Y)])$ and $\tU$ is independent to both $\tX'$ and $\tY'$. In particular, because their joint law is identical, we have $\tX'\geq \tY'$ and adding $\tU$ we get $\tX'+\tU\geq \tY'+\tU$ almost surely.
    As, $\tX'+\tU$ and $\tY'+\tU$ have convex supports in $\R$ and $\rrho$ coincides with $\rho$ on $L^{\infty}_{c}(\R)$, by monotonicity of $\rho$ on $L^{\infty}_{c}(\R)$, we have
    \[ \rrho(\tX' + \tU) = \rho(\tX' + \tU) \geq \rho(\tY'+\tU) = \rrho(\tY'+\tU)\eqsp.\]
    Now, subtracting by $\rho(\tU)$ on both side and applying \cref{eq: any noise rho}, we get $\rrho(\tX) \geq \rrho(\tY)$.
    
    Therefore, $\rrho$ is monotone on $L^{\infty}(\R)$.

    \end{enumerate}

\end{proof}
By Lemma~\ref{lem: rho bis}, we know that a version of Theorem~\ref{th:th1 mu2024} holds for risk measures on $L^{\infty}_{c}(\R)$, that is, if $\rho$ is a risk measure defined on $L^{\infty}_{c}(\R)$ that is monotone, additive independent, coincides with the identity on constant random variable, and law-invariant, then there exists a unique Borel probability measure $\mu$ on $\overline \R$ such that
\[
    \forall \tX \in L^{\infty}_{c}(\R), \quad \rho(\tX) = \int_{\overline \R} \erm_a(\tX) d\mu(a)\eqsp.
\]
We are now ready to prove \Cref{th: DP-compatible risk measures}.
\begin{proof}[Proof of \Cref{th: DP-compatible risk measures}]
    Let $\rho$ be a risk measure that is law-invariant, non-constant and monotone on $L^{\infty}_{c}(\R)$ and that satisfies \cref{cond: PE} or \cref{cond: PO} in either the case of static kernels or resampled kernels.

    By the conclusion of Proposition~\ref{prop: additive and homogeneous}, $\rho$ coincides with the identity on constants, is positive homogeneous and additive independent on $L^{\infty}_{c}(\R)$. Also, by monotonicity of $\rho$ on $L^{\infty}_{c}(\R)$, using the multiplicativity statement proved as Item 4 of Proposition~\ref{prop: additive homogeneous monotone appendix}, for any $\sfM\in\cM$ and independent $\tX,\tY \in L^{\infty}_c(\Omega_{\sfM}, \cF_{\sfM}, \bP_{\sfM};\R_+)$, we have $\rho( \tX  \tY) = \rho(\tX)\rho(\tY)$ .

    Then, using Lemma~\ref{lem: rho bis}, there exists $\rrho$ a risk measure that is law-invariant on $L^{\infty}(\R)$, coincides with $\rho$ on $L^{\infty}_{c}(\R)$ and is positive homogeneous, additive independent and monotone on $L^{\infty}(\R)$.
    In particular, $\rrho$ also coincides with the identity on constants.
    
    Let now $\sfM_0\in\cM$ an ambiguity-averse MDP such that $(\Omega_{\sfM_0}, \cF_{\sfM_0}, \bP_{\sfM_0})$ is non-atomic (for example the one of Example~\ref{ex: E, static}). Because $\rrho$ is monotone, additive independent and law invariant on $L^{\infty}(\R)$ and it coincides with the identity on constants, the specific risk measure 
    \begin{align*}
      \rrho_{\sfM_0} \colon  L^{\infty}(\Omega_{\sfM_0}, \cF_{\sfM_0}, \bP_{\sfM_0};\R) &\to \R \\
      \tX & \mapsto \rrho(\tX)
    \end{align*}
    is a risk-measure on a non-atomic probability space that is law-invariant, monotone, additive independent, and that coincides with the identity on constants. Therefore, by Theorem 1 in \citet{mu2024monotone}, there exists a unique Borel probability measure $\mu$ on $\overline \R$ such that
    \[
        \forall \tX \in L^{\infty}(\Omega_{\sfM_0}, \cF_{\sfM_0}, \bP_{\sfM_0};\R), \quad \rrho_{\sfM_0}(\tX) = \int_{\overline \R} \erm_a(\tX) d\mu(a)\eqsp,
    \]
    where $\erm_a(\tX) = \frac{1}{a} \log \E(\exp(a \tX))$ if $a \in \R^\star$, $\erm_0(\tX) = \E(\tX)$, $\erm_{+\infty}(\tX) = \essSup(\tX)$ and $\erm_{-\infty}(\tX) = \essInf(\tX)$.
    
    Then, we fix $\lambda > 0$ and define the dilatation map $T_\lambda: \overline \R \to \overline \R$ such that $T_\lambda(a) = \lambda a$ if $a \in \R$ and $T_\lambda(+\infty) = +\infty$, $T_\lambda(-\infty) = -\infty$.
    Noting that for any $a \in \overline \R$ and $\tX \in L^{\infty}(\Omega_{\sfM_0}, \cF_{\sfM_0}, \bP_{\sfM_0};\R)$, we have $\erm_a(\lambda \tX) = \lambda \erm_{T_\lambda(a)}(\tX)$, we obtain by positive homogeneity of $\rrho$,
    \begin{align*}
        \forall \tX \in L^{\infty}(\Omega_{\sfM_0}, \cF_{\sfM_0}, \bP_{\sfM_0};\R), \quad \rrho_{\sfM_0}(\tX) &= \frac{1}{\lambda} \rrho_{\sfM_0}(\lambda \tX) = \frac{1}{\lambda} \int_{\overline \R} \erm_a(\lambda \tX) d\mu(a) \\
        &= \int_{\overline \R} \erm_{T_\lambda(a)}(\tX) d\mu(a) \\
        &=  \int_{\overline \R} \erm_{a}(\tX) d(T_\lambda \# \mu)(a)\eqsp,
    \end{align*}
    where $T_\lambda \# \mu$ is the pushforward measure of $\mu$ by $T_\lambda$.
    By uniqueness of the representation of $\rrho_{\sfM_0}$, we have $\mu = T_\lambda \# \mu$ for any $\lambda > 0$.

    We now assume for contradiction that $\mu((0,+\infty)) > 0$.
    Then, there exists $0 < a_1 < a_2 < +\infty$ such that $\mu((a_1,a_2)) > 0$.
    Taking $\lambda > a_2/a_1$, we have that the intervals $(\lambda^n a_1, \lambda^n a_2)$ for $n \in \N$ are pairwise disjoint and each has the same positive measure under $\mu$, contradicting the finiteness of $\mu$.

    The same argument leads to $\mu((-\infty,0)) = 0$, finally showing that $\mu$ is supported on $\{0, -\infty, +\infty\}$. 
    
    Consequently, noting $a=\mu(\{0\})$, $b=\mu(\{-\infty\})$ and $c=\mu(\{+\infty\})$, we have $a+b+c=1$ and
    \[
        \forall \tX \in L^{\infty}(\Omega_{\sfM_0}, \cF_{\sfM_0}, \bP_{\sfM_0};\R), \quad \rrho_{\sfM_0}(\tX) = a \cdot \E(\tX) + b \cdot \essInf(\tX) + c \cdot \essSup(\tX)\eqsp.
    \]
    
    We now show that $a = 1, b=1$ or $c=1$ using that for any independent random variables $\tX, \tZ\in L^{\infty}_c(\Omega_{\sfM_0}, \cF_{\sfM_0}, \bP_{\sfM_0};\R_+)$,
    \[ \rrho_{\sfM_0}(\tZ \tX) = \rho( \tZ \tX) = \rho(\tZ)\rho(\tX) = \rrho_{\sfM_0}(\tZ) \rrho_{\sfM_0}(\tX)\eqsp.
    \]

    By non-atomicity, for any $0 \leq x_1 \leq x_2$ and $0 \leq z_1 \leq z_2$, there exists independent random variables $\tX, \tZ\in L^{\infty}_c(\Omega_{\sfM_0}, \cF_{\sfM_0}, \bP_{\sfM_0};\R)$ such that $\tX\sim \text{Unif}([x_1, x_2])$ and $\tZ \sim \text{Unif}([z_1, z_2])$.
    In particular, because $\tX,\tZ$ have convex supports and take values in $\R_+$, we have $\rrho(\tX \tZ) = \rrho(\tX) \rrho(\tZ)$.

    First, noting $u = a/2 + b$ and $v = a/2 + c$, we have
    \begin{align*}
        \rrho_{\sfM_0}(\tX) &= a \frac{x_1 + x_2}{2} + b x_1 + c x_2 = u x_1 + v x_2\eqsp, \\
        \rrho_{\sfM_0}(\tZ) &= a \frac{z_1 + z_2}{2} + b z_1 + c z_2 = u z_1 + v z_2\eqsp, \\
        \rrho_{\sfM_0}(\tX \tZ) &= a \frac{x_1 z_1 + x_1 z_2 + x_2 z_1 + x_2 z_2}{4} + b x_1 z_1 + c x_2 z_2\eqsp.
    \end{align*}
    so the equality $\rrho_{\sfM_0}(\tX \tZ) = \rrho_{\sfM_0}(\tX) \rrho_{\sfM_0}(\tZ)$ gives
    \begin{align*}
        (u x_1 + v x_2) (u z_1 + v z_2) &= a \frac{x_1 z_1 + x_1 z_2 + x_2 z_1 + x_2 z_2}{4} + b x_1 z_1 + c x_2 z_2 \\
        u^2 x_1 z_1 + uv (x_1 z_2 + x_2 z_1) + v^2 x_2 z_2 &= \left( u - \frac{a}{4} \right) x_1 z_1 + \frac{a}{4} (x_1 z_2 + x_2 z_1) + \left( v - \frac{a}{4} \right) x_2 z_2\eqsp.
    \end{align*}
    Therefore, because this equality holds for any $0 \leq x_1 \leq x_2$ and $0 \leq z_1 \leq z_2$, we have the system of equations:
    \[
        \left\{
        \begin{aligned}
            &u^2 = u - \frac{a}{4} \\
            &uv = \frac{a}{4} \\
            &v^2 = v - \frac{a}{4}
        \end{aligned}\eqsp.
        \right.
    \]
    From the first and third equations, we have that $u$ and $v$ are roots of the equation $X^2 - X + a/4 = 0$.
    
    If they are equal, then $u = v = 1/2$ since $u + v = a + b + c = 1$. Then, the second equation gives $a = 1$.
    
    If $u \neq v$, we immediately have $a<1$ as $a=1$ implies $u=v$. We now prove by contradiction that $a=0$. In fact, if $a \neq 0$, we have $1-a \in (0,1)$ and noting $w = \min(u,v)$, we have that $w=\frac{1-\sqrt{1-a}}{2} < \frac{a}{2}$ which contradicts the non-negativity of $b$ or $c$.
    Finally $a = 0$ so $u=b$ and $v=c$ are the two roots of $X^2 - X = 0$ and $\{b,c\} = \{0,1\}$.

    Because $a=1,b=1$ or $c=1$, we now have 
    \[
        \left\{
            \begin{aligned}
                 \forall \tX \in L^{\infty}(\Omega_{\sfM_0}, \cF_{\sfM_0}, \bP_{\sfM_0};\R),& \quad \rrho_{\sfM_0}(\tX) = \E(\tX)\\
                &  \text{or } \\
                \forall \tX \in L^{\infty}(\Omega_{\sfM_0}, \cF_{\sfM_0}, \bP_{\sfM_0};\R),& \quad \rrho_{\sfM_0}(\tX) = \essInf(\tX) \\
                &  \text{or } \\
                \forall \tX \in L^{\infty}(\Omega_{\sfM_0}, \cF_{\sfM_0}, \bP_{\sfM_0};\R),& \quad \rrho_{\sfM_0}(\tX) = \essSup(\tX)
            \end{aligned},
            \right.
    \]
    By law-invariance of $\rrho$ on $L^{\infty}(\R)$ and non-atomicity of $(\Omega_{\sfM_0}, \cF_{\sfM_0}, \bP_{\sfM_0})$, it directly implies:
    \[
        \left\{
            \begin{aligned}
                 \forall \tX \in L^{\infty}(\R),& \quad \rrho(\tX) = \E(\tX)\\
                &  \text{or } \\
                \forall \tX \in L^{\infty}(\R),& \quad \rrho(\tX) = \essInf(\tX) \\
                &  \text{or } \\
                \forall \tX \in L^{\infty}(\R),& \quad \rrho(\tX) = \essSup(\tX)
            \end{aligned},
            \right.
    \]
    so the desired result holds because $\rho$ and $\rrho$ coincide on $L^{\infty}_{c}(\R)$:
    \[
        \left\{
            \begin{aligned}
                 \forall \tX \in L^{\infty}_{c}(\R),& \quad \rho(\tX) = \E(\tX)\\
                &  \text{or } \\
                \forall \tX \in L^{\infty}_{c}(\R),& \quad \rho(\tX) = \essInf(\tX) \\
                &  \text{or } \\
                \forall \tX \in L^{\infty}_{c}(\R),& \quad \rho(\tX) = \essSup(\tX)
            \end{aligned}\eqsp.
            \right.
    \]
\end{proof}

\subsection{$W^1$-continuity and Proof of \Cref{th: w1-c0 DP compatible risk measures}}\label{app:w1-c0 dp rm}
In this section we derive results for the case of $W^1$-continuous functionals. We first define $W^1$-continuity.
\begin{definition}\label{def:W1 C0}
    The Wasserstein-1 metric $W^1(\nu_1,\nu_2)$ between two laws $\nu_1,\nu_2$ is defined as
    \[W^1(\nu_1,\nu_2) = \int_{0}^{1} |F_{\nu_1}^{-1}(u) - F_{\nu_2}^{-1}(u)| du\eqsp.\]
    A law-invariant risk measure $\rho(\tX)$ can be written $ \rho(\tX) = \varrho(\cL(X))$ for some $\varrho$ that maps laws to scalars, see Proposition~\ref{prop: existence of varrho}. With this notation, the risk measure $\rho$ is $W^1$-continuous if for any law $\nu$ with compact supports and any sequence of laws with compact supports $(\nu_n)_{n \in \N}$ we have
    \[\lim_{n \rightarrow + \infty} W^1 (\nu_n,\nu) = 0 \Rightarrow \lim_{n \rightarrow + \infty} \varrho(\nu_n) = \varrho(\nu)\eqsp.\]
\end{definition}
Intuitively, $W^1$-continuity corresponds to a continuity in the quantiles of the law.
Table~\ref{tab:risk measures summary} summarizes the $W^1$-continuity properties of common risk measures.
\Cref{th: w1-c0 DP compatible risk measures} provides a complete characterization of $W^1$-continuous, law-invariant risk measures that satisfy \cref{cond: PE} or  \cref{cond: PO}.
We first prove that if $\rho$ is a $W^1$-continuous law-invariant risk measure that satisfies \cref{cond: PE} or \cref{cond: PO} in either the case of static kernels or resampled kernels, then $\rho = \E$.
\begin{proof}[Proof of Theorem~\ref{th: w1-c0 DP compatible risk measures}]
    Let $\rho$ a risk measure that is non-constant, $W^1$-continuous and law-invariant on $L^{\infty}_c(\R)$ and that satisfies \cref{cond: PE} or \cref{cond: PO} in either the case of static kernels or resampled kernels.

    By the conclusion of Proposition~\ref{prop: additive and homogeneous}, $\rho$ coincides with the identity on constants and is positive homogeneous and additive independent on $L^{\infty}_{c}(\R)$.

    Let $\sfM_0\in\cM$ an ambiguity-averse MDP such that $(\Omega_{\sfM_0}, \cF_{\sfM_0}, \bP_{\sfM_0})$ is non-atomic (for example the one of Example~\ref{ex: E, static}).
    Now, for any $\tX\in L^{\infty}_{c}(\R)$, consider a sequence of i.i.d. random variables $(\tX_i)_{i\in\N}\in L^{\infty}(\Omega_{\sfM_0}, \cF_{\sfM_0}, \bP_{\sfM_0};\R)^\N$ such that all the $\tX_i$ have the same law as $\tX$.
    By additive independence and positive homogeneity on $L^{\infty}_{c}(\R)$, we have
    \[\forall n \geq 1,\quad \rho\parenthese{\frac{1}{n}\sum_{i=1}^n\tX_i}=\frac{1}{n}\sum_{i\leq n}\rho(\tX_i)=\rho(\tX)\eqsp.\]
    Denoting by $\nu_n$ the law of the sample mean $\frac{1}{n}\sum_{i=1}^n \tX_i$ and by $\nu=\delta_{\E(\tX)}$ the Dirac measure concentrated in $\E(\tX)$, we now show that $W^1(\nu_n, \nu)\to_n 0$. Indeed,
    \begin{align*}
        W^1(\nu_n, \nu)&=  \int_{0}^{1} \abs{F_{\nu_n}^{-1}(u) - F_{\nu}^{-1}(u)}\, du\\
        &= \int_{0}^{1} \abs{ F_{\nu_n}^{-1}(u) - \E(\tX)} \, du\\
        &= \E\left(\abs{\frac{1}{n}\sum_{i=1}^n \tX_i - \E(\tX)} \right)\eqsp,
    \end{align*}
    where we have used in first equality that $\nu$ is a Dirac measure and in the second equality that if $\tU$ is a random variable uniformly distributed on $(0, 1)$, then $F_{\nu_n}^{-1}(\tU)$ has the same distribution as $\frac{1}{n}\sum_{i=1}^n \tX_i$.
    Then, by boundedness of $\frac{1}{n}\sum_{i=1}^n \tX_i$, the Dominated Convergence Theorem gives
    \begin{align*}
        \lim_{n \to \infty} \E\left(\abs{\frac{1}{n}\sum_{i=1}^n \tX_i - \E(\tX)} \right) &= \E\left(\lim_{n \to \infty}\abs{\frac{1}{n}\sum_{i=1}^n \tX_i - \E(\tX)} \right)
        = \mathbb{E}[0] = 0 \eqsp,
    \end{align*}
    where the second equality comes from the fact that $\abs{\frac{1}{n}\sum_{i=1}^n \tX_i - \E(\tX)}\to 0$ almost surely by the Strong Law of Large Numbers.
    
    Therefore, using $W^1$-continuity, 
    \begin{align*}
        \rho(\tX)&=\lim_{n \rightarrow + \infty}\rho\parenthese{\frac{1}{n }\sum_{i=1}^n \tX_i}\\
&=\rho\parenthese{\lim_{n\rightarrow + \infty}\frac{1}{n}\sum_{i=1}^n\tX_i}\\
        &=\rho(\E(\tX))=\E(\tX)
    \end{align*}
    where the last equality uses that $\rho(c)=c$ for $c\in \R$, as shown in Proposition~\ref{prop: additive and homogeneous}.
\end{proof}
We can now conclude by noting that it is known that $\rho = \E$ satisfies both \cref{cond: PE,cond: PO} for resampled kernels (this corresponds to distributionally robust MDPs with a set of distribution equal to the singleton $\{\nu\}$~\cite{xu2012distributionally})  but not for static kernels (as shown in Example~\ref{ex: E, static main body}-\ref{ex: E, static main body - continued}). 
\section{Extensions of \cref{cond: PE,cond: PO} to the case of two risk measures}\label{app:two rm}
\begin{proposition}
    Let $\rho_1,\rho_2$ be two risk measures for ambiguity-averse MDPs that are non-constant and law-invariant on $L^{\infty}_c(\R)$ and such that one of the two conditions is satisfied in either the case of static kernels or resampled kernels:
    \begin{enumerate}
        \item $V^{\pi,\nu,\rho_1}$ is a fixed point of $T^{\pi,\nu,\rho_2}$ for any instance $\sfM_\nu \in \cM_{c}$ and stationary policy $\pi \in \PiS$,
   i.e.,
   \[V^{\pi,\nu,\rho_1} = T^{\pi,\nu,\rho_2}V^{\pi,\nu,\rho_1}, \quad \forall \sfM_\nu \in\cM_{c},\forall \pi \in \PiS \eqsp.\]

   \item $V^{\nu,\rho_1}$ is a fixed point of $T^{\nu,\rho_2}$ for any instance $\sfM_\nu \in \cM_{c}$,
   i.e.,
   \[V^{\nu,\rho_1} = T^{\nu,\rho_2}V^{\nu,\rho_1}, \quad \forall \sfM_\nu \in\cM_{c}\eqsp.\]
    \end{enumerate}
    Then $\forall X \in L^{\infty}_{c}(\R), \rho_1(X)=\rho_2(X)$.
\end{proposition}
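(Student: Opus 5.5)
The plan is to reuse the single-policy instance of Figure~\ref{fig: MDP for lemma}, now with $\rho_1$ in the value function and $\rho_2$ in the Bellman operator. Because that MDP has a single stationary policy, both conditions of the statement reduce to the same identity $V^{\pi,\nu,\rho_1} = T^{\pi,\nu,\rho_2}V^{\pi,\nu,\rho_1}$. It therefore suffices to treat one condition, for static and resampled kernels simultaneously. We use the law-invariance of $\rho_1$ and $\rho_2$ exactly as in the proof of Lemma~\ref{lem: main MDP}.

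Carrying out the computation of Lemma~\ref{lem: main MDP} verbatim, the components ${\sf WinL}$, ${\sf LoseL}$, ${\sf Left}$ and ${\sf Start}$ give the following identities. For all $C\in\R$ and $\gamma\in[0,1)$,
\[
\rho_1(C)=\rho_2(C+\gamma\rho_1(0)), \qquad \rho_1(0)=\rho_2(\gamma\rho_1(0)).
\]
For any $\tX\in L^\infty_c(\R)$ written as $a+c\tX'$ with $\tX'\in[0,1]$,
\[
\rho_1(\tX)=\rho_2\big(a+\gamma(\tX'\rho_1(\gamma^{-1}c)+(1-\tX')\rho_1(0))\big).
\]
For independent $\tX,\tY,\tZ$ with $\tZ\in[0,1]$,
\[
\rho_1\big(\gamma(\tZ\tX+(1-\tZ)\tY)\big)=\rho_2\big(\gamma(\tZ\rho_1(\tX)+(1-\tZ)\rho_1(\tY))\big).
\]

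The key intermediate goal is to show that $\rho_1(x)=x$ on constants. Once that holds, the ${\sf Left}$ identity gives $\rho_1(\tX)=\rho_2(a+c\tX')=\rho_2(\tX)$ directly, which is the claim.

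To get $\rho_1(x)=x$, I would first set $\gamma=0$ in the constant identity, which yields $\rho_1=\rho_2$ on constants; call this common function $f$. The constant identity then becomes $f(C)=f(C+\gamma f(0))$, i.e.\ \eqref{eq:idR-2}. Next, taking deterministic $\tX=x$, $\tY=y$, $\tZ=z$ in the ${\sf Start}$ identity gives
\[
\rho_1\big(\gamma(zx+(1-z)y)\big)=\rho_2\big(\gamma(zf(x)+(1-z)f(y))\big).
\]
Both arguments are constants, so this is exactly \eqref{eq:idR-1} for $f$.

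It remains to check that $f$ is non-constant, after which Lemma~\ref{lem: identity on R} gives $f=\mathrm{id}$. This is the main obstacle, since non-constancy is assumed on $L^\infty_c(\R)$ rather than on constants. I would argue by contradiction: suppose $f\equiv k$. Since $\rho_1$ is non-constant, pick $\tW$ with $\rho_1(\tW)\neq k$. Apply the ${\sf Start}$ identity with $\tZ=1$, $\tY=0$, $\gamma=1/2$ and $\tX=2\tW$. The left side is $\rho_1(\tW)$, while the right side is $\rho_2$ evaluated at a constant, namely $f(\rho_1(2\tW)/2)=k$. This is a contradiction, so $f$ is non-constant.

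For the degenerate choices used above, such as $c=0$ when writing $\tX=a+c\tX'$ or constant $\tX$, I would simply take $c\neq0$ arbitrarily with $\tX'$ constant, so that the construction of Lemma~\ref{lem: main MDP} still applies. Note that non-constancy of $\rho_2$ is not even needed in this argument.
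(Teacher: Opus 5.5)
Your overall route is the paper's. You use the single-policy instance of Figure~\ref{fig: MDP for lemma} with $\rho_1$ in the value function and $\rho_2$ in the operator, identify $\rho_1$ and $\rho_2$ on constants, apply Lemma~\ref{lem: identity on R}, and finish with the ${\sf Left}$ equation.

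The one step that does not follow from what you cite is ``set $\gamma=0$ in the constant identity''. That identity comes from the ${\sf WinL}$ state, whose reward $\gamma^{-1}c$ is only defined for $\gamma\in(0,1)$. The remark in Lemma~\ref{lem: main MDP} that $\gamma=0$ is obvious works only because the same $\rho$ sits on both sides. With two risk measures, the $\gamma=0$ case reads $\rho_1(C)=\rho_2(C)$, which is exactly the nontrivial claim. So the verbatim computation gives $\rho_1(C)=\rho_2(C+\gamma\rho_1(0))$ only for $\gamma\in(0,1)$. With $\rho_1(0)$ unknown, you cannot yet identify $\rho_1$ and $\rho_2$ on constants.

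There are two repairs.

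The first is the paper's. Run your contradiction argument on the restriction of $\rho_2$ to constants rather than on $f$. If $\rho_2\equiv k$ on $\R$, the ${\sf Start}$ identity with $\tZ=1$, $\tY=0$, $\gamma=1/2$, $\tX=2\tW$ forces $\rho_1(\tW)=k$ for every $\tW$, so $\rho_2$ is non-constant on $\R$. Now suppose $\rho_1(0)\neq 0$. The identities for $\gamma\in(0,1)$, including the ${\sf LoseL}$ one for $C=0$, make $\rho_2$ constant on every open interval of length $|\rho_1(0)|$, hence on $\R$, a contradiction. So $\rho_1(0)=0$ and $\rho_1=\rho_2$ on constants, after which the rest of your argument goes through.

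The second is to invoke a separate instance with $\gamma=0$, which is admissible since $\cM_c$ allows any $\gamma\in[0,1)$. Be aware that this proves much more. At $\gamma=0$, both $V^{\pi,\tP}(s)$ and $T^{\pi,\tP}v(s)$ equal the expected immediate reward. Take one state moving to two absorbing states with probabilities $\tX'$ and $1-\tX'$ and rewards $a+c$ and $a$. This yields $\rho_1(a+c\tX')=\rho_2(a+c\tX')$, i.e.\ $\rho_1=\rho_2$ on $L^{\infty}_{c}(\R)$ by law-invariance. You then need neither Lemma~\ref{lem: identity on R} nor the non-constancy hypothesis. The paper's argument, by contrast, uses only instances with $\gamma\in(0,1)$, so it would survive if the class of instances were restricted to positive discount factors.
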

\begin{proof}
    We first show the result for static kernels. We heavily use the intermediary results of the proof of Theorem~\ref{th: DP-compatible risk measures} here (see Appendix~\ref{app:proof DP compatible rm}).
    Keeping the same notations and random variables as in Lemma~\ref{lem: main MDP}, the MDP of \cref{fig: MDP for lemma} and law-invariance lead to the following system for any $\gamma\in (0,1), \sfM\in\cM$, independent random variables $(\tX, \tY, \tZ) \in L^{\infty}_c(\Omega_{\sfM}, \cF_{\sfM}, \bP_{\sfM};\R)^2 \times L^{\infty}_c(\Omega_{\sfM}, \cF_{\sfM}, \bP_{\sfM};[0,1])$ and $(a,c)\in\R\times\R^\star$ such that $\frac{\tX-a}{c}\in[0,1]$ almost surely,
    \begin{align}
    \label{eq: 2rho 1}
        &\rho_1 \left(\gamma \left( \tZ \tX  + (1-\tZ) \tY \right)\right) = \rho_2 \left( \gamma \left(\tZ \rho_1 (\tX) + (1-\tZ) \rho_1 (\tY)\right)\right)\eqsp, \\
    \label{eq: 2rho 2}
        &\rho_1 \left(\tX\right) = \rho_2 \left( a+ \gamma \left(\frac{\tX-a}{c} \rho_1 \left(\gamma^{-1} c\right) + (1-\frac{\tX-a}{c}) \rho_1(0) \right)\right)\eqsp, \\
    \label{eq: 2rho 3}
        &\rho_1 \left(\gamma^{-1} c\right) = \rho_2 \left( \gamma^{-1} c + \gamma \rho_1(0)\right)\eqsp. 
    \end{align}
    As in the beginning of the proof of \Cref{prop: additive and homogeneous}, $\rho_1$ being non constant on $L^{\infty}_{c}(\R)$ imposes $\rho_2$ to be non constant on $\R$ through \eqref{eq: 2rho 1}. Consequently, \eqref{eq: 2rho 3} imposes $\rho_1(0)=0$ and that $\forall x\in\R, \rho_1(x)=\rho_2(x)$.
    Then, by Lemma~\ref{lem: identity on R}, $\forall x\in\R, \rho_1(x)=\rho_2(x)=x$ and Equation~\eqref{eq: 2rho 2} gives the desired result.

    To prove the results for the case of resampled kernels, we can proceed similarly and use the same MDP instance as in Figure~\ref{fig: MDP for lemma} and use law-invariance.
\end{proof}

\end{document}